\theoremstyle{plain}
\newtheorem{thm}{Theorem}
\newtheorem{lem}{Lemma}
\newtheorem{prop}[lem]{Proposition}
\newtheorem{cor}[lem]{Corollary}
\newtheorem{remark}{Remark}
\theoremstyle{definition}
\newtheorem{defn}[lem]{Definition}
\newcommand{\braket}[2]   {\left\langle #1\middle|#2\right\rangle}
\newcommand{\ketbra}[2]   {\left| #1\middle\rangle\middle\langle#2\right|}
\newcommand{\abs}[1]      {\left|#1\right|}
\newcommand{\norm}[1] 	  {\left\|#1\right\|}
\newcommand{\eps}{\varepsilon}
\newcommand{\cX}{\mathcal{X}}
\newcommand{\cS}{\mathcal{S}}
\newcommand{\cSo}{\mathcal{S}_{\circ}}
\newcommand{\cW}{\mathcal{W}}
\newcommand{\cH}{\mathcal{H}}
\newcommand{\cI}{\mathcal{I}}
\newcommand{\cP}{\mathcal{P}}
\newcommand{\im}{\mathop{\mathrm{im}}}
\renewcommand{\epsilon}{\eps}
\DeclareMathOperator{\Tr} {Tr}
\DeclareMathOperator{\Var}{Var}
\title{Moderate deviation analysis for classical communication over quantum channels}
\author[1]{Christopher T.~Chubb\thanks{\href{mailto:christopher.chubb@sydney.edu.au}{christopher.chubb@sydney.edu.au}}}
\author[2,3]{Vincent Y.~F.~Tan\thanks{\href{mailto:vtan@nus.edu.sg}{vtan@nus.edu.sg}}}
\author[1,4]{Marco Tomamichel\thanks{\href{mailto:marco.tomamichel@uts.edu.au}{marco.tomamichel@uts.edu.au}}}
\affil[1]{Centre for Engineered Quantum Systems, School of Physics, University of Sydney, Sydney, Australia.
}
\affil[2]{Department of Electrical and Computer Engineering, National University of Singapore, Singapore.}
\affil[3]{Department of Mathematics, National University of Singapore, Singapore.}
\affil[4]{Centre for Quantum Software and Information, University of Technology Sydney, Sydney, Australia.}
\begin{document}

\maketitle

%% intro.tex %%%%%%%%%%%%%%%%%%%%%%%%%%%%%%%%%%%%%%%%%%%%%%%%%%%%%%%%%%%%%%%%%%%%%%%%%%%%%%%%%%%%%%%

\begin{abstract}
	We analyse families of codes for classical data transmission over quantum channels that have both a vanishing probability of error and a code rate approaching capacity as the code length increases. To characterise the fundamental tradeoff between decoding error, code rate and code length for such codes we introduce a quantum generalisation of the moderate deviation analysis proposed by Alt\u{u}g and Wagner as well as Polyanskiy and Verd\'u. We derive such a tradeoff for classical-quantum (as well as image-additive) channels in terms of the channel capacity and the channel dispersion, giving further evidence that the latter quantity characterises the necessary backoff from capacity when transmitting finite blocks of classical data. To derive these results we also study asymmetric binary quantum hypothesis testing in the moderate deviations regime. Due to the central importance of the latter task, we expect that our techniques will find further applications in the analysis of other quantum information processing tasks.
\end{abstract}

\section{Introduction}

The goal of information theory is to find the fundamental limits imposed on information processing and transmission by the laws of physics. One of the early breakthroughs in quantum information theory was the characterisation of the capacity of a classical-quantum (c-q) channel to transmit classical information by Holevo~\cite{holevo98,holevo73b} and Schumacher--Westmoreland~\cite{schumacher97}. The \emph{classical capacity} of a quantum channel is defined as the maximal rate (in bits per channel use) at which we can transmit information such that the decoding error vanishes asymptotically as the length of the code increases. However, for many practical applications there are natural restrictions on the code length imposed, for example, by limitations on how much quantum information can be processed coherently. Therefore it is crucial to go beyond the asymptotic treatment and understand the intricate tradeoff between decoding error {probability}, code rate and code length. 

For this purpose, we will study families of codes that have {\em both} a rate approaching the capacity and an {error probability} that vanishes asymptotically as the code length $n$ increases. The following tradeoff relation gives a rough illustration of our main result: if the code rate approaches capacity as $\Theta(n^{-t})$ for some $t \in (0,1/2)$, then the decoding error cannot be smaller than $\exp(- \Theta(n^{1-2t}))$. In fact, we will show that the constants implicit in the $\Theta$ notation are determined by a second channel parameter {beyond} the capacity, called the \emph{channel dispersion}. We will also show that this relation is tight, i.e., there exist families of codes achieving equality {asymptotically}. 

Our work thus complements previous work on the boundary cases corresponding to {$t  \in \{0, 1/2\}$}. The error exponent (or reliability function) of c-q channels (see, e.g., Refs.~\cite{holevo00,hayashi07,dalai13}) corresponds to the case $t = 0$ where the rate is bounded away from capacity and the error probability vanishes exponentially in $n$. This is also called the \emph{large deviations} regime.  Moreover, the second-order asymptotics of c-q channels were evaluated by Tomamichel and Tan~\cite{tomamicheltan14}. They correspond to the case $t = 1/2$ where the rate approaches capacity as $\Theta(n^{-1/2})$ and the  error probability  is non-vanishing. This is also called the \emph{small deviations} regime. 

In the present work, we consider the entire regime in between, which is dubbed the \emph{moderate deviation} regime.\footnote{In the technical analysis, we are considering moderate deviations from the mean of a sum of independent  {log-likelihood} ratios, thus justifying the name emanating from {statistics~\cite[Theorem~3.7.1]{dembo98}}.} The different parameter regimes are illustrated in Fig.~\ref{fig:regimes}.

\begin{figure}[t!]
	\centering
	\includegraphics{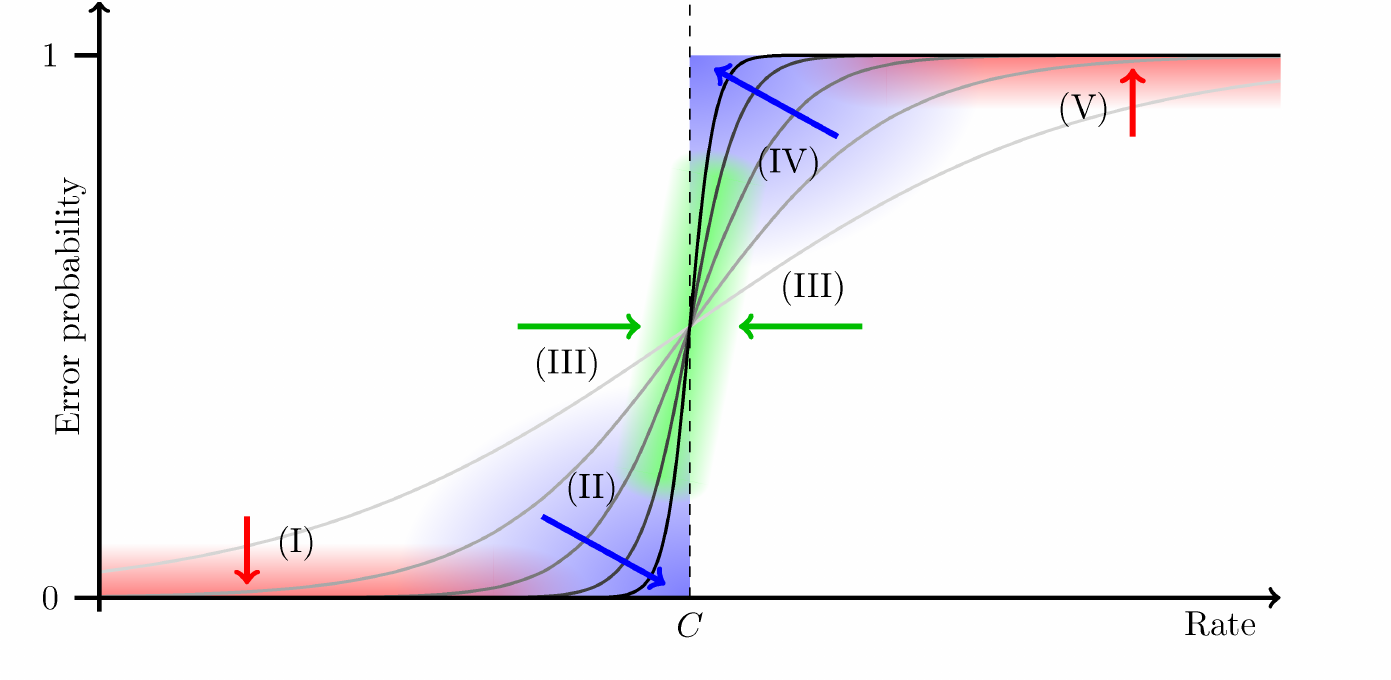}
	\begin{tabular}{|l|c|c|c|c|c|}
		\hline
		& (I) & (II) & (III) & (IV) & (V) \\
		\hline
		\multirow{2}{*}{regime} & error  & \!moderate deviation\! & constant error & moderate deviation & strong converse \\
		& exponent & (below capacity) & (second-order) & (above capacity) & exponent \\
		\hline
		error prob.\ & \footnotesize $\!\exp(-\Theta(n))\!$ & \footnotesize $\exp(-o(n))$ \& $\omega(1)$ & \footnotesize $\Theta(1)$ & \footnotesize \!$1 - \exp(-o(n))$ \& \footnotesize $1 - \omega(1)$\! & \footnotesize $1 - \exp(-\Theta(n))$ \\
		\hline
		code rate & \footnotesize $C - \Theta(1)$ & \footnotesize $C - o(1)$ \& $C - \omega\big(n^{-\frac12}\big)$ & \footnotesize $C - \Theta\big(n^{-\frac12}\big)$ & \footnotesize $C + o(1)$ \& $C + \omega\big(n^{-\frac12}\big)$ & \footnotesize $C + \Theta(1)$ \\
		\hline
	\end{tabular}
	\caption{The figure shows the optimal error probability as a function of the rate, for different block lengths. Darker lines correspond to longer block lengths, and the capacity is denoted by $C$. The table shows the asymptotics in each region, as the blocklength $n$ goes to infinity. The functions of $n$ implicit in the $\Theta$, $o$, and $\omega$ notation are assumed to be positive-valued.}
	\label{fig:regimes}
\end{figure}

\paragraph*{Main results.} 
Before we present our main results, let us introduce the notion of a 
\emph{moderate sequence} of real numbers, $\{ x_n\}_n$ for $n \in \mathbb{N}$, 
whose defining properties are that $x_n \searrow 0$ and $\sqrt{n}\, x_n \to 
+\infty$ as $n \to \infty$.\footnote{As mentioned above an archetypical 
	moderate sequence is $x_n = \Theta(n^{-t})$ for some $t \in (0, \frac12)$. The 
	boundary cases are not included\,---\,in fact $t = 0$ requires a large 
	deviation analysis whereas $t = \frac12$ requires a small deviation analysis.} 
Our two main results concern binary asymmetric quantum hypothesis testing and 
c-q channel coding.

\begin{enumerate}
	\item The first result, presented in detail in Sect.~\ref{sec:hypo}, concerns binary quantum hypothesis testing between a pair of quantum states $\rho$ and~$\sigma$. We show that for any moderate sequence $x_n$, there exists a sequence of tests $\{ Q_n \}_n$ such that the two kinds of errors satisfy
	\begin{align}
		\Tr \rho^{\otimes n} (1 - Q_n) = e^{-nx_n^2}  &~\textrm{and}~ 
		\Tr \sigma^{\otimes n} Q_n = \exp\Big(-n \Big( D(\rho\|\sigma) - \sqrt{2 V(\rho\|\sigma) }\, x_n  + o(x_n) \Big)\Big) \,,
		\intertext{and another sequence of tests $\{ Q_n' \}_n$ such that the errors satisfy
		}
		\Tr \rho^{\otimes n} (1-Q_n')  = 1-e^{- n x_n^2} &~\textrm{and}~  
		\Tr \sigma^{\otimes n} Q_n' = \exp\Big(-n \Big( D(\rho\|\sigma) + \sqrt{2 V(\rho\|\sigma) }\, x_n  + o(x_n) \Big)\Big) \,,
	\end{align}
	where $D(\cdot\|\cdot)$ and $V(\cdot\|\cdot)$ denote the relative entropy~\cite{umegaki62} and relative entropy variance~\cite{tomamichel12,li12}, respectively. (The reader is referred to the next section for formal definitions of all concepts discussed here.)
	Most importantly, we show that both of these tradeoffs are in fact optimal.
	
	\item The main result, covered in Sect.~\ref{sec:channels}, concerns coding over a memoryless classical-quantum channel $\cW$. Let us denote by $M^*(\cW;n,\eps)$ the maximum $M \in \mathbb{N}$ such that there exists a code transmitting one out of $M$ messages over $n$ uses of the channel $\cW$ such that the average probability of error does not exceed $\eps$. For any sequence of tolerated error probabilities $\{\eps_n\}_n$ vanishing sub-exponentially with $\eps_n = e^{-nx_n^2}$, we find that
	\begin{align}
		\label{eq:moderatesecondorder1}
		\frac{1}{n} \log M^*(\cW;n, \eps_n) &= C(\cW) - \sqrt{2 V_{\min}(\cW)}\, x_n + o(x_n) \,,\\
		\label{eq:moderatesecondorder2}
		\frac{1}{n} \log M^*(\cW;n, 1-\eps_n) &= C(\cW) + \sqrt{2 V_{\max}(\cW)}\, x_n + o(x_n) \,,
	\end{align}
	where $C(\cdot)$ denotes the channel capacity and~$V_{\min}(\cdot)$ and $V_{\max}(\cdot)$ denote the minimal and maximal channel dispersion as defined in~Ref.~\cite{tomamicheltan14}, respectively.
	This result holds very generally for channels with arbitrary input alphabet and without restriction on the channel dispersion, strengthening also the best known results for classical channels. Moreover, as in~Ref.~\cite{tomamicheltan14}, this generality allows us to lift the above result to a statement about coding classical information over image-additive quantum channels and general channels as long as the encoders are restricted to prepare separable states.
\end{enumerate}

Since quantum hypothesis testing underlies many other quantum information processing tasks such as entanglement-assisted classical communication as well as private and quantum communication, we expect that our techniques will have further applications in quantum information theory.

\paragraph*{Related work.}

For classical channels, Alt\u{u}g and Wagner~\cite{altug14} first established the best decay rate of the average error probability for a class of discrete memoryless channels (DMCs) when the code rate approaches capacity at a rate slower than $\Theta(n^{-1/2})$. Shortly after the conference version of Ref.~\cite{altug14}, Polyanskiy and Verd\'u~\cite{polyanskiy10c} relaxed some of the  conditions on the class of DMCs and also established the moderate deviations asymptotics for other important classical channels such as the additive white Gaussian noise channel. The other main contributions to the analysis of  hypothesis testing, channel coding, quantum hypothesis testing, and c-q channel coding in the different parameter regimes are summarised in Table~\ref{tb:relatedwork}. 

From a technical perspective the moderate deviations regime can be approached via a refined large deviations analysis (as was done in Ref.~\cite{altug14}) or via a variation of second-order analysis via the information spectrum method (as was proposed in Ref.~\cite{polyanskiy10c}). In our work, we mostly follow the latter approach, interspersed with ideas from large deviation theory. In particular, we build on bounds from one-shot information theory by Wang and Renner~\cite{wang10} and use techniques developed for the second-order asymptotics in Ref.~\cite{tomamicheltan14}. In concurrent work, Cheng and Hsieh~\cite{cheng17} provide a moderate deviation analysis for c-q channels via a refined error exponent analysis. Their result holds for c-q channels with finite input alphabets and their techniques are complementary to ours.

\begin{table}
	\begin{tabular}{|l|c|c|c|c|}
		\hline
		&  asymmetric binary  &  channel coding &  quantum hypothesis &  classical-quantum  \\
		& hypothesis testing &   &  testing & channel coding \\
		\hline
		large deviation ($<$) & \cite{hoeffding65} & \cite{gallager68, csiszar11} & \cite{hayashi07,nagaoka06} & unknown\footnotemark \\
		\hline
		moderate deviation ($<$) & \cite{Sas11} & \cite{altug14, polyanskiy10c} &  \em this work & \em this work  \\
		\hline
		small deviation & \cite{strassen62} & \cite{strassen62,hayashi09,polyanskiy10}  & \cite{li12,tomamichel12} & \cite{tomamicheltan14}  \\
		\hline
		moderate deviation ($>$) & \em this work & \em this work  & \em this work & \em this work \\
		\hline
		large deviation ($>$) & \cite{csiszar71,han89} & \cite{arimoto73, dueck79} & \cite{mosonyiogawa13,mosonyi14} & \cite{mosonyi14-2} \\
		\hline
	\end{tabular}
	\caption{Exposition of related work on finite resource analysis of hypothesis testing and channel coding problems. The rows correspond to different parameter regimes, labelled by the deviation from the critical rate (i.e., the relative entropy for hypothesis testing and the capacity for channel coding problems).}
	\label{tb:relatedwork}
\end{table}

\footnotetext{In contrast to classical channels a tight characterisation of the error exponent of c-q channels remains elusive to date even for high rates. See, e.g., Refs.~\cite{holevo00,hayashi07,dalai13} for partial progress.}

%% prelim.tex %%%%%%%%%%%%%%%%%%%%%%%%%%%%%%%%%%%%%%%%%%%%%%%%%%%%%%%%%%%%%%%%%%%%%%%%%%%%%%%%%%%%%%

\section{Preliminaries}

\subsection{Notation and classical coding over quantum channels}

Let $\cH$ be a finite-dimensional Hilbert space and denote by $\cS:=\lbrace \rho\in\cH\,|\,\Tr\rho=1,\rho\geq 0\rbrace$ the quantum states on $\cH$. We take $\exp(\cdot)$ and $\log(\cdot)$ to be in an arbitrary but compatible base (such that they are inverses), and denote the natural logarithm by $\ln(\cdot)$. For convenience, we will consider the dimension of this Hilbert space to be a fixed constant, and omit any dependence constants may have on this dimension. For $\rho, \sigma \in \cS$ we write $\rho \ll \sigma$ if the support of $\rho$ is contained in the support of $\sigma$. For any closed subset $\cSo\subseteq \cS$, we will denote by $\cP(\cSo)$ the space of probability distributions supported on $\cSo$. We equip $\cS$ with the trace metric $\delta_{\Tr}(\rho,\rho'):=\frac{1}{2}\norm{\rho-\rho'}_1$ and $\cP(\cS)$ with a weak-convergence metric\footnote{An example of which is the induced L\'evy--Prokhorov metric (see, e.g., Section 6 and Theorem 6.4 in Ref.~\cite{partha67}).} $\delta_\text{wc}$, such that both are compact metric spaces with
\begin{align}
f:\cS\to \mathbb{R}\text{ continuous} \qquad\implies\qquad \mathbb{P}\mapsto \int\mathrm{d}\mathbb{P}(\rho)\,f(\rho)\text{ continuous}.
\end{align}

We will use the \emph{cumulative standard normal distribution} function $\Phi$ is defined as
\begin{align}
	\Phi(a)&:=\int_{-\infty}^{a}\frac{1}{\sqrt{2\pi}}\, e^{-\frac{x^2}2}\, \mathrm{d}x.
\end{align}

Following Ref.~\cite{tomamicheltan14}, we consider a general \emph{classical-quantum channel} $\cW: \cX \to \cS$ where $\cX$ is any set (without further structure). We define the \emph{image of the channel} as the set $\im \cW \subset \cS$ of all quantum states $\rho$ such that $\rho = \cW(x)$ for some $x \in \cX$. For convenience we assume that our Hilbert space satisfies
\begin{align}
	\cH = \mathop{\mathrm{Span}}_{\rho \in \im \cW} \mathrm{supp}(\rho)
\end{align}
such that $\sigma>0$ is equivalent to $\rho\ll\sigma$ for all $\rho\in \im \cW$.

For $M, n \in \mathbb{N}$, an \emph{$(n,M)$-code} for a classical-quantum 
channel $\cW$ is comprised of an encoder and a decoder. The \emph{encoder} is a 
map $E: \{1, 2, \ldots, M \} \to \cX^n$ and the \emph{decoder} is a positive 
operator-valued measure $\{ D_m \}_{m=1}^M$ on $\cH^{\otimes n}$. Moreover, an 
\emph{$(n,M,\eps)$-code} is an \emph{$(n,M)$-code} that satisfies
\begin{align}
	\frac{1}{M} \sum_{m=1}^M \Tr \bigg( \bigotimes_{i=1}^n \cW\bigl(E_i(m)\bigr) 
	D_m \bigg) \geq 1 - \eps \,,
\end{align}
i.e.\ the average probability of error does not exceed $\eps$.
The finite blocklength achievable region for a channel $\cW$ is the set of triples $(n,M,\eps)$ for which there exists an \emph{$(n,M,\eps)$-code} on $\cW$. We are particularly interested in the boundary
\begin{align}
	M^*(\cW;n,\eps) := \max \big\{ M \in \mathbb{N} : \exists \textrm{ a $(n,M,\eps)$-code for } \cW \big\} \,.
\end{align}
Specifically we are going to be concerned with the behaviour of the \emph{maximum rate}, which is defined as $R^*(\cW;n,\eps) := \frac{1}{n}\log M^*(\cW;n,\eps)$.

\subsection{Channel parameters}

An important parameter of a channel is the largest rate such that there exists a code of vanishing error probability in the large blocklength limit. This critical rate is known as the \emph{capacity} of a channel $C(\cW)$, which is defined as
\begin{align}
	C(\cW):=\inf_{\epsilon>0}\liminf\limits_{n\to\infty}R^*(\cW;n,\epsilon).
\end{align}
For classical-quantum channels there exists a \emph{strong converse} bound, which states that the capacity described the asymptotic rate not just for vanishing error probability, but those for non-zero fixed error probabilities as well~\cite{winter99,ogawa99}. Together with the original channel coding theorem~\cite{schumacher01,holevo98}, this yields
\begin{align}
	\lim\limits_{n\to\infty}R^*(\cW;n,\epsilon)=C(\cW)\quad \text{for all }\epsilon\in (0,1).
\end{align}

In essence the strong converse tells us that the capacity entirely dictates the asymptotic behaviour of the maximum rate at a fixed error probability. How quickly the rate approaches this asymptotic value for arbitrarily low and high error probabilities are described by the channel \emph{min-dispersion} $V_{\min}(\cW)$ and \emph{max-dispersion} $V_{\max}(\cW)$, which are defined respectively as
\begin{align}
	V_{\min}(\cW)&:=\inf_{\epsilon> 0}\limsup_{n\to\infty}\left(\frac{C(\cW)-R^*(\cW;n,\epsilon)}{\Phi^{-1}(\epsilon)/\sqrt{n}}\right)^2,\\
	V_{\max}(\cW)&:=\sup_{\epsilon<1}\limsup_{n\to\infty}\left(\frac{C(\cW)-R^*(\cW;n,\epsilon)}{\Phi^{-1}(\epsilon)/\sqrt{n}}\right)^2.
\end{align}
As with the strong converse, the min and max-dispersions also describe the dispersion at other fixed error probabilities~\cite{tomamicheltan14}:
\begin{align}
	\lim_{n\to\infty}\left(\frac{C(\cW)-R^*(\cW;n,\epsilon)}{\Phi^{-1}(\epsilon)/\sqrt{n}}\right)^2=\begin{dcases}
		V_{\min}(\cW) &\epsilon\in(0,1/2)\\
		V_{\max}(\cW) &\epsilon\in(1/2,1)
	\end{dcases}.
\end{align}

\subsection{Information quantities}

Classically, for two distributions $P$ and $Q$, the \emph{relative entropy} $D(P\|Q)$ and \emph{relative entropy variance} $V(P\|Q)$ are both defined as the mean and variance of the log-likelihood ratio $\log \bigl(P/Q\bigr)$ with respect to the distribution $P$. In the non-commutative case, for $\rho,\sigma \in \cS$ with $\rho \ll \sigma$, these definitions are generalised as~\cite{umegaki62,li12,tomamichel12}
\begin{align}
	D(\rho\|\sigma)&:=\Tr\rho\left(\log\rho-\log\sigma\right), \\
	V(\rho\|\sigma)&:=\Tr\rho\bigl(\log\rho-\log\sigma-D\left(\rho\|\sigma\right)\cdot\mathrm{id}\bigr)^2\,.
\end{align}
If $\rho \not\ll \sigma$ both quantities are set to $+\infty$. 

Following Ref.~\cite{tomamicheltan14}, for a closed set $\cSo\in \cS$, the \emph{divergence radius}\footnote{Whilst \cref{eqn:radius} characterises the divergence radius, we will mostly rely on a more useful form presented in \cref{defn:radius}.} $\chi(\cSo)$ is given by
\begin{align}
	\label{eqn:radius}
	\chi(\cSo)
	=\sup_{\mathbb{P}\in \cP(\cSo)}\int\mathrm{d}\mathbb{P}(\rho)\,D\left( \rho\, \middle\| \, \int\mathrm{d}\mathbb{P}(\rho')\,\rho' \right).
\end{align}
where $\cP(\cSo)$ denotes the space of distributions on $\cSo$. If we let $\Pi(\cSo)$ denote the distributions which achieve the above supremum, we also define the \emph{minimal and maximal peripheral variance}, $v_{\min}(\cSo)$ and $v_{\max}(\cSo)$, as
\begin{align}
	v_{\min}(\cSo):=\inf_{\mathbb{P}\in \Pi(\cSo)} \int\mathrm{d}\mathbb{P}(\rho)\,V\left( \rho\, \middle\| \, \int\mathrm{d}\mathbb{P}(\rho')\,\rho' \right),\\
	v_{\max}(\cSo):=\sup_{\mathbb{P}\in \Pi(\cSo)} \int\mathrm{d}\mathbb{P}(\rho)\,V\left( \rho\, \middle\| \, \int\mathrm{d}\mathbb{P}(\rho')\,\rho' \right).
\end{align}

For the image of a quantum channel, the above three information quantities correspond exactly to the three previously defined channel parameters~\cite{tomamicheltan14}. Specifically, for $\cSo=\overline{\im \cW}$, we have
\begin{align}
	C(\cW)=\chi(\cSo),\qquad\quad
	V_{\min}(\cW)=v_{\min}(\cSo),\qquad\quad
	V_{\max}(\cW)=v_{\max}(\cSo).
\end{align}

\subsection{Moderate deviation tail bounds}
\label{subsec:moddev}

We now discuss the relevant tail bounds we will require in the moderate deviation regime. Let $\lbrace X_{i,n}\rbrace_{i\leq n}$ be independent zero-mean random variables, and define the average variance as
\begin{align}
	V_n:=\frac{1}{n}\sum_{i=1}^{n}\Var[X_{i,n}].
\end{align}

Recall that a sequence $\lbrace t_n\rbrace_n$ is moderate if $x_n\searrow 0$ 
and $\sqrt nx_n\to+\infty$ as $n\to\infty$. Given certain bounds on the moments 
and cumulants 
of these variables, which we will make explicit below, we will see that the 
probability that the average variable $\frac{1}{n}\sum_{i=1}^{n}X_{i,n}$ 
deviates from the mean by a moderate sequence $\lbrace t_n\rbrace_n$ decays 
asymptotically as
\begin{align}
	\label{eqn:asymp}
	\ln\Pr\left[\frac{1}{n}\sum_{i=1}^{n}X_{i,n}\geq t_n\right]=-\bigl(1+o(1)\bigr)\frac{nt_n^2}{2V_n}.
\end{align}

\begin{lem}[Moderate deviation lower bound]
	\label{lem:moddev lower}
	If there exist constants $\nu>0$ and $\tau$ such that $\nu\leq V_n$ and 
	\begin{align}
	\frac{1}{n}\sum_{i=1}^{n}\mathbb{E}\left[\abs{X_{i,n}}^3\right]\leq \tau
	\end{align} 
	for all $n$, then for any $\eta>0$ there exists a constant $N(\lbrace t_i\rbrace,\nu,\tau,\eta)
	$ such that, for all $n\geq N$, the probability of a moderate deviation is lower bounded as
	\begin{align}
	\ln\Pr\left[\frac{1}{n}\sum_{i=1}^nX_{i,n}\geq t_n\right]\geq 
	-(1+\eta)\frac{nt_n^2}{2V_n}.
	\end{align}
\end{lem}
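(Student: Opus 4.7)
The plan is to apply the Cramér exponential-tilting trick from large-deviation theory, combined with the Berry--Esseen theorem under the tilted distribution. Because $t_n \to 0$ the tilt is vanishingly small, so the second-order Taylor expansion of the cumulant generating functions yields the sharp constant $1/(2V_n)$.

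Introduce the cumulant generating functions $\Lambda_{i,n}(\theta):=\ln\mathbb{E}[e^{\theta X_{i,n}}]$ and choose $\theta_n$ as the solution of $\sum_{i=1}^n \Lambda'_{i,n}(\theta_n)=nt_n$. The Taylor expansion $\Lambda'_{i,n}(\theta)=\theta\Var[X_{i,n}]+O(\theta^2\mathbb{E}[|X_{i,n}|^3])$ combined with the implicit function theorem, the bound $V_n\geq\nu>0$, and the third-moment assumption produces a unique such $\theta_n$ satisfying $\theta_n=t_n/V_n+O(t_n^2)$. Let $\widetilde{\Pr}_n$ denote the product tilted measure with Radon--Nikodym derivative $\exp(\theta_n\sum_i X_{i,n}-\sum_i\Lambda_{i,n}(\theta_n))$; by construction $\widetilde{\mathbb{E}}[\sum_i X_{i,n}]=nt_n$. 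Writing the tail as an expectation under $\widetilde{\Pr}_n$ and restricting the indicator to the window $A_n:=\{nt_n\leq\sum_i X_{i,n}\leq nt_n+\sqrt{nV_n}\}$ gives
\[
\Pr\Big[\tfrac1n\sum_i X_{i,n}\geq t_n\Big]\geq \exp\Big(-\theta_n\bigl(nt_n+\sqrt{nV_n}\bigr)+\sum_i \Lambda_{i,n}(\theta_n)\Big)\,\widetilde{\Pr}_n[A_n].
\]

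Next, Taylor-expand the cumulants as $\Lambda_{i,n}(\theta_n)=\tfrac12 \theta_n^2\Var[X_{i,n}]+R_{i,n}$ with $|R_{i,n}|\leq c\,\theta_n^3\mathbb{E}[|X_{i,n}|^3]$ for a universal constant $c$. Summing gives $\sum_i \Lambda_{i,n}(\theta_n)=\tfrac12 n\theta_n^2 V_n+O(n\theta_n^3\tau)$, and inserting $\theta_n=t_n/V_n+O(t_n^2)$ the full exponent evaluates to
\[
-\theta_n nt_n+\tfrac12 n\theta_n^2 V_n-\theta_n\sqrt{nV_n}+O(n\theta_n^3\tau)=-\frac{nt_n^2}{2V_n}+o(nt_n^2),
\]
where the cubic error $O(nt_n^3\tau/\nu^3)$ is $o(nt_n^2)$ because $t_n\searrow 0$, and the window-width penalty $O(t_n\sqrt{n})$ is $o(nt_n^2)$ because $\sqrt{n}\,t_n\to\infty$. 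Under $\widetilde{\Pr}_n$ the $X_{i,n}$ remain independent with total variance $nV_n(1+o(1))$; the Lyapunov ratio governing the Berry--Esseen error is of order $\tau/(\sqrt n\,\nu^{3/2})\to 0$, so $\widetilde{\Pr}_n[A_n]\to\Phi(1)-\Phi(0)>0$ and $\ln\widetilde{\Pr}_n[A_n]=O(1)=o(nt_n^2)$. Since the Jensen bound $V_n\leq\tau^{2/3}$ also gives a uniform upper bound on $V_n$, the $o(nt_n^2)$ slack can be absorbed into $\eta\cdot nt_n^2/(2V_n)$ by taking $n\geq N(\{t_i\},\nu,\tau,\eta)$.

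The main technical obstacle is the uniform control of the Taylor remainders and of the Berry--Esseen constant across the triangular array $\{X_{i,n}\}_{i\leq n}$. Both reduce to the assumed bound on absolute third moments, which is precisely the input motivating the hypothesis and mirrors the classical moderate-deviation statement \cite[Theorem~3.7.1]{dembo98}.
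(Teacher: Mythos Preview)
Your exponential-tilting argument has a genuine gap: it requires the moment generating functions $\mathbb{E}[e^{\theta X_{i,n}}]$ to be finite for $\theta$ in a neighbourhood of zero, but the lemma only assumes a bound on the average third absolute moment. Finite third moments do \emph{not} imply that the moment generating function exists; e.g.\ a random variable with density proportional to $(1+|x|^{5})^{-1}$ has $\mathbb{E}[|X|^{3}]<\infty$ yet $\mathbb{E}[e^{\theta X}]=\infty$ for every $\theta\neq 0$. In that situation $\Lambda_{i,n}$ is undefined, the tilted measure does not exist, and none of your Taylor expansions or Berry--Esseen steps can be carried out. (The reference you cite, \cite[Theorem~3.7.1]{dembo98}, itself assumes finiteness of the log-moment generating function near the origin, so it does not cover the present hypotheses either.) Even granting existence of the MGF, the claimed remainder bound $|R_{i,n}|\leq c\,\theta_n^{3}\,\mathbb{E}[|X_{i,n}|^{3}]$ for $\Lambda_{i,n}$ is not automatic: the third derivative of the cumulant generating function is the third cumulant at the tilted point, and controlling it uniformly in terms of the untilted third absolute moment requires additional work that you have not supplied.

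The paper avoids this issue entirely by invoking a result of Rozovsky \cite{Rozovsky2002}, a Berry--Esseen-type inequality with multiplicative (rather than additive) error in the tail, whose only moment hypothesis is finiteness of the third absolute moments. That result directly yields
\[
\ln\Pr\!\left[\tfrac{1}{n}\sum_{i} X_{i,n}\geq t_n\right]
\;\geq\;
\ln\Phi\!\left(-\sqrt{\tfrac{nt_n^{2}}{V_n}}\right)
-\frac{\kappa_1 T_n\, n t_n^{3}}{V_n^{3}}
+\ln\!\left(1-\frac{\kappa_2 T_n t_n}{V_n^{2}}\right),
\]
after which one uses $\ln\Phi(-x)\geq -x^{2}/2-\ln(\sqrt{8\pi}\,x)$ and checks that all correction terms are $o(nt_n^{2}/V_n)$ exactly as you do at the end. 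If you want to rescue your tilting approach under the stated hypotheses, you would first need to truncate the $X_{i,n}$ (say at level $\sqrt{n}\,t_n$ times a suitable constant) so that the truncated variables have finite exponential moments, and then control the probability mass lost in truncation; this is essentially how results like Rozovsky's are proved.
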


\begin{lem}[Moderate deviation upper bound]
	\label{lem:moddev upper}
	If there exists a constant $\gamma$ such that 
	\begin{align}
	\frac{1}{n}\sum_{i=1}^{n}\sup_{s\in[0,1/2]}\abs{\frac{\mathrm{d}^3}{\mathrm{d}s^3}\ln\mathbb{E}\left[e^{sX_{i,n}}\right]}\leq \gamma,
	\end{align}
	for all $n$, then for any $\eta>0$ there exists a constant $N(\lbrace t_i\rbrace, \gamma,\eta)
	$ such that, for all $n\geq N$, the probability of a moderate deviation is upper bounded as
	\begin{align}
	\ln \Pr\left[\frac{1}{n}\sum_{i=1}^{n}X_{i,n}\geq t_n\right]\leq -\frac{nt_n^2}{2V_n+\eta}.
	\end{align}
\end{lem}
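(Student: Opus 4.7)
The approach is a Chernoff-type bound combined with a third-order Taylor expansion of the cumulant generating function $\Lambda_{i,n}(s) := \ln \mathbb{E}[e^{sX_{i,n}}]$. Since each $X_{i,n}$ has mean zero, $\Lambda_{i,n}(0) = \Lambda'_{i,n}(0) = 0$ and $\Lambda''_{i,n}(0) = \Var(X_{i,n})$, so Taylor's theorem with Lagrange remainder gives $\Lambda_{i,n}(s) \le \tfrac{s^2}{2}\Var(X_{i,n}) + \tfrac{s^3}{6}\gamma_{i,n}$ for $s \in [0, 1/2]$, where $\gamma_{i,n} := \sup_{u \in [0, 1/2]}|\Lambda'''_{i,n}(u)|$. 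Combining this with Markov's inequality applied to $e^{s\sum_i X_{i,n}}$, using independence of the $X_{i,n}$, and averaging using the hypothesis $\tfrac{1}{n}\sum_i \gamma_{i,n} \le \gamma$, one obtains the Chernoff estimate
\begin{align*}
\ln \Pr\left[\tfrac{1}{n}\sum_{i=1}^n X_{i,n} \ge t_n\right] \le -snt_n + \tfrac{s^2 nV_n}{2} + \tfrac{s^3 n\gamma}{6}, \qquad s \in [0, 1/2].
\end{align*}

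Next I would optimise over $s$, aiming for the unconstrained quadratic optimum $t_n/V_n$ but shifted slightly to build in the $\eta$-slack in the denominator. A clean choice is $s_n := t_n/(V_n + \alpha)$ for a small $\alpha = \alpha(\eta) > 0$ to be fixed momentarily; since $V_n + \alpha \ge \alpha > 0$ and $t_n \to 0$, this $s_n$ lies in $[0, 1/2]$ for all $n$ large enough. A short algebraic computation yields that the quadratic contribution becomes $-nt_n^2(V_n + 2\alpha)/[2(V_n + \alpha)^2]$, and cross-multiplying shows that for $\alpha \le \eta/2$ this is bounded above by $-nt_n^2/(2V_n + \eta)$ with positive margin of order $\eta \cdot nt_n^2$ (the identity $(V_n+2\alpha)(2V_n+\eta) - 2(V_n+\alpha)^2 = V_n\eta + 2\alpha(\eta-\alpha)$ does the work). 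The cubic residual is $nt_n^3 \gamma/[6(V_n + \alpha)^3] = O(nt_n^3)$, and since $t_n \to 0$ this is $o(nt_n^2)$ and can be absorbed into the quadratic margin for all $n \ge N(\{t_i\}, \gamma, \eta)$, yielding the claimed bound.

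The main obstacle is the balancing act in the choice of $s$: the naive Gaussian-inspired optimum $s = t_n/V_n$ achieves the ideal constant $1/(2V_n)$ exactly, but leaves no room to absorb either the cubic correction or the $\eta$-cushion, so $s$ must be perturbed away from it in the right direction and by an amount calibrated to $\eta$. The structural feature that makes the argument succeed is precisely the moderate scaling $t_n \searrow 0$ with $\sqrt{n}\, t_n \to \infty$: the former ensures both that $s_n$ stays in the interior of $[0, 1/2]$ (so the constraint on the domain of the Taylor bound is inactive) and that the cubic remainder is genuinely subleading compared to $nt_n^2$, while the latter guarantees $nt_n^2 \to \infty$ so that the overall bound is nontrivial and any $o(nt_n^2)$ correction is truly absorbable into the $\eta$-slack.
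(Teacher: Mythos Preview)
Your proposal is correct and follows the same strategy as the paper: Chernoff/Markov applied to $e^{s\sum_i X_{i,n}}$, a third-order Taylor expansion of the averaged cumulant generating function on $[0,1/2]$, and a tilt $s$ chosen just below the Gaussian optimum $t_n/V_n$ so as to leave room for the cubic remainder. The only differences are the specific tilt---your $s_n=t_n/(V_n+\alpha)$ with $\alpha\le\eta/2$ versus the paper's $s=\alpha t_n$ with $\alpha^{-1}=\sqrt{V_n+\eta/4}\bigl(\sqrt{V_n+\eta/4}+\sqrt{\eta/4}\bigr)$, which makes the quadratic part land exactly at $-nt_n^2/(2V_n+\eta/2)$---and one minor imprecision in your sketch: the margin is not ``of order $\eta\, nt_n^2$'' uniformly in $V_n$ but rather $\eta\, nt_n^2/(2V_n+\eta)^2$ when $\alpha=\eta/2$, which nonetheless dominates the cubic residual $4\gamma\, nt_n^3/\bigl(3(2V_n+\eta)^3\bigr)$ as soon as $t_n\le 3\eta^2/(4\gamma)$, so the threshold $N$ indeed depends only on $\{t_i\},\gamma,\eta$ as claimed.
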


If $V_n$ has a uniform lower bound, then as $\eta\searrow 0$ the above two bounds sandwich together, giving the two-sided asymptotic scaling of Eq.~\ref{eqn:asymp}. In this case we can see that
\begin{align}
	\sigma\left[\frac{1}{n}\sum_{i=1}^{n}X_i\right]=\sqrt{V_n/n}=\Theta(1/\sqrt{n})
	\qquad\text{and}\qquad
	\sqrt{\frac{1}{n}\sum_{i=1}^{n}\sigma^2\left[X_i\right]}=\sqrt{V_n}=\Theta(1),
\end{align}
where $\sigma\left[\cdot\right]$ denotes the standard deviation. If we interpret the standard deviation as setting the `length-scale' on which a distribution decays, then the above two quantities---the deviation of the average, and average\footnote{More specifically the root-mean-square} of the deviation---set the length-scales of small and large deviation bounds respectively. Using this intuition, we can generalise moderate deviation bounds to give tight two-sided bounds for distributions with arbitrary normalisation, in which $V_n$ is no longer bounded. To do this we will tail bound for deviations which are moderate, \emph{in units of }$\sqrt{V_n}$.

\begin{cor}[Dimensionless moderate deviation bound]
	\label{cor:moddev nondim}
	If there exists a $\gamma$ such that
	\begin{align}
		\frac{1}{nV_n^{3/2}}\sum_{i=1}^{n}\sup_{s\in[0,1/2]}\abs{\frac{\mathrm{d}^3}{\mathrm{d}s^3}\ln\mathbb{E}\left[e^{sX_{i,n}}\right]}\leq \gamma,
	\end{align}
	for all $n$,
	then there exists a constant $N(\lbrace t_i\rbrace,\gamma)$ such that, for all $n\geq N$, we have the two-sided bound 
	\begin{align}
		-(1+\eta)\frac{nt_n^2}{2}\leq \ln \Pr\left[\frac{1}{n}\sum_{i=1}^{n}X_{i,n}\geq t_n\sqrt{V_n}\right]\leq -(1-\eta)\frac{nt_n^2}{2}.
	\end{align}
\end{cor}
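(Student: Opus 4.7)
The natural plan is to reduce the corollary to \cref{lem:moddev lower,lem:moddev upper} by rescaling. I would set $Y_{i,n} := X_{i,n}/\sqrt{V_n}$, so that the rescaled average variance $\widetilde{V}_n := \frac{1}{n}\sum_{i=1}^n \Var[Y_{i,n}]$ is identically $1$. Since the event $\lbrace \tfrac{1}{n}\sum_i X_{i,n} \geq t_n \sqrt{V_n}\rbrace$ coincides with $\lbrace\tfrac{1}{n}\sum_i Y_{i,n} \geq t_n\rbrace$, the target probability is exactly the one controlled by the two lemmas applied to $\lbrace Y_{i,n}\rbrace$ with the unchanged moderate sequence $\lbrace t_n\rbrace$.

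To invoke the lemmas I would then verify their hypotheses for $Y_{i,n}$. Writing $f_{i,n}(s):=\ln\mathbb{E}[e^{sX_{i,n}}]$, the chain rule gives
\begin{align}
\frac{\mathrm{d}^3}{\mathrm{d}s^3}\ln\mathbb{E}[e^{sY_{i,n}}] = V_n^{-3/2}\, f_{i,n}'''\bigl(s/\sqrt{V_n}\bigr),
\end{align}
so the corollary's assumption $\frac{1}{nV_n^{3/2}}\sum_i \sup_s \abs{f_{i,n}'''(s)}\leq \gamma$ transfers directly into the cumulant hypothesis required by \cref{lem:moddev upper} with the same constant $\gamma$. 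A parallel computation, using that the cumulant bound also controls $\mathbb{E}[\abs{X_{i,n}}^3]$, yields the third absolute moment bound $\frac{1}{n}\sum_i \mathbb{E}[\abs{Y_{i,n}}^3] \leq \tau$ needed for \cref{lem:moddev lower} with $\nu = 1$. Applying both lemmas to $Y_{i,n}$ then produces
\begin{align}
-(1+\eta)\frac{nt_n^2}{2} \;\leq\; \ln\Pr\bigl[\tfrac{1}{n}\textstyle\sum_i Y_{i,n} \geq t_n\bigr] \;\leq\; -\frac{nt_n^2}{2+\eta},
\end{align}
and rewriting $\tfrac{1}{2+\eta} = \tfrac{1-\eta'}{2}$ with $\eta' := \eta/(2+\eta) \searrow 0$ as $\eta \searrow 0$ casts the upper bound in the desired form.

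The main obstacle I anticipate is bookkeeping the domain of the supremum: the substitution $s \mapsto s/\sqrt{V_n}$ maps $[0,1/2]$ to $[0,1/(2\sqrt{V_n})]$, which exceeds $[0,1/2]$ whenever $V_n < 1$, so the cumulant bound as literally stated does not instantly transfer. Resolving this requires either a mild additional regularity of the $f_{i,n}$ on a neighbourhood of zero, or appealing to the setting of application in the paper\,---\,namely, log-likelihood ratios of finite-dimensional quantum states are bounded, so their cumulant generating functions are entire with derivatives uniformly controlled on any compact interval. The same consideration lets one extract the third absolute moment bound $\mathbb{E}[\abs{X_{i,n}}^3]\leq C V_n^{3/2}$ from the cumulant condition; once this is handled, the remaining argument is routine.
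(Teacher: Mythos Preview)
Your approach is exactly the paper's: rescale $Y_{i,n}=X_{i,n}/\sqrt{V_n}$ so that the average variance becomes $1$, observe that the deviation event is unchanged, and then invoke \cref{lem:moddev lower,lem:moddev upper} for the rescaled variables. The paper's proof is in fact terser than yours\,---\,it simply asserts that the dimensionless hypothesis on $X_{i,n}$ ``is equivalent to'' the cumulant hypothesis on $\tilde X_{i,n}$ and then applies both lemmas, without separately discussing the third-absolute-moment bound or the domain of the supremum. The obstacle you flag (that $s\mapsto s/\sqrt{V_n}$ sends $[0,1/2]$ to $[0,1/(2\sqrt{V_n})]$, which overshoots when $V_n<1$) is real and is \emph{not} addressed in the paper's proof either; it is an informality that is harmless in the paper's applications precisely for the reason you give, namely that the log-likelihood variables arising from \cref{lem:boundedcumulants} have cumulant derivatives uniformly bounded on any compact interval.
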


We present proofs of these lemmas in \cref{app:tailbounds}.

\subsection{Reversing lemma}

Intuitively one might expect that moderate deviation bounds can be `reversed' e.g.\ that the bound on the probability given the deviation (see Lemmas \ref{lem:moddev lower} and \ref{lem:moddev upper}) of the form
\begin{align}
	\lim\limits_{n\to \infty}\frac{V_n}{nt_n^2}\ln\Pr\left[\frac{1}{n}\sum_{i=1}^{n}X_i\geq t_n\right]=-\frac{1}{2},
\end{align}
is equivalent to a bound on the deviation given the probability
\begin{align}
	\lim\limits_{n\to\infty}\frac{1}{t_n}\inf\left\lbrace t\in\mathbb{R} \,\middle|\, \frac{V_n}{nt_n^2}\ln\Pr\left[\frac{1}{n}\sum_{i=1}^{n}X_i\geq t\right]\leq -\frac{1}{2} \right\rbrace=
\end{align}

We will now see that such an ability to `reverse' moderate deviation bounds is generic. We do this by considering two quantities $A$ and $B$ defined on the same domain, and considering the infimum value of each quantity for a fixed value of the other. 

\begin{lem}[Reversing Lemma]
	\label{lem:reverse}
	Let $\lbrace A_i\rbrace_i$ and $\lbrace B_i\rbrace_i$ be sequences of real functions with $\inf_{t} A_i(t)\leq 0$ and $\inf_{t}B_i(t)\leq 0$ for all $i$. If we define $\hat{A}_n(b):=\inf_{t} \left\lbrace A_n(t) \middle| B_n(t)\leq b \right\rbrace$ and $\hat{B}_n(a):=\inf_{t} \left\lbrace B_n(t) \middle| A_n(t)\leq a \right\rbrace$, then
	\begin{align}
		\lim\limits_{n\to\infty}\frac{\hat{A}_n(b_n)}{b_n}=1, \quad\forall \lbrace b_n\rbrace\text{ moderate}
		~~\qquad&\Longleftrightarrow\qquad~~ 
		\lim\limits_{n\to\infty}\frac{\hat{B}_n(a_n)}{a_n}=1, \quad\forall \lbrace a_n\rbrace\text{ moderate}.
	\end{align}
\end{lem}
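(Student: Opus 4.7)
The statement is symmetric under exchanging the pair $(A, a, \hat A)$ with $(B, b, \hat B)$, so it suffices to prove one implication; the reverse follows by the same argument with the roles of the two function sequences swapped. Assume therefore $\hat A_n(b_n)/b_n \to 1$ for every moderate $\{b_n\}$ and fix a moderate $\{a_n\}$. The goal is to show $\hat B_n(a_n)/a_n \to 1$.

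The key structural observation is that $\hat A_n$ and $\hat B_n$ are two different parameterisations of the same object, namely the lower-left boundary of the upward-closed set $S_n := \{(a,b) \in \mathbb R^2 : \exists\, t \text{ with } A_n(t) \leq a \text{ and } B_n(t) \leq b\}$; explicitly, $\hat A_n(b) = \inf\{a : (a,b) \in S_n\}$ and $\hat B_n(a) = \inf\{b : (a,b) \in S_n\}$. The hypothesis constrains this boundary on moderate $b$-scale, and my task is to transfer the constraint to moderate $a$-scale. This is possible because the moderate property is invariant under rescaling by positive constants: for $\delta \in (0,1)$, whenever $\{a_n\}$ is moderate, the sequences $b_n := a_n(1 \pm \delta)$ are also moderate, since $b_n/a_n$ tends to $1 \pm \delta > 0$ and both $b_n \searrow 0$ and $\sqrt{n}\, b_n \to \infty$ are therefore inherited. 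Consequently the hypothesis can be applied legitimately to auxiliary sequences built from $\{a_n\}$.

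For each of the two sandwich inequalities $a_n(1 - \delta) \leq \hat B_n(a_n) \leq a_n(1 + \delta)$ needed to pin down the limit, I would select the corresponding auxiliary $b_n$ (respectively $a_n(1 - \delta)$ and $a_n(1 + \delta)$), apply the hypothesis to obtain quantitative two-sided control of $\hat A_n(b_n)$, and extract a near-minimiser via the standard $\epsilon_n$-approximation of the infimum. This near-minimiser $t_n$ will then be feasible for both optimisation problems (those defining $\hat A_n(b_n)$ and $\hat B_n(a_n)$ respectively), allowing the estimate on $\hat A_n(b_n)$ to transfer directly to an estimate on $\hat B_n(a_n)$; letting $\delta \downarrow 0$ then closes the sandwich. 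The principal subtlety is the careful juggling of the tuning parameter $\delta$ and the approximation sequence $\epsilon_n$ so that the transferred estimates remain sharp, together with the simultaneous use of the upper and lower parts of the hypothesis $\hat A_n(b_n)/b_n \to 1$. The non-attainment of the defining infima is a purely technical matter handled uniformly by the $\epsilon_n$-device, and I expect the lower-bound direction of the sandwich to be the main obstacle, as it typically requires a contradiction argument in which a hypothesised failure of $\hat B_n(a_n) \geq a_n(1 - \delta)$ must be converted into a quantitative violation of the hypothesis on a carefully-chosen auxiliary moderate sequence.
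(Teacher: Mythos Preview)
Your overall strategy---exploit the symmetry, build auxiliary moderate sequences by rescaling, and pass between the two infima via near-minimisers---is exactly the shape of the paper's argument. The paper isolates the near-minimiser/contradiction step into a preliminary lemma showing $\hat A\bigl(\hat B(a)+\delta\bigr)\le a$ and $\hat A\bigl(\hat B(a)-\delta\bigr)\ge a$ for every $\delta>0$, and then (working in the other direction, which by your symmetry remark is immaterial) picks fixed scaling factors $a_n=b_n/2$ for the $\limsup$ and $a_n=2b_n$ for the $\liminf$, combining the helper lemma with the hypothesis and the monotonicity of $\hat A_n$ to compare $\hat A_n(b_n)$ with $a_n/\hat B_n(a_n)\to 1$.

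There is, however, a concrete gap in your plan as written. You claim that the near-minimiser $t_n$ of $\hat A_n(b_n)$ ``will then be feasible for both optimisation problems''. But such a $t_n$ satisfies only $B_n(t_n)\le b_n$ and $A_n(t_n)\le\hat A_n(b_n)+\epsilon_n\approx b_n$; feasibility for $\hat B_n(a_n)$ requires $A_n(t_n)\le a_n$. With your stated pairing $b_n=a_n(1+\delta)$ for the upper sandwich inequality you get $A_n(t_n)\approx a_n(1+\delta)>a_n$, so $t_n$ is \emph{not} feasible and no bound on $\hat B_n(a_n)$ follows. With $b_n=a_n(1-\delta)$ for the lower sandwich the near-minimiser \emph{is} feasible, but plugging it in yields $\hat B_n(a_n)\le B_n(t_n)\le a_n(1-\delta)$, which is an upper bound---the wrong direction. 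Likewise your proposed contradiction for the lower bound (assume $\hat B_n(a_n)<a_n(1-\delta)$, produce $t_n$ with $A_n(t_n)\le a_n$ and $B_n(t_n)\le a_n(1-\delta)$, infer $\hat A_n(a_n(1-\delta))\le a_n$) gives no contradiction, since the hypothesis only says $\hat A_n(a_n(1-\delta))\approx a_n(1-\delta)<a_n$.

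The repair is precisely what the paper's helper lemma encodes: rather than trying to exhibit a single witness for both problems, one shows that $\hat A_n$ evaluated just above and just below $\hat B_n(a_n)$ straddles $a_n$, and then uses monotonicity of $b\mapsto\hat A_n(b)$ together with the hypothesis to locate $\hat B_n(a_n)$ relative to the given moderate sequence. In particular the auxiliary $b$-value that works is $\hat B_n(a_n)\pm\delta_n$, not $a_n(1\pm\delta)$; your intuition about the $S_n$-boundary is correct, but the transfer runs through this duality relation rather than through a common feasible point.
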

\begin{proof}
	See \cref{app:reverse}.
\end{proof}

%% hyptesting.tex %%%%%%%%%%%%%%%%%%%%%%%%%%%%%%%%%%%%%%%%%%%%%%%%%%%%%%%%%%%%%%%%%%%%%%%%%%%%%%%%%%

\section{Hypothesis testing}
\label{sec:hypo}

Whilst the divergence radius characterises the channel capacity, one-shot channel bounds are characterised by a quantity known as the \emph{$\epsilon$-hypothesis testing divergence}~\cite{wang10}. As the name suggests, as well as being relevant to one-shot channel coding bounds, the hypothesis testing divergence also has an operational interpretation in the context of hypothesis testing of quantum states. We will start by considering a moderate deviation analysis of this quantity.

\subsection{Hypothesis testing divergence}

Consider a hypothesis testing problem, in which $\rho$ and $\sigma$ correspond to the null and alternative hypotheses respectively. A test between these hypotheses will take the form of a POVM $\lbrace Q,I-Q\rbrace$, where $0\leq Q\leq I$. For a given $Q$, the type-I and type-II error probabilities are given by 
\begin{align}
	\alpha(Q;\rho,\sigma):=\Tr (I-Q)\rho,\qquad\qquad
	\beta(Q;\rho,\sigma):=\Tr Q\sigma.
\end{align}
If we define the smallest possible type-II error given a type-I error at most $\epsilon$ as
\begin{align}
	\beta_\epsilon(\rho\|\sigma):=\min_{0\leq Q\leq \mathbb{I}}\left\lbrace \beta(Q;\rho,\sigma) \,\middle|\, \alpha(Q;\rho,\sigma)\leq\epsilon \right\rbrace,
\end{align}
then the $\epsilon$-hypothesis testing divergence is defined as
\begin{align}
	D^\epsilon_{\mathrm{h}}(\rho\|\sigma):=-\log\frac{\beta_{\epsilon}(\rho\|\sigma)}{1-\epsilon}.
\end{align}
We note that the denominator of $1-\epsilon$ follows the normalisation in~\cite{dupuis12} such that $D_{\mathrm{h}}^\epsilon(\rho\|\rho)=0$ for all $\rho$.

An obvious extension of this hypothesis problem is to the case of $n$ copies of each state, i.e.\ a hypothesis test between $\rho^{\otimes n}$ and $\sigma^{\otimes n}$, or more generally between two product states $\otimes_{i=1}^n\rho_i$ and $\otimes_{i=1}^n\sigma_i$. A second-order analysis of the $\epsilon$-hypothesis testing divergence for a non-vanishing $\eps$ was given in~\cite{li12,tomamichel12}.

\begin{thm}[Moderate deviation of the hypothesis testing divergence]
	For any moderate sequence $\lbrace a_n\rbrace_n$ and states $\lbrace \rho_n\rbrace_n$ and $\lbrace \sigma_n\rbrace_n$ such that both $\lambda_{\min}(\sigma_i)$ and $V(\rho_i\|\sigma_i)$ are both uniformly bounded away from zero, the $\epsilon_n$- and $(1-\epsilon_n)$-hypothesis testing divergences of non-uniform product states for $\epsilon_n=e^{-na_n^2}$ scale as
	\begin{align}
		\frac{1}{n}D_{\mathrm{h}}^{\epsilon_n}\left( \bigotimes_{i=1}^n\rho_i \middle\|\,\bigotimes_{i=1}^n\sigma_i\right)&=D_n-\sqrt{2V_n}\,a_n+o(a_n), \\
		\frac{1}{n}D_{\mathrm{h}}^{1-\epsilon_n}\left( \bigotimes_{i=1}^n\rho_i \middle\|\,\bigotimes_{i=1}^n\sigma_i\right)&=D_n+\sqrt{2V_n}\,a_n+o(a_n),
	\end{align}
	where $D_n:=\frac{1}{n}\sum_{i=1}^{n}D(\rho_i\|\sigma_i)$ and $V_n:=\frac{1}{n}\sum_{i=1}^{n}V(\rho_i\|\sigma_i)$. More specifically for any $\rho$ and $\sigma$ such that $\rho\ll \sigma$, the hypothesis testing divergences of uniform product states scale as
	\begin{align}
		\frac{1}{n}D_{\mathrm{h}}^{\epsilon_n}\left( \rho^{\otimes n} \middle\|\sigma^{\otimes n}\right)&=D(\rho\|\sigma)-\sqrt{2V(\rho\|\sigma)}\,a_n+o(a_n), \label{eq:hypo-mod}\\
		\frac{1}{n}D_{\mathrm{h}}^{1-\epsilon_n}\left( \rho^{\otimes n} \middle\|\sigma^{\otimes n}\right)&=D(\rho\|\sigma)+\sqrt{2V(\rho\|\sigma)}\,a_n+o(a_n).
	\end{align}
\end{thm}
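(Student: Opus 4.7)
The plan is to reduce the quantum moderate deviation problem to a classical one involving the spectrum of the log-likelihood operator, following the general template of the second-order analysis of \cite{tomamichel12,li12,tomamicheltan14}, and then feed the resulting classical tail probabilities into the moderate-deviation bounds of \cref{subsec:moddev}.

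The first step is to convert $D_h^\epsilon$ into an information-spectrum quantity. Using one-shot bounds of Wang--Renner type \cite{wang10}, one has, for any $\delta > 0$,
\begin{align*}
D_s^{\epsilon - \delta}(\rho\|\sigma) - \log\tfrac{1}{\delta} \;\leq\; D_h^{\epsilon}(\rho\|\sigma) \;\leq\; D_s^{\epsilon + \delta}(\rho\|\sigma) + \log\tfrac{1}{\delta},
\end{align*}
where $D_s^\epsilon$ captures the tail behaviour of the log-likelihood operator $\log\rho - \log\sigma$ under $\rho$. Choosing $\delta_n$ polynomially small (say $\delta_n = 1/n$), the slack $\log(1/\delta_n) = O(\log n)$ becomes $o(1/\sqrt n) = o(a_n)$ after dividing by $n$, using the defining property $\sqrt n\, a_n \to \infty$ of a moderate sequence; and the shifts $\epsilon_n \pm \delta_n$ remain within the moderate deviations regime.

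For tensor-product states the operator $T_n := \sum_{i=1}^n(\log\rho_i - \log\sigma_i)$ is diagonal in the product eigenbasis of the single-copy $T_i$'s, and measuring it on $\bigotimes_i \rho_i$ returns a sum of independent classical random variables $X_i$ with $\mathbb{E}[X_i] = D(\rho_i\|\sigma_i)$ and $\Var[X_i] = V(\rho_i\|\sigma_i)$. The assumption that $\lambda_{\min}(\sigma_i)$ is uniformly bounded below ensures each $T_i$ has bounded operator norm, hence the third-moment and cumulant-derivative conditions of \cref{lem:moddev lower,lem:moddev upper} hold uniformly in $i$; together with the uniform lower bound on $V(\rho_i\|\sigma_i)$, both tail bounds apply. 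Setting the spectral threshold at $\mu_n = n\big(D_n - \sqrt{2V_n}\,a_n + o(a_n)\big)$ then matches the target error probability $\Pr_\rho[T_n \leq \mu_n] = \epsilon_n = e^{-na_n^2}$ from both sides; on the $\sigma$-side, the Neyman--Pearson-type inequality $\Tr\sigma^{\otimes n}\{T_n \geq \mu_n\} \leq e^{-\mu_n}$ converts this into the required sandwich on $D_h^{\epsilon_n}$. The $(1-\epsilon_n)$ statement follows symmetrically, either by running the argument on the opposite tail or by invoking \cref{lem:reverse}.

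The chief obstacle is non-commutativity between $T_n$ and $\sigma^{\otimes n}$: the classical bound $\Tr\sigma\{T \geq \mu\} \leq e^{-\mu}$ does not hold verbatim quantumly and must be replaced by a Hayashi pinching argument (introducing a polynomial-in-$n$ correction that contributes only $o(a_n)$ after log-normalising and dividing by $n$), or by a Nagaoka- or Audenaert--Nussbaum--Szabo-type operator inequality. A secondary subtlety is ensuring that the moderate-deviation asymptotics are uniform in the sequence $\{\rho_i,\sigma_i\}$ in the non-uniform product case, which is precisely the content of the uniform lower-bound hypotheses on $\lambda_{\min}(\sigma_i)$ and $V(\rho_i\|\sigma_i)$.
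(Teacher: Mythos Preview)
Your high-level plan (reduce to a classical information-spectrum quantity, then feed into the moderate-deviation tail bounds of \cref{subsec:moddev}) matches the paper. The main divergence is in \emph{which} classical reduction you use.

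The paper does not work with the spectral measure of $T=\log\rho-\log\sigma$ at all. Instead it invokes \cref{lem:infospecdiv} (Theorem~14 of \cite{tomamichel12}), which sandwiches $D_{\mathrm h}^\epsilon(\rho\|\sigma)$ by the \emph{classical} information-spectrum divergence $D_{\mathrm s}^{c\epsilon}(P^{\rho,\sigma}\|Q^{\rho,\sigma})$ of the Nussbaum--Szko\l a distributions, with error terms that are already multiplicative in $\epsilon$ (so they contribute $O(na_n^2)+O(\log n)=o(na_n)$). Because $P^{\rho,\sigma}$ and $Q^{\rho,\sigma}$ tensorise and satisfy $D(P\|Q)=D(\rho\|\sigma)$, $V(P\|Q)=V(\rho\|\sigma)$ exactly, the log-likelihood ratios $Z_i$ are genuinely independent with the right mean and variance, and \cref{lem:boundedcumulants} gives the cumulant control needed for \cref{lem:moddev lower,lem:moddev upper}. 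No pinching or operator inequality is required, and the non-commutativity obstacle you flag simply does not arise.

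Your alternative route via the spectrum of $T_n$ plus a pinching correction is plausible but carries two real costs you have not addressed. First, pinching $\rho^{\otimes n}$ by the eigenspaces of $\sigma^{\otimes n}$ destroys the product structure, so the independence that the moderate-deviation lemmas rely on is no longer immediate; pinching copy-wise by $\sigma$ preserves independence but shifts the mean and variance away from $D(\rho\|\sigma)$ and $V(\rho\|\sigma)$, which would spoil the leading constants. Second, your specific slack $\delta_n=1/n$ fails outright: since $\epsilon_n=e^{-na_n^2}$ with $na_n^2\to\infty$, one has $\epsilon_n=o(1/n)$ for any moderate sequence with $na_n^2/\log n\to\infty$ (e.g.\ $a_n=n^{-1/4}$), so $\epsilon_n-\delta_n<0$. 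The correct scaling is $\delta_n=\Theta(\epsilon_n)$, which gives $\log(1/\delta_n)=na_n^2+O(1)=o(na_n)$; this is exactly why the paper's bounds in \cref{lem:infospecdiv} use $\epsilon/2$ and $2\epsilon$ rather than additive shifts.
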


In Sect.~\ref{subsec:inward} we will bound the regularised hypothesis testing divergences towards the relative entropy (the \emph{inward bound}), and in Sect.~\ref{subsec:outward} we will bound them away (the \emph{outward bound}).

\begin{remark}
	For sequences $\eps_n$ bounded away from zero and one the second-order expansion in Refs.~\cite{li12,tomamichel12} yields 
	\begin{align}
		\frac{1}{n}D_{\mathrm{h}}^{\epsilon_n}\left( \rho^{\otimes n} \middle\|\sigma^{\otimes n}\right) = D(\rho\|\sigma) + \sqrt{\frac{V(\rho\|\sigma)}{n}}\, \Phi^{-1}(\eps_n)  + O\left(\frac{\log n}{n}\right) , \label{eq:hypo-so}
	\end{align}
	where $\Phi$ denotes the cumulative distribution function of the standard normal.
	As already pointed out in Ref.~\cite{polyanskiy10c}, for small $\eps_n$ we have $\Phi^{-1}(\eps_n) \approx \sqrt{- 2 \ln \eps_n}$. Ignoring all higher order terms, the substitution $\eps_n = e^{-n a_n^2}$ into~\eqref{eq:hypo-so} then recovers the expression in~\eqref{eq:hypo-mod}. In this sense the two results thus agree at the boundary between small and moderate deviations.
\end{remark}

\begin{remark}
	A similar argument can be sketched at the boundary between moderate and large 
	deviations.
	The quantum Hoeffding bound~\cite{nagaoka06,hayashi07} states that if $\frac{1}{n}D_{\mathrm{h}}^{\epsilon_n}(\rho^{\otimes n}\|\sigma^{\otimes n})\leq D(\rho\|\sigma) - r$ for some small $r > 0$ then $\eps_n$ drops exponentially in $n$ with the exponent given by
	\begin{align}
		\sup_{0\leq\alpha<1}\frac{\alpha-1}{\alpha}\Bigl[D(\rho\|\sigma)-r-D_{\alpha}(\rho\|\sigma)\Bigr] , \label{eqn:expr1}
	\end{align}
	where $D_{\alpha}(\rho\|\sigma)$ is the Petz' quantum R\'enyi relative entropy~\cite{petz86}. For sufficiently small $r$, the expression in~\eqref{eqn:expr1} attains its supremum close to $\alpha = 1$ and we can thus approximate $D_{\alpha}(\rho\|\sigma) \approx D(\rho\|\sigma) + \frac{\alpha-1}{2} V(\rho\|\sigma)$ by its Taylor expansion~\cite{lintomamichel14}. Evaluating this approximate expression yields 
	\begin{align}
		\eps_n = e^{- n\frac{r^2}{2V(\rho\|\sigma)}} \,. \label{eqn:expr2}
	\end{align} 
	up to leading order in $r$. Substituting $r = \sqrt{2V(\rho\|\sigma)}a_n$ into~\eqref{eqn:expr2} then recovers~\eqref{eq:hypo-mod}.
	An essentially equivalent argument is also applicable to the strong converse exponent derived in Ref.~\cite{mosonyiogawa13}.
\end{remark}

\subsection{Nussbaum--Szko\l a distributions}

To allow us to apply a moderate deviation analysis to the quantum hypothesis testing divergence, we leverage the results of Ref.~\cite{tomamichel12} which allow us to reduce the hypothesis testing divergence of quantum states to a quantity known as the information spectrum divergence of certain classical distributions, known as the Nussbaum--Szko\l a distributions. 

\begin{defn}[Nussbaum--Szko\l a distributions~\cite{NussbaumSzkola2009}]
	The \emph{Nussbaum--Szko\l a distributions} for a pair of states $\rho$ and $\sigma$ are given by
	\begin{align}
		P^{\rho,\sigma}(a,b)=r_a\abs{\braket{\phi_a}{\psi_b}}^2\qquad\text{and}\qquad Q^{\rho,\sigma}(a,b)=s_b\abs{\braket{\phi_a}{\psi_b}}^2
	\end{align}
	where the states are eigendecomposed as $\rho=\sum_ar_a\ketbra{\phi_a}{\phi_a}$ and $\sigma=\sum_bs_b\ketbra{\psi_b}{\psi_b}$. 
\end{defn}

The power of the Nussbaum--Szko\l a distributions lies in their ability to reproduce both the divergence and variance of the underlying quantum states
\begin{align}
	D(\rho\|\sigma)=D(P^{\rho,\sigma}\|Q^{\rho,\sigma})
	,
	\qquad \text{and}\qquad 
	V(\rho\|\sigma)=V(P^{\rho,\sigma}\|Q^{\rho,\sigma})
	.
\end{align}
As well as capturing these asymptotic quantities, the hypothesis testing relative entropy, which arises one-shot channel coding  bounds, can also be captured by the Nussbaum--Szko\l a distributions. Specifically this is done via the \emph{information spectrum divergence}, which is defined for two classical distributions $P$ and $Q$ by a tail bound on the log-likelihood ratio as
\begin{align}
	D_{\mathrm{s}}^{\epsilon}(P\|Q):=\sup\left\lbrace R ~\middle|~ \Pr_{X\leftarrow P}\left[\log \frac{P(X)}{Q(X)}\leq R\right]\leq \epsilon \right\rbrace.
\end{align}
Inserting the Nussbaum--Skzo\l a distributions, we find that the (classical) information spectrum divergence approximates the (quantum) hypothesis testing divergence.
\begin{lem}[Thm.~14, Ref.~\cite{tomamichel12}]
	\label{lem:infospecdiv}
	There exists a universal constant $K$ such that for any states $\rho$ and $\sigma$ with $\lambda_{\min}(\sigma)\geq \lambda$ and $\epsilon< 1/2$, we find that $D_{\mathrm{h}}^\epsilon(\rho\|\sigma)$ is bounded as
	\begin{align}
		D_{\mathrm{h}}^\epsilon(\rho\|\sigma)&\leq D_{\mathrm{s}}^{2\epsilon}(P^{\rho,\sigma}\|Q^{\rho,\sigma})+\log\frac{1-\epsilon}{\epsilon^3(1-2\epsilon)}+\log K \lceil \ln(1/\lambda)\rceil\\
		D_{\mathrm{h}}^\epsilon(\rho\|\sigma)&\geq D_{\mathrm{s}}^{\epsilon/2}(P^{\rho,\sigma}\|Q^{\rho,\sigma})-\log\frac{1}{\epsilon(1-\epsilon)}-\log K \lceil \ln(1/\lambda)\rceil,
	\end{align}
	and $D_{\mathrm{h}}^{1-\epsilon}(\rho\|\sigma)$ is bounded as
	\begin{align}
		D_{\mathrm{h}}^{1-\epsilon}(\rho\|\sigma)&\leq D_{\mathrm{s}}^{1-\epsilon/2}(P^{\rho,\sigma}\|Q^{\rho,\sigma})+\log\frac{1-\epsilon/2}{\epsilon^4}+\log K \lceil \ln(1/\lambda)\rceil\\
		D_{\mathrm{h}}^{1-\epsilon}(\rho\|\sigma)&\geq D_{\mathrm{s}}^{1-2\epsilon}(P^{\rho,\sigma}\|Q^{\rho,\sigma})-\log\frac{1}{\epsilon^2}-\log K \lceil \ln(1/\lambda)\rceil.
	\end{align}
\end{lem}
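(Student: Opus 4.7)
The plan is to reduce the quantum problem to a classical tail-probability estimate via the Nussbaum--Szko\l a distributions and \cref{lem:infospecdiv}, apply the moderate-deviation tail bounds of \cref{subsec:moddev} to the independent summands of the resulting log-likelihood ratio, and then use \cref{lem:reverse} to invert these ``probability given deviation'' estimates into the ``deviation given probability'' statements the theorem demands. The two uniform-product assertions follow from the non-uniform ones by taking $\rho_i\equiv\rho$ and $\sigma_i\equiv\sigma$, so I focus on the general case.

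Setting $P_i := P^{\rho_i,\sigma_i}$, $Q_i := Q^{\rho_i,\sigma_i}$, $P^{(n)}:=\bigotimes_i P_i$, and $Q^{(n)}:=\bigotimes_i Q_i$, \cref{lem:infospecdiv} (handled so that the spectral $\lceil\ln(1/\lambda)\rceil$ correction, with $\lambda$ the uniform lower bound on $\lambda_{\min}(\sigma_i)$, contributes only $o(na_n)$ overall) yields
\begin{align}
D_\mathrm{h}^{\epsilon_n}\!\left(\bigotimes_i\rho_i\,\middle\|\,\bigotimes_i\sigma_i\right) = D_\mathrm{s}^{\Theta(\epsilon_n)}\!\left(P^{(n)}\,\middle\|\,Q^{(n)}\right) + O(\log(1/\epsilon_n)) + o(na_n),
\end{align}
with an analogous statement in the $1-\epsilon_n$ regime. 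Since $\log(1/\epsilon_n) = O(na_n^2) = o(na_n)$ as $a_n \searrow 0$, all additive corrections become $o(a_n)$ upon dividing by $n$, and it suffices to establish the stated expansions for $\tfrac{1}{n}D_\mathrm{s}^{\epsilon_n}(P^{(n)}\|Q^{(n)})$ and $\tfrac{1}{n}D_\mathrm{s}^{1-\epsilon_n}(P^{(n)}\|Q^{(n)})$.

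Because $P^{(n)}$ and $Q^{(n)}$ are products, the log-likelihood ratio $\log(P^{(n)}/Q^{(n)})$ is a sum of independent variables $Y_i := \log(P_i/Q_i)$ with $\mathbb{E}_{P_i}[Y_i] = D(\rho_i\|\sigma_i)$ and $\Var_{P_i}[Y_i] = V(\rho_i\|\sigma_i)$. The uniform bound $\lambda_{\min}(\sigma_i)\geq\lambda$ forces $|Y_i|\leq\log(1/\lambda)$, so the centred variables $X_{i,n} := Y_i - \mathbb{E} Y_i$ are uniformly bounded and satisfy (together with their negatives $-X_{i,n}$) the moment and cumulant hypotheses of \cref{lem:moddev lower} and \cref{lem:moddev upper}. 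Combined with $V_n\geq\nu>0$ from the uniform positivity of $V(\rho_i\|\sigma_i)$, this gives
\begin{align}
\ln\Pr\!\left[\pm\tfrac{1}{n}\sum_i X_{i,n}\geq t_n\right] = -\bigl(1+o(1)\bigr)\,\frac{n t_n^2}{2V_n}
\end{align}
for every moderate sequence $\{t_n\}$, simultaneously on both tails.

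By definition $\tfrac{1}{n}D_\mathrm{s}^{\epsilon_n}(P^{(n)}\|Q^{(n)}) = D_n - s_n^\star$, where $s_n^\star$ is the supremum of $s$ for which $\Pr[\tfrac{1}{n}\sum_i X_{i,n}\leq -s]\leq\epsilon_n$. The MD asymptotic above, rewritten as $\tfrac{V_n}{ns^2}\ln\Pr[\tfrac{1}{n}\sum_i X_{i,n}\leq -s]\to -\tfrac{1}{2}$ uniformly in moderate $s$, is exactly of the shape required by \cref{lem:reverse}: choosing $A_n(s)$ proportional to the log-probability and $B_n(s)$ proportional to $s^2$ and reversing yields $s_n^\star / (\sqrt{2V_n}\,a_n)\to 1$, whence $\tfrac{1}{n}D_\mathrm{s}^{\epsilon_n} = D_n - \sqrt{2V_n}\,a_n + o(a_n)$; the symmetric argument with the upper tail delivers the $1-\epsilon_n$ case. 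The main obstacle is arranging \cref{lem:reverse} so that the prescribed moderate sequence $\epsilon_n = e^{-na_n^2}$ on the probability side corresponds to a moderate sequence on the deviation side, i.e.\ that $s_n^\star$ is itself moderate; this requires an a priori two-sided sandwich $s_n^\star = \Theta(a_n)$, which one reads off directly from the MD estimates of Step two before invoking the reversing lemma.
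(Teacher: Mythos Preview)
Your proposal does not prove the stated lemma at all. The statement you were asked to establish is \cref{lem:infospecdiv}, which bounds the quantum hypothesis testing divergence $D_{\mathrm h}^\epsilon(\rho\|\sigma)$ above and below by the classical information spectrum divergence $D_{\mathrm s}$ of the Nussbaum--Szko\l a distributions, up to explicit additive corrections. Instead, you have sketched a proof of the \emph{main theorem} of Sect.~\ref{sec:hypo} (the moderate-deviation expansion of $\tfrac1n D_{\mathrm h}^{\epsilon_n}$), and in doing so you invoke \cref{lem:infospecdiv} itself as a black box in your very first step. This is circular: you cannot use the lemma you are meant to prove as an input to its own proof.

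Moreover, the paper does not supply a proof of \cref{lem:infospecdiv}; it is quoted from Ref.~\cite{tomamichel12} (Theorem~14 there) and used without argument. A genuine proof would require the one-shot machinery of that reference: relating $D_{\mathrm h}^\epsilon$ to the quantum information spectrum divergence $D_{\mathrm s}^\epsilon$ (in the sense of Nagaoka--Hayashi projectors), and then showing that the quantum $D_{\mathrm s}^\epsilon$ is approximated by the classical $D_{\mathrm s}^\epsilon$ of the Nussbaum--Szko\l a pair, with the $\log K\lceil\ln(1/\lambda)\rceil$ term arising from a discretisation of the spectrum of $\log\rho-\log\sigma$. None of this appears in your write-up.
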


As the information spectrum divergence is defined in terms of a tail bound, we will bound these quantities using the moderate deviation tail bounds of Sect.~\ref{subsec:moddev}. To do this, we will start by showing that the log-likelihood ratio of Nussbaum--Skzo\l a distributions is sufficiently well behaved, specifically that its cumulant generating function has bounded derivatives.

\begin{lem}[Bounded cumulants]
	\label{lem:boundedcumulants}
	For $\lambda>0$, there exists constants $C_k(\lambda)$ such that the cumulant generating function $h(t):=\ln \mathbb{E}\left[e^{tZ}\right]$ of the log-likelihood ratio $Z:=\log P^{\rho,\sigma}/Q^{\rho,\sigma}$ for $\lambda_{\min}(\sigma)\geq \lambda$ is smooth and has uniformly bounded derivatives in a neighbourhood of the origin
	\begin{align}
	\sup_{\abs{t}\leq 1/2}\abs{\frac{\partial^k}{\partial t^k}h(t)} \leq C_k.
	\end{align}
\end{lem}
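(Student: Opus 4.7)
The first step is to put the cumulant generating function into a tractable closed form. Expanding the expectation over the Nussbaum--Szko\l a distribution via the eigendecompositions of $\rho$ and $\sigma$, the overlap factors $\abs{\braket{\phi_a}{\psi_b}}^{2(1+u)-2u}$ collapse and one obtains
\begin{align}
    g(t) := \mathbb{E}\bigl[e^{tZ}\bigr] = \sum_{a,b} r_a^{1+u}\, s_b^{-u}\,\abs{\braket{\phi_a}{\psi_b}}^2 = \Tr\!\left(\rho^{1+u}\sigma^{-u}\right),
\end{align}
where $u = t/\ln b$ with $b$ the base of $\log$; since this rescaling merely renormalises the constants, I will henceforth write $u=t$ for clarity. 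For $|t|\leq 1/2$ one has $1+t\in[1/2,3/2]$, so $\rho^{1+t}$ is well defined on $\mathrm{supp}(\rho)$, and $\lambda_{\min}(\sigma)\geq \lambda>0$ makes $\sigma^{-t}$ entire as an operator-valued function of $t$. Holomorphic functional calculus then shows that $g$ is real-analytic on a complex neighbourhood of $[-1/2,1/2]$, so $h=\ln g$ is smooth there provided $g>0$.

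To secure the lower bound on $g$, I would exploit the convexity of $h$ (cumulant generating functions are log-convex). Since $h(0)=0$ and $h'(0)=D(\rho\|\sigma)$, the supporting-line inequality gives $h(t)\geq D(\rho\|\sigma)\, t$, while the standard estimate $D(\rho\|\sigma)\leq \log\lambda_{\min}(\sigma)^{-1}\leq \log(1/\lambda)$ controls the slope. Combining these yields a uniform lower bound $g(t)\geq c(\lambda)>0$ on $[-1/2,1/2]$, which is the positivity needed for $\ln g$ to be smooth there.

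The main effort lies in bounding the derivatives of $g$. Via the functional-calculus identities $\frac{\mathrm{d}}{\mathrm{d}t}\rho^{1+t} = \rho^{1+t}\ln\rho$ and $\frac{\mathrm{d}}{\mathrm{d}t}\sigma^{-t} = -\sigma^{-t}\ln\sigma$ applied to each factor separately, Leibniz's rule gives
\begin{align}
    g^{(k)}(t) = \sum_{j=0}^{k}\binom{k}{j}(-1)^{k-j}\,\Tr\!\left(\rho^{1+t}(\ln\rho)^{j}\,\sigma^{-t}(\ln\sigma)^{k-j}\right).
\end{align}
Each summand is tamed factor-by-factor: $\sigma^{-t}(\ln\sigma)^{k-j}$ has operator norm bounded purely in terms of $\lambda$ and $k$ (the spectrum of $\sigma$ lies in $[\lambda,1]$), while $\rho^{1+t}(\ln\rho)^{j}$ has bounded trace norm because the scalar function $x\mapsto x^{1+t}\abs{\ln x}^{j}$ is continuous and bounded on $(0,1]$ whenever $1+t\geq 1/2$. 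H\"older's inequality then yields $|g^{(k)}(t)|\leq \widetilde C_k(\lambda)$ uniformly for $|t|\leq 1/2$.

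Finally, Fa\`a di Bruno's formula expresses $h^{(k)}$ as a polynomial in the ratios $g^{(1)}/g,\ldots,g^{(k)}/g$, so combining the upper bounds on the $g^{(j)}$ with the lower bound on $g$ produces the claimed constants $C_k(\lambda)$. The only conceptually delicate point is ensuring that the blow-up of $\ln\rho$ on the kernel of $\rho$ causes no trouble; this is precisely where the hypothesis $1+t\geq 1/2$ enters, since $x^{1/2}\abs{\ln x}^{j}\to 0$ as $x\to 0^+$ and hence $\rho^{1+t}(\ln\rho)^{j}$ extends to a bounded operator on all of $\cH$.
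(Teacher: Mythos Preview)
Your proof is correct and follows the same three-step skeleton as the paper: (i) bound the moment generating function $m(t)=\mathbb{E}[e^{tZ}]$ away from zero on $[-1/2,1/2]$, (ii) bound each $|m^{(k)}(t)|$ uniformly on that interval, and (iii) convert to bounds on the derivatives of $h=\ln m$ via the chain rule (the paper writes out the first few derivatives explicitly; you invoke Fa\`a di Bruno). Where you differ is in packaging. You recast $m(t)$ as $\Tr(\rho^{1+t}\sigma^{-t})$ and separate the $\rho$- and $\sigma$-dependent factors by H\"older's inequality, whereas the paper stays with the explicit Nussbaum--Szko\l a sum $\sum_{a,b}r_a|\langle\phi_a|\psi_b\rangle|^2(r_a/s_b)^t$ and bounds the summand directly; both routes ultimately rest on the same scalar estimate $\sup_{x\in(0,1]}\sqrt{x}\,|\ln x|^{j}=(2j/e)^{j}$. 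For the lower bound on $m$, your convexity argument (the tangent inequality $h(t)\geq h'(0)\,t$ combined with $h'(0)=D(\rho\|\sigma)\leq\log(1/\lambda)$) is a cleaner substitute for the paper's direct manipulation of the sum, and yields the same kind of uniform positive lower bound depending only on $\lambda$. Neither variant buys anything the other does not; they are equivalent executions of the same strategy.
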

We present a proof of this lemma in \cref{app:cumulant}.

\subsection{Inward bound}
\label{subsec:inward}
\begin{prop}[Inward bound]
	\label{prop:HTD-inward}
	For any constants $\lambda,\eta>0$, there exists a constant $N(\lbrace a_i\rbrace,\lambda,\eta)$ such that, for $n\geq N$, the hypothesis testing divergence can be bounded for any states $\lbrace\rho_i\rbrace_i$ and $\lbrace \sigma_i\rbrace_i$ with
	$\lambda_{\min}(\sigma_i)\geq\lambda$ as
	\begin{align}
		\frac{1}{n}D_{\mathrm{h}}^{\epsilon_n}\left( \bigotimes_{i=1}^n\rho_i \middle\|\,\bigotimes_{i=1}^n\sigma_i\right)&\geq D_n-\sqrt{2V_n}a_n-\eta a_n,\\
		\frac{1}{n}D_{\mathrm{h}}^{1-\epsilon_n}\left( \bigotimes_{i=1}^n\rho_i \middle\|\,\bigotimes_{i=1}^n\sigma_i\right)&\leq D_n+\sqrt{2V_n}\,a_n+\eta a_n.
	\end{align}
	where $D_n:=\frac{1}{n}\sum_{i=1}^{n}D(\rho_i\|\sigma_i)$ and $V_n:=\frac{1}{n}\sum_{i=1}^{n}V(\rho_i\|\sigma_i)$.
\end{prop}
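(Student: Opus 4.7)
The plan is to reduce the quantum statement to a classical moderate-deviation estimate via the Nussbaum--Szkoła correspondence (Lemmas~\ref{lem:infospecdiv}--\ref{lem:boundedcumulants}), and then apply the moderate-deviation upper bound (Lemma~\ref{lem:moddev upper}) to a sum of independent centred log-likelihood ratios.

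Set $P_i := P^{\rho_i,\sigma_i}$, $Q_i := Q^{\rho_i,\sigma_i}$, and let $Z_i$ be distributed as $\log P_i(X)/Q_i(X)$ with $X\sim P_i$. The defining Nussbaum--Szkoła identities give $\mathbb{E}[Z_i]=D(\rho_i\|\sigma_i)$ and $\Var[Z_i]=V(\rho_i\|\sigma_i)$, so $\sum_{i=1}^n Z_i$ has mean $nD_n$ and variance $nV_n$. The uniform assumption $\lambda_{\min}(\sigma_i)\geq\lambda$ implies that $V_n$ is bounded above by some $V_\lambda=V_\lambda(\lambda)$, and Lemma~\ref{lem:boundedcumulants} yields a uniform bound on $\sup_{s\in[0,1/2]}|\partial_s^3 \ln\mathbb{E}[e^{sX_{i,n}}]|$ for either sign choice $X_{i,n}=\pm(Z_i-D(\rho_i\|\sigma_i))$, since deterministic shifts do not affect third cumulants and the interval $|s|\leq 1/2$ in Lemma~\ref{lem:boundedcumulants} is symmetric. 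Hence the hypothesis of Lemma~\ref{lem:moddev upper} holds in both cases.

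For the first inequality, Lemma~\ref{lem:infospecdiv} gives
\begin{align*}
D_{\mathrm{h}}^{\epsilon_n}\Bigl(\bigotimes_i \rho_i \,\Big\|\, \bigotimes_i \sigma_i\Bigr) \geq D_{\mathrm{s}}^{\epsilon_n/2}\Bigl(\bigotimes_i P_i \,\Big\|\, \bigotimes_i Q_i\Bigr) - \log\tfrac{1}{\epsilon_n(1-\epsilon_n)} - \log K\lceil \ln(1/\lambda) \rceil,
\end{align*}
whose correction terms total $\log(1/\epsilon_n) + O(1) = \Theta(n a_n^2) = o(n a_n)$, hence absorbable into the target slack. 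Fix slacks $0<\eta_2<\eta_1^2$ with $\eta_1<\eta/2$, set $R := n(D_n - (\sqrt{2V_n}+\eta_1)a_n)$, and apply Lemma~\ref{lem:moddev upper} to $X_{i,n}:=D(\rho_i\|\sigma_i)-Z_i$ with slack $\eta_2$: for $n$ sufficiently large,
\begin{align*}
\Pr\!\Bigl[\textstyle\sum_i Z_i \leq R\Bigr] \leq \exp\!\Bigl(-n(\sqrt{2V_n}+\eta_1)^2 a_n^2 \big/ (2V_n+\eta_2)\Bigr) \leq \exp\bigl(-(1+c)n a_n^2\bigr) \leq \epsilon_n/2,
\end{align*}
since $(\sqrt{2V_n}+\eta_1)^2 - (2V_n+\eta_2) \geq \eta_1^2-\eta_2 > 0$ produces a uniform gap $c := (\eta_1^2-\eta_2)/(2V_\lambda+\eta_2)>0$ and $c n a_n^2\to\infty$. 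Thus $D_{\mathrm{s}}^{\epsilon_n/2}\geq R$, yielding the first bound after absorbing the $o(na_n)$ residuals.

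The second inequality is symmetric: the upper bound in Lemma~\ref{lem:infospecdiv} reduces the claim to upper-bounding $D_{\mathrm{s}}^{1-\epsilon_n/2}$, which amounts to showing $\Pr[\sum_i Z_i > R']<\epsilon_n/2$ for $R' := n(D_n+(\sqrt{2V_n}+\eta_1)a_n)$, an identical moderate-deviation computation now applied to $X_{i,n}:=Z_i-D(\rho_i\|\sigma_i)$. The principal technical difficulty is organisational rather than conceptual: the slacks $\eta_1,\eta_2$ must be chosen in the correct order relative to $\eta$ (first fix $\eta_1<\eta/2$, then $\eta_2<\eta_1^2$), and $N(\{a_i\},\lambda,\eta)$ must be taken large enough that (i) Lemma~\ref{lem:moddev upper} applies, (ii) $cn a_n^2 \geq \ln 2$, and (iii) the Lemma~\ref{lem:infospecdiv} residual falls below $n\eta a_n/2$.
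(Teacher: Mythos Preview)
Your proposal is correct and follows essentially the same route as the paper's proof: reduce via Lemma~\ref{lem:infospecdiv} to the information spectrum divergence of the Nussbaum--Szko\l a distributions, absorb the additive corrections as $o(na_n)$, and then invoke the moderate-deviation upper bound (Lemma~\ref{lem:moddev upper}) on the centred log-likelihood sums $\pm(Z_i-D(\rho_i\|\sigma_i))$ to force the tail probability below $\epsilon_n/2$. Your two-parameter slack $(\eta_1,\eta_2)$ is a cosmetic repackaging of the paper's choice $t_n=(\sqrt{2V_n}+\eta/2)a_n$ together with the Lemma~\ref{lem:moddev upper} slack $\eta^2/3$. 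One small slip: when Lemma~\ref{lem:infospecdiv} is applied to the $n$-fold product, the minimum eigenvalue of $\bigotimes_i\sigma_i$ is $\geq\lambda^n$, so the last correction term is $\log K\lceil n\ln(1/\lambda)\rceil=O(\log n)$ rather than $O(1)$; this is harmless since $\log n=o(na_n)$ as well.
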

\begin{proof}
	Firstly, let $Z_i$ be the log-likelihood ratios 
	\begin{align}
	Z_i:=\log\frac{P^{\rho_i,\sigma_i}(A_i,B_i)}{Q^{\rho_i,\sigma_i}(A_i,B_i)}, \qquad (A_i,B_i)\leftarrow P^{\rho_i,\sigma_i}.
	\end{align}
	In terms of these log-likelihood ratios, the lower and upper bound on the $\epsilon_n$- and $(1-\epsilon_n)$-hypothesis testing divergences respectively from \cref{lem:infospecdiv} become
	\begin{align}
		D_{\mathrm{h}}^{\epsilon_n}\left( \bigotimes_{i=1}^n\rho_i \middle\|\,\bigotimes_{i=1}^n\sigma_i\right)&\geq \sup\left\lbrace R~\middle|~ \Pr\Biggl[\sum_{i=1}^n Z_i\leq R\Biggr]\leq \epsilon_n/2 \right\rbrace-\log \frac{1}{\epsilon_n(1-\epsilon_n)}-\log Kn\lceil\ln(1/\lambda) \rceil,\\
		D_{\mathrm{h}}^{1-\epsilon_n}\left( \bigotimes_{i=1}^n\rho_i \middle\|\,\bigotimes_{i=1}^n\sigma_i\right)&\leq \sup\left\lbrace R~\middle|~ \Pr\Biggl[\sum_{i=1}^n Z_i\leq R\Biggr]\leq 1-\epsilon_n/2 \right\rbrace+\log \frac{1-\epsilon_n/2}{\epsilon_n^4}+\log Kn\lceil\ln(1/\lambda) \rceil.
	\end{align}
	
	Recalling that $\epsilon_n:=e^{-na_n^2}$, we can see that in both cases the error terms scale like $\Theta(na_n^2)$ and $\Theta(\log n)$ respectively, which are both $o(na_n)$. As such, there must exist an $N_1(\lbrace a_i\rbrace,\lambda,\eta)$ such that, for $n\geq N_1$, these error terms are bounded by $\eta na_n/2$ as
	\begin{align}
		\frac{1}{n}D_{\mathrm{h}}^{\epsilon_n}\left( \bigotimes_{i=1}^n\rho_i \middle\|\,\bigotimes_{i=1}^n\sigma_i\right)&\geq \frac{1}{n}\sup\left\lbrace R~\middle|~ \Pr\Biggl[\sum_{i=1}^n Z_i\leq R\Biggr]\leq \epsilon_n/2 \right\rbrace-\eta a_n/2,\\
		\frac{1}{n}D_{\mathrm{h}}^{1-\epsilon_n}\left( \bigotimes_{i=1}^n\rho_i \middle\|\,\bigotimes_{i=1}^n\sigma_i\right)&\leq \frac{1}{n}\sup\left\lbrace R~\middle|~ \Pr\Biggl[\sum_{i=1}^n Z_i\leq R\Biggr]\leq 1-\epsilon_n/2 \right\rbrace+\eta a_n /2.
	\end{align}
	
	Next we want to apply the tail bounds of Sect.~\ref{subsec:moddev}. To this end, we will start by defining zero-mean variables $X_i:=Z_i-D(\rho_i\|\sigma_i)$. In terms of these variables, the above bounds take the form
	\begin{align}
		\frac{1}{n}D_{\mathrm{h}}^{\epsilon_n}\left( \bigotimes_{i=1}^n\rho_i \middle\|\,\bigotimes_{i=1}^n\sigma_i\right)&\geq D_n-\inf\left\lbrace t~\middle|~ \Pr\Biggl[\frac{1}{n}\sum_{i=1}^n (-X_i)\geq t\Biggr]\leq \epsilon_n/2 \right\rbrace-\eta a_n/2,\\
		\frac{1}{n}D_{\mathrm{h}}^{1-\epsilon_n}\left( \bigotimes_{i=1}^n\rho_i \middle\|\,\bigotimes_{i=1}^n\sigma_i\right)&\leq D_n+\inf\left\lbrace t~\middle|~ \Pr\Biggl[\frac{1}{n}\sum_{i=1}^n \left(+X_i\right)\geq  t\Biggr]\leq \epsilon_n/2 \right\rbrace+\eta a_n/2.
	\end{align}
	
	By \cref{lem:boundedcumulants} there exists constants $\bar V(\lambda)$ and $\gamma(\lambda)$, such that $V_i\leq \bar V$ and
	\begin{align}
	\sup_{t\in[0,1/2]}\abs{\frac{\mathrm{d}^3}{\mathrm{d}s^3}\ln\mathbb{E}\left[e^{s(\pm X_i)}\right]}\leq \gamma
	\end{align}
	for all $i$. If we let $t_n:=\left(\sqrt{2V_n}+\eta/2\right)a_n$, then \cref{lem:moddev upper} gives an $N_2(\lbrace a_i\rbrace,\lambda,\eta)$ such that, for $n\geq N_2$, the tail probability is bounded as
	\begin{align}
		\ln\Pr\Biggl[\frac{1}{n}\sum_{i=1}^n (\pm X_i)\geq t_n\Biggr]
		&\leq \frac{-nt_n^2}{2V_n+\eta^2 /3}\\
		&\leq -\frac{\left(\sqrt{2V_n}+\eta/2\right)^2}{2V_n+\eta^2/5}na_n^2\\
		&\leq -\frac{2V_n+\eta^2/4}{2V_n+\eta^2/5}na_n^2\\
		&= -\left(1+\frac{\eta^2}{40V_n+4\eta^2}\right)na_n^2\\
		&\leq -\left(1+\frac{\eta^2}{40\bar{V}+4\eta^2}\right)na_n^2.
	\end{align}
	As $\eta^2/(40\bar{V}+4\eta)$ is a constant and $na_n^2\to \infty$, there must exist a constant $N_3(\lbrace a_i\rbrace,\lambda,\eta)$ such $n\geq N_3$ implies
	\begin{align}
		\ln\Pr\Biggl[\frac{1}{n}\sum_{i=1}^n (\pm X_i)\geq t_n\Biggr]\leq -na_n^2-1=\ln (\epsilon_n/2),
	\end{align}
	and therefore that 
	\begin{align}
		\inf\left\lbrace t~\middle|~ \Pr\Biggl[\frac{1}{n}\sum_{i=1}^n (\pm X_i)\geq t\Biggr]\leq \epsilon_n/2 \right\rbrace\leq t_n.
	\end{align}
	Putting everything together, we get that for any $n\geq N(\lbrace a_i\rbrace,\lambda,\eta):=\max\lbrace N_1,N_2,N_3\rbrace$ we have
	\begin{align}
		\frac{1}{n}D_{\mathrm{h}}^{\epsilon_n}\left( \bigotimes_{i=1}^n\rho_i \middle\|\,\bigotimes_{i=1}^n\sigma_i\right)&\geq D_n-\sqrt{2V_n}\,a_n-\eta a_n,\\
		\frac{1}{n}D_{\mathrm{h}}^{1-\epsilon_n}\left( \bigotimes_{i=1}^n\rho_i \middle\|\,\bigotimes_{i=1}^n\sigma_i\right)&\leq D_n+\sqrt{2V_n}\,a_n+\eta a_n.
	\end{align}
	as required.
\end{proof}

\subsection{Outward bound}
\label{subsec:outward}

\begin{prop}[Outward bound]
	\label{prop:HTD-outward}
	For any constants $\lambda,\eta>0$, there exists a constant $N(\lbrace a_i\rbrace,\lambda,\eta)$ such that, for $n\geq N$, the hypothesis testing divergence can be bounded for any states $\lbrace\rho_i\rbrace_i$ and $\lbrace\sigma_i\rbrace_i$ with
	$\lambda_{\min}(\sigma_i)\geq\lambda$ as
	\begin{align}
		\frac{1}{n}D_{\mathrm{h}}^{\epsilon_n}\left( \bigotimes_{i=1}^n\rho_i \middle\|\,\bigotimes_{i=1}^n\sigma_i\right)&\leq D_n+\eta a_n,\\
		\frac{1}{n}D_{\mathrm{h}}^{1-\epsilon_n}\left( \bigotimes_{i=1}^n\rho_i \middle\|\,\bigotimes_{i=1}^n\sigma_i\right)&\geq D_n-\eta\,a_n.
	\end{align}
	where $D_n:=\frac{1}{n}\sum_{i=1}^{n}D(\rho_i\|\sigma_i)$. Moreover, if we 
	let $V_n:=\frac{1}{n}\sum_{i=1}^{n}V(\rho_i\|\sigma_i)$, and there also 
	exists a constant $\nu>0$ such that $V_i\geq \nu$ for all $i$, then there 
	exists an $N'(\lbrace a_i\rbrace,\lambda,\nu,\eta)$ such that, for $n\geq 
	N'$, the hypothesis testing divergence is more tightly bounded as
	\begin{align}
		\frac{1}{n}D_{\mathrm{h}}^{\epsilon_n}\left( \bigotimes_{i=1}^n\rho_i \middle\|\,\bigotimes_{i=1}^n\sigma_i\right)&\leq D_n-\sqrt{2V_n}a_n+\eta a_n,\\
		\frac{1}{n}D_{\mathrm{h}}^{1-\epsilon_n}\left( \bigotimes_{i=1}^n\rho_i \middle\|\,\bigotimes_{i=1}^n\sigma_i\right)&\geq D_n+\sqrt{2V_n}\,a_n-\eta a_n.
	\end{align}
\end{prop}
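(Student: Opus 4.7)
The plan is to mirror the inward-bound proof (Proposition~\ref{prop:HTD-inward}) but to invoke \cref{lem:infospecdiv} in the opposite direction and to replace the moderate-deviation upper bound (\cref{lem:moddev upper}) with the lower bound (\cref{lem:moddev lower}). Specifically, I would use
\[ D_{\mathrm{h}}^{\epsilon_n}(\rho^{\otimes n}\|\sigma^{\otimes n}) \leq D_{\mathrm{s}}^{2\epsilon_n}(P^{\otimes n}\|Q^{\otimes n}) + \log\tfrac{1-\epsilon_n}{\epsilon_n^3(1-2\epsilon_n)} + \log K\lceil \ln(1/\lambda)\rceil \]
together with the analogous lower bound on $D_{\mathrm{h}}^{1-\epsilon_n}$ in terms of $D_{\mathrm{s}}^{1-2\epsilon_n}$. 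With $\epsilon_n = e^{-na_n^2}$ the error terms are $\Theta(na_n^2) + \Theta(\log n) = o(na_n)$ and are absorbable into half of the $\eta a_n$ budget for $n$ beyond some $N_1$. Writing $X_i := Z_i - D(\rho_i\|\sigma_i)$ for the zero-mean log-likelihood ratios of the Nussbaum--Szko\l a pairs, the information-spectrum divergence reduces to
\[ \tfrac{1}{n}D_{\mathrm{s}}^{2\epsilon_n} = D_n + \sup\bigl\{s : \Pr[\tfrac{1}{n}\textstyle\sum_i X_i \leq s] \leq 2\epsilon_n\bigr\}, \]
with a symmetric expression for $D_{\mathrm{s}}^{1-2\epsilon_n}$. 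Upper-bounding $D_{\mathrm{h}}^{\epsilon_n}$ thus amounts to upper-bounding this supremum, which in turn requires a \emph{lower} bound on the CDF (or tail) of the normalised sum.

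For the tighter bound (with the assumption $V_i \geq \nu$) I set $t_n := (\sqrt{2V_n}-\delta)a_n$, a moderate sequence, and apply \cref{lem:moddev lower} to $\mp X_i$. Its hypotheses hold because \cref{lem:boundedcumulants} yields a uniform upper bound on the third absolute moment of the $X_i$, and $V_n\geq\nu$ by assumption. For any pre-chosen $\eta'>0$ and all sufficiently large $n$,
\[ \Pr\bigl[\tfrac{1}{n}\textstyle\sum (\mp X_i) \geq t_n\bigr] \geq \exp\bigl(-(1+\eta')\,nt_n^2/(2V_n)\bigr). \]
Expanding $t_n^2/(2V_n) = a_n^2\bigl(1 - \sqrt{2/V_n}\,\delta + \delta^2/(2V_n)\bigr)$ and using the uniform upper bound $V_n \leq \bar V$ from \cref{lem:boundedcumulants}, one may choose $\eta'$ small enough relative to $\delta$ and $\bar V$ that $(1+\eta')t_n^2/(2V_n) \leq (1-c)a_n^2$ for a definite constant $c>0$; since $na_n^2 \to \infty$, the exponent is eventually at most $na_n^2 - \ln 2$, and the tail probability exceeds $2\epsilon_n$. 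Consequently the supremum in $D_{\mathrm{s}}^{2\epsilon_n}$ (resp.\ $D_{\mathrm{s}}^{1-2\epsilon_n}$) lies above $-t_n$ is impossible, so it sits below $-t_n$ (resp.\ above $+t_n$); picking $\delta=\eta/2$ and combining with the infospec error completes the tighter claim.

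For the weaker part (no variance lower bound), \cref{lem:moddev lower} cannot be applied when $V_n$ degenerates, so I would argue directly via Chebyshev: since $V_n\leq\bar V$ uniformly (\cref{lem:boundedcumulants}),
\[ \Pr\bigl[\bigl|\tfrac{1}{n}\textstyle\sum_i X_i\bigr| \geq \eta a_n/2\bigr] \leq \frac{4\bar V}{n\eta^2 a_n^2} \longrightarrow 0, \]
using $na_n^2\to\infty$. Hence the CDF at $\eta a_n/2$ tends to $1$, eventually dominating $2\epsilon_n\to 0$, which forces both relevant suprema to sit inside $[-\eta a_n/2,\eta a_n/2]$ for $n$ large enough; a symmetric argument handles $D_{\mathrm{h}}^{1-\epsilon_n}$. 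The main technical obstacle is the constant-chasing in the tighter part: one must show that $\eta'$ can be chosen uniformly over all admissible $V_n \in [\nu,\bar V]$ so that the bracket $1-\sqrt{2/V_n}\,\delta+\delta^2/(2V_n)$ is bounded away from $1/(1+\eta')$ by a definite margin. This reduces to a one-parameter quadratic inequality in $\delta$ that is easily solved by fixing $\delta$ first (so that $\sqrt{2/V_n}\,\delta$ dominates $\delta^2/(2V_n)$ uniformly) and then shrinking $\eta'$ accordingly.
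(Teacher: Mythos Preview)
Your proposal is correct and follows essentially the same route as the paper: reduction via \cref{lem:infospecdiv} to tail probabilities of the centred Nussbaum--Szko\l a log-likelihoods, then \cref{lem:moddev lower} with $t_n=(\sqrt{2V_n}-\eta/2)a_n$ for the tighter bound and a Chebyshev argument for the weaker one. The only cosmetic difference is that the paper applies Chebyshev at $-2\sqrt{V_n/n}$ (two standard deviations below the mean, then invokes $\sqrt{V_n/n}=o(a_n)$) rather than directly at $\eta a_n/2$ as you do; both variants yield the required conclusion.
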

\begin{proof}
	Similar to \cref{prop:HTD-inward}, we will start by taking the upper and lower bounds on the $\epsilon_n$- and $(1-\epsilon_n)$-hypothesis testing divergences respectively from \cref{lem:infospecdiv}. 
	This gives that there exists an $N_1(\lbrace a_i\rbrace,\lambda,\eta)$ such that, for $n\geq N_1$, we have
	\begin{align}
		\frac{1}{n}D_{\mathrm{h}}^{\epsilon_n}\left( \bigotimes_{i=1}^n\rho_i \middle\|\,\bigotimes_{i=1}^n\sigma_i\right)&\leq D_n-\inf\left\lbrace t~\middle|~ \Pr\Biggl[\frac{1}{n}\sum_{i=1}^n (-X_i)\geq t\Biggr]\leq 2\epsilon_n \right\rbrace+\eta a_n/2,\\
		\frac{1}{n}D_{\mathrm{h}}^{1-\epsilon_n}\left( \bigotimes_{i=1}^n\rho_i \middle\|\,\bigotimes_{i=1}^n\sigma_i\right)&\geq D_n+\inf\left\lbrace t~\middle|~ \Pr\Biggl[\frac{1}{n}\sum_{i=1}^n \left(+X_i\right)\geq  t\Biggr]\leq 2\epsilon_n \right\rbrace-\eta a_n/2.
	\end{align}
	where $X_i:=Z_i-D(\rho_i\|\sigma_i)$.
	
	Firstly, applying Chebyshev's inequality two standard deviations below the mean gives us that
	\begin{align}
		\Pr\Biggl[\frac{1}{n}\sum_{i=1}^n (\pm X_i)\geq  -2\sqrt{V_n/n}\Biggr]\geq 3/4\geq2\epsilon_n,
	\end{align}
	and so we conclude that
	\begin{align}
		\inf\left\lbrace t~\middle|~ \Pr\Biggl[\frac{1}{n}\sum_{i=1}^n \left(\pm X_i\right)\geq  t\Biggr]\leq 2\epsilon_n \right\rbrace\geq -2\sqrt{V_n/n}.
	\end{align}
	By \cref{lem:boundedcumulants}, $V_n$ must be bounded $V_n\leq \bar{V}(\lambda)$, and thus $\sqrt{V_n/n}=\mathcal{O}(1/\sqrt{n})=o(a_n)$. As such, there must exist an $N_2(\lbrace a_i\rbrace,\lambda,\eta)$ such that $n\geq N_2$ implies $2\sqrt{V_n/n}\leq \eta a_n/2$. Inserting this tail bound, we get that for any $n\geq N(\lbrace a_i\rbrace,\lambda,\eta):=\max\lbrace N_1,N_2\rbrace$ that
	\begin{align}
		\frac{1}{n}D_{\mathrm{h}}^{\epsilon_n}\left( \bigotimes_{i=1}^n\rho_i \middle\|\,\bigotimes_{i=1}^n\sigma_i\right)&\leq D_n-\eta a_n,\\
		\frac{1}{n}D_{\mathrm{h}}^{1-\epsilon_n}\left( \bigotimes_{i=1}^n\rho_i \middle\|\,\bigotimes_{i=1}^n\sigma_i\right)&\geq D_n+\eta a_n,
	\end{align}
	as required.
	
	If there also exists an $\nu > 0$ such that $V_i\geq \nu$, then we can use a more refined moderate deviation bound. Specifically, \cref{lem:boundedcumulants} gives us a bound on the absolute third moment of $X_i$, which allows us to apply \cref{lem:moddev lower}. If we let $t_n:=(\sqrt{2V_n}-\eta/2)a_n$ and assume $\eta<\sqrt{8\nu}$ such that $\lbrace t_n\rbrace_n$ is moderate, then this gives us that there exists an $N_3(\lbrace a_i\rbrace, \lambda,\nu,\eta)$ such that, for any $n\geq N_3$, the tail probabilities are bounded
	\begin{align}
		\ln \Pr\Biggl[\frac{1}{n}\sum_{i=1}^n (\pm X_i)
		\geq t_n\Biggr]
		&\geq -\Bigl(1+\eta/\sqrt{2\bar{V}}\,\Bigr)\frac{nt_n^2}{2V_n} \\
		&\geq -\frac{\left(1+\eta/\sqrt{2\bar{V}}\right)\left(\sqrt{2V_n}-\eta/2\right)^2}{2V_n}na_n^2 \\
		&\geq -\frac{\left(1+\eta/\sqrt{2V_n}\right)\left(\sqrt{2V_n}-\eta/2\right)^2}{2V_n}na_n^2 \\
		&\geq -\left(1-\frac{5\eta^2}{8\bar V}\right)na_n^2.
	\end{align}
	Once again, the second term in the parenthesis is a non-zero constant, and thus there must exist an $N_4(\lbrace a_i\rbrace,\lambda,\eta)$ such that
	\begin{align}
		\log \Pr\Biggl[\frac{1}{n}\sum_{i=1}^n (\pm X_i)
		\geq t_n\Biggr]\geq -na_n^2+1=\ln 2\epsilon_n,
	\end{align}
	allowing us to conclude $\Pr\left[\frac{1}{n}\sum_{i=1}^n (\pm X_i)
	\geq t_n\right]\geq 2\epsilon_n$, and therefore 
	\begin{align}
		\inf\left\lbrace t~\middle|~ \Pr\Biggl[\frac{1}{n}\sum_{i=1}^n X_i\geq t\Biggr]\leq 2\epsilon_n \right\rbrace\geq t_n.
	\end{align}
	Inserting this into the above bounds, we find that for any $n\geq N'(\lbrace a_i\rbrace,\lambda,\nu,\eta):=\max\lbrace N_1,N_3,N_4\rbrace$, we have the desired final bound
	\begin{align}
		\frac{1}{n}D_{\mathrm{h}}^{\epsilon_n}\left( \bigotimes_{i=1}^n\rho_i \middle\|\,\bigotimes_{i=1}^n\sigma_i\right)&\leq D_n-\sqrt{2V_n}\,a_n+\eta a_n,\\
		\frac{1}{n}D_{\mathrm{h}}^{1-\epsilon_n}\left( \bigotimes_{i=1}^n\rho_i \middle\|\,\bigotimes_{i=1}^n\sigma_i\right)&\geq D_n+\sqrt{2V_n}\,a_n-\eta a_n.
	\end{align}
\end{proof}

%% coding.tex %%%%%%%%%%%%%%%%%%%%%%%%%%%%%%%%%%%%%%%%%%%%%%%%%%%%%%%%%%%%%%%%%%%%%%%%%%%%%%%%%%%%%%

\section{Channel Coding}
\label{sec:channels}

We are now going to show how the above moderate deviation bounds can be applied to the capacity of a classical-quantum channel. 

\begin{thm}[Moderate deviation of c-q channels]
	\label{thm:moddevcode}
	For any moderate sequence $\lbrace a_n\rbrace_n$ and memoryless c-q channel 
	$\cW$ with capacity $C(\cW)$ and min-dispersion $V_{\min}(\cW)$, being 
	operated at error probability no larger than $\epsilon_n:=e^{-na_n^2}$, the 
	optimal rate deviates below the capacity as
	\begin{align}
		R^*\left(\cW ;n,\epsilon_n\right)=C(\cW)-\sqrt{2V_{\min}(\cW)}\,a_n+o(a_n).
	\end{align}
	Conversely, if the channel has max-dispersion $V_{\max}$ and is operated at 
	error probability no larger than $1-\epsilon_n$, then the optimal rate 
	deviates above the capacity as
	\begin{align}
		R^*\left(\cW;n,1-\epsilon_n\right)=C(\cW)+\sqrt{2V_{\max}(\cW)}\,a_n+o(a_n).
	\end{align}
\end{thm}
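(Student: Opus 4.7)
The plan is to reduce the c-q channel coding statement to the hypothesis testing results \cref{prop:HTD-inward} and \cref{prop:HTD-outward} of \cref{sec:hypo} via one-shot bounds on $\log M^*$ in terms of $D_{\mathrm{h}}^{\eps}$, and then exploit the identifications $C(\cW) = \chi(\overline{\im\cW})$, $V_{\min/\max}(\cW) = v_{\min/\max}(\overline{\im\cW})$ of Ref.~\cite{tomamicheltan14}. I will treat only the $\eps_n$ case; the $1 - \eps_n$ statement follows by the same argument with $v_{\min}$ swapped for $v_{\max}$ and with the companion inequalities in both propositions.

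For achievability, I select a saddle-point distribution $\mathbb{P}^* \in \Pi(\overline{\im\cW})$ achieving $v_{\min}(\overline{\im\cW})$ and set $\sigma^* := \int \rho\, \mathrm{d}\mathbb{P}^*(\rho)$. A Wang--Renner-style one-shot random-coding bound applied to codewords drawn i.i.d.\ from $\mathbb{P}^*$ gives
\begin{align}
\log M^*(\cW;n,\eps_n) \geq \int D_{\mathrm{h}}^{\eps_n - \delta_n}\bigl(\rho^{\otimes n} \,\big\|\, (\sigma^*)^{\otimes n}\bigr)\, \mathrm{d}\mathbb{P}^*(\rho) - O(\log n)
\end{align}
for a polynomially small $\delta_n$, which alters $\eps_n$ only sub-leadingly in the moderate regime. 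Applying the inward bound \cref{prop:HTD-inward} termwise and averaging against $\mathbb{P}^*$ yields
\begin{align}
R^*(\cW;n,\eps_n) \geq \int D(\rho\|\sigma^*)\, \mathrm{d}\mathbb{P}^*(\rho) - \sqrt{\textstyle 2 \int V(\rho\|\sigma^*)\, \mathrm{d}\mathbb{P}^*(\rho)}\, a_n + o(a_n),
\end{align}
and the two integrals equal $C(\cW)$ and $V_{\min}(\cW)$ by the choice of $\mathbb{P}^*$.

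For the converse, I apply the hypothesis testing meta-converse with reference state $(\sigma^*)^{\otimes n}$, which bounds
\begin{align}
\log M^*(\cW;n,\eps_n) \leq \sup_{x^n \in \cX^n} D_{\mathrm{h}}^{\eps_n}\bigl(\textstyle\bigotimes_i \cW(x_i) \,\big\|\, (\sigma^*)^{\otimes n}\bigr) + O(\log n).
\end{align}
Since $\mathbb{P}^*$ witnesses the divergence radius, the Kuhn--Tucker property forces $D(\cW(x)\|\sigma^*) \leq C(\cW)$ for every $x \in \cX$. Applying the refined outward bound \cref{prop:HTD-outward} termwise upper-bounds each codeword contribution by $\tfrac{1}{n}\sum_i D(\cW(x_i)\|\sigma^*) - \sqrt{2 V_n(x^n)}\, a_n + o(a_n)$, with $V_n(x^n) := \tfrac{1}{n}\sum_i V(\cW(x_i)\|\sigma^*)$. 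To convert this pointwise variance into the global channel parameter $V_{\min}(\cW)$, I partition $\cX^n$ by empirical type: sequences whose empirical distribution approaches some element of $\Pi(\overline{\im\cW})$ satisfy $V_n(x^n) \geq V_{\min}(\cW) - o(1)$ by the very definition of $v_{\min}$, while sequences bounded away from $\Pi$ lose $\Omega(1)$ in their mean $D$-term by a continuity/stability argument on $\chi$, which easily absorbs the $\sqrt{V}\, a_n$ correction.

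The main obstacle is this last step---promoting a pointwise hypothesis testing bound into one featuring the global parameter $V_{\min}(\cW)$---which is delicate because of the non-uniqueness of optimizers in $\Pi(\overline{\im\cW})$ and the potentially uncountable input alphabet. Three technical caveats arise: (i)~regularizing $\sigma^*$ to be strictly positive so that the $\lambda_{\min}(\sigma) \geq \lambda$ hypothesis of \cref{prop:HTD-outward} holds, by replacing it with $(1-\zeta)\sigma^* + \zeta\,\mathrm{id}/\dim\cH$ and letting $\zeta \searrow 0$; (ii)~handling the variance nondegeneracy $V_i \geq \nu$ required by the refined form of \cref{prop:HTD-outward}, by falling back on the coarser first form for the branch of codewords whose variance degenerates (where no $\sqrt{V}\,a_n$ improvement is required anyway); and (iii)~making the empirical-type partitioning rigorous when $\cX$ is arbitrary, by exploiting compactness of $\cP(\overline{\im\cW})$ under the weak-convergence metric $\delta_\text{wc}$.
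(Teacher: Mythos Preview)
Your high-level strategy coincides with the paper's: achievability via a Wang--Renner one-shot bound followed by \cref{prop:HTD-inward}, and converse via the meta-converse followed by \cref{prop:HTD-outward}. But both halves, as you have written them, contain genuine gaps.

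\textbf{Achievability.} The Wang--Renner bound does not yield the expression $\int D_{\mathrm{h}}^{\eps}(\rho^{\otimes n}\|(\sigma^*)^{\otimes n})\,\mathrm{d}\mathbb{P}^*(\rho)$; it gives a single hypothesis-testing divergence between the classical-quantum joint state $\pi_{Y^nZ^n}$ and the product $\pi_{Y^n}\otimes\pi_{Z^n}$. The integrand $D_{\mathrm{h}}^{\eps}(\rho^{\otimes n}\|\cdot)$ for a \emph{single} $\rho$ does not even correspond to a codeword drawn i.i.d.\ from $\mathbb{P}^*$ (whose output is $\bigotimes_i\rho_i$ with the $\rho_i$ independent, not $\rho^{\otimes n}$), and the block-diagonal reduction followed by Jensen on $-\log$ points the wrong way. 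The paper instead applies \cref{prop:HTD-inward} directly to the $n$-fold cq states, after first invoking a lemma from Ref.~\cite{tomamicheltan14} to extract a finitely supported $Q_Y$ for which those states satisfy $D(\rho\|\sigma)=C$ and $V(\rho\|\sigma)=V_{\min}$ exactly.

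\textbf{Converse.} The more serious problem is your choice of reference $(\sigma^*)^{\otimes n}$ alone. Your claim that ``sequences bounded away from $\Pi$ lose $\Omega(1)$ in their mean $D$-term'' is false for the $D$-term you actually control, namely $\tfrac{1}{n}\sum_i D(\cW(x_i)\|\sigma^*)$. By Kuhn--Tucker, $D(\cW(a)\|\sigma^*)=C$ for every $a$ in the support of any optimiser; thus the constant codeword $(a,\ldots,a)$ has empirical type $\delta_{\cW(a)}$, which is far from $\Pi$ (its Holevo quantity is zero), yet its mean $D(\cdot\|\sigma^*)$-term equals $C$. If in addition $V(\cW(a)\|\sigma^*)<V_{\min}$---which happens generically for asymmetric channels, since $V_{\min}$ is an \emph{average} of the $V(\cW(x)\|\sigma^*)$---then \cref{prop:HTD-outward} only yields $C-\sqrt{2V(\cW(a)\|\sigma^*)}\,a_n$, strictly weaker than required. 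The paper's fix is to take $\sigma^n$ as a mixture of $(\sigma^*)^{\otimes n}$ with a finite covering net $\{\tau^{\otimes n}:\tau\in\mathcal{C}^{\gamma/4}\}$ over $\cS$. The net term, via the Pythagorean identity $\sum_i D(\rho_i\|\tau_n)=\sum_i D(\rho_i\|\bar\rho_n)+nD(\bar\rho_n\|\tau_n)$, ties $D_{\mathrm{h}}^{\eps_n}$ to the Holevo quantity $\tfrac{1}{n}\sum_i D(\rho_i\|\bar\rho_n)$ rather than to $\tfrac{1}{n}\sum_i D(\rho_i\|\sigma^*)$---and it is the former that is bounded strictly below $C$ on the ``low'' branch. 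The $\sigma^*$ term is reserved for the ``high'' branch, where a separate continuity lemma ($V_{\min}(\gamma)\to V_{\min}$ as $\gamma\to 0$, Lemma~22 of Ref.~\cite{tomamicheltan14}) supplies the variance lower bound. Your caveat~(iii) correctly identifies the partitioning as the delicate point, but compactness of $\cP(\overline{\im\cW})$ alone does not rescue the argument without the covering-net component of the reference state.
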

If either the min- or max-dispersion is non-zero, an application of \cref{lem:reverse} gives an equivalent formulation in terms of the minimal error probability at a given rate.

\begin{cor}
	For any moderate sequence $\lbrace s_n\rbrace_n$, the error probability for a code with min-dispersion $V_{\min}>0$ deviating below capacity by $s_n$ scales as
	\begin{align}
		\lim\limits_{n\to \infty}\frac{1}{ns_n^2}\ln\epsilon^*(\cW;n,C-s_n)=-\frac{1}{2V_{\min}}.
	\end{align}
	Similarly, for a code with max-dispersion $V_{\max}>0$ deviating above capacity by $s_n$, the error probability scales
	\begin{align}
	\lim\limits_{n\to \infty}\frac{1}{ns_n^2}\ln\bigl( 1-\epsilon^*(\cW;n,C + s_n)\bigr)=-\frac{1}{2V_{\max}}.
	\end{align}
\end{cor}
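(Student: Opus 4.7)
The plan is to reduce the channel coding problem to the hypothesis testing analysis established in \cref{prop:HTD-inward,prop:HTD-outward} via standard one-shot coding bounds, following the general strategy of the second-order analysis in Ref.~\cite{tomamicheltan14}. Once channel capacity and the min/max-dispersions are identified with $\chi(\overline{\im \cW})$, $v_{\min}(\overline{\im \cW})$, and $v_{\max}(\overline{\im \cW})$ (as recalled in the preliminaries), the task reduces to matching the one-shot quantities with $D_{\mathrm{h}}^\eps$ of suitable i.i.d.\ product states and invoking the moderate deviation bounds of the previous section.

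\textbf{Achievability.} For the lower bound on $R^*(\cW; n, \eps_n)$, pick $\mathbb{P}^* \in \Pi(\overline{\im \cW})$ attaining $v_{\min}(\overline{\im \cW})$; existence follows from compactness of $\cP(\overline{\im \cW})$ and lower semicontinuity of the variance integral in the weak-convergence topology. A one-shot achievability bound of Wang--Renner type, applied to i.i.d.\ inputs drawn from $\mathbb{P}^*$ with product reference $\bar\sigma^{\otimes n}$ where $\bar\sigma := \int d\mathbb{P}^*(\rho)\,\rho$, yields
\begin{align}
\log M^*(\cW; n, \eps_n) \geq \int d\mathbb{P}^*(\rho)\, D_{\mathrm{h}}^{\eps_n - \delta_n}\bigl(\rho^{\otimes n} \,\big\|\, \bar\sigma^{\otimes n}\bigr) - O\bigl(\log(1/\delta_n)\bigr),
\end{align}
where $\delta_n$ is chosen sub-exponentially small so the slack is $o(n a_n)$. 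Invoking the tighter form of \cref{prop:HTD-inward} (applicable because $\mathbb{P}^*$-typical $\rho$ are dispersion-achieving and hence satisfy $V(\rho\|\bar\sigma) \geq \nu > 0$) to each integrand and averaging gives the desired $C(\cW) - \sqrt{2 V_{\min}(\cW)}\,a_n - o(a_n)$. The $V_{\max}$ achievability at error $1-\eps_n$ is analogous, now selecting $\mathbb{P}^*$ to maximise the peripheral variance and invoking the outward lower bound on $D_{\mathrm{h}}^{1-\eps_n}$ from \cref{prop:HTD-outward}.

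\textbf{Converse.} The quantum meta-converse provides
\begin{align}
\log M^*(\cW; n, \eps) \leq \sup_{x^n \in \cX^n} D_{\mathrm{h}}^{\eps}\Bigl(\!\bigotimes_{i=1}^n \cW(x_i) \,\Big\|\, \bar\sigma^{\otimes n}\Bigr) + O(1),
\end{align}
for any positive reference state $\bar\sigma$. Taking $\bar\sigma = \int d\mathbb{P}^\dagger(\rho)\,\rho$ with $\mathbb{P}^\dagger \in \Pi(\overline{\im \cW})$ and applying the outward upper bound of \cref{prop:HTD-outward} on $D_{\mathrm{h}}^{\eps_n}$ term-by-term produces a bound whose first-order part $\tfrac{1}{n}\sum_i D(\cW(x_i)\|\bar\sigma)$ is at most $C(\cW)$ by the divergence-radius characterisation of $\Pi$. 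Among input sequences saturating this first-order term, the correction $-\sqrt{2 V_n}\, a_n$ is least negative when $V_n$ is smallest, yielding exactly $-\sqrt{2 V_{\min}(\cW)}\, a_n$ as required. The $V_{\max}$ converse follows the same template using the inward upper bound of \cref{prop:HTD-inward} on $D_{\mathrm{h}}^{1-\eps_n}$, with the supremum instead maximising the variance.

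\textbf{Principal obstacle.} The delicate step is the converse supremum over arbitrary input sequences. When $\cX$ is infinite and $\Pi(\overline{\im \cW})$ is not a singleton, sequences whose empirical distributions drift away from $\Pi$ exhibit a strictly smaller first-order relative entropy but potentially larger variance, and one must show quantitatively that this trade-off cannot beat the stated rate. The argument hinges on a continuity estimate for the map $\mathbb{P} \mapsto \int V(\rho\|\bar\sigma_{\mathbb{P}})\, d\mathbb{P}(\rho)$ restricted to near-optimal $\mathbb{P}$, ensuring that the dominant contribution to the sup arises from input types within weak-convergence distance $o(a_n)$ of $\Pi$. The positivity condition $\bar\sigma > 0$ needed to apply \cref{prop:HTD-inward,prop:HTD-outward} is guaranteed by the standing assumption $\cH = \mathop{\mathrm{Span}}_{\rho\in \im\cW} \mathrm{supp}(\rho)$.
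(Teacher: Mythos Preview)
Your proposal does not actually address the stated Corollary. The Corollary concerns the asymptotics of $\epsilon^*(\cW;n,C\pm s_n)$, i.e.\ the optimal \emph{error} as a function of a prescribed \emph{rate}, whereas every bound you write is on $R^*(\cW;n,\epsilon_n)$, the optimal rate at a prescribed error. What you have sketched is (an attempt at) a proof of \cref{thm:moddevcode}, which the paper already establishes separately. The Corollary is obtained from \cref{thm:moddevcode} by the Reversing Lemma (\cref{lem:reverse}): one sets, say, $A_n(\epsilon)=\bigl(C-R^*(n,\epsilon)\bigr)/\sqrt{2V_{\min}}$ and $B_n(\epsilon)=\sqrt{-\ln\epsilon/n}$, verifies that \cref{thm:moddevcode} gives $\hat A_n(b_n)/b_n\to 1$ for all moderate $b_n$, and concludes the Corollary from the equivalent statement $\hat B_n(a_n)/a_n\to 1$. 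Your write-up contains no such inversion step, so even if your achievability and converse were airtight you would not have proved the Corollary.

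Separately, as a sketch of \cref{thm:moddevcode} your argument has two substantive gaps. First, the Wang--Renner bound does not give an average of individual divergences $\int d\mathbb{P}^*(\rho)\,D_{\mathrm{h}}^{\epsilon}(\rho^{\otimes n}\|\bar\sigma^{\otimes n})$; it gives $D_{\mathrm{h}}^{\epsilon}$ of the \emph{joint} classical-quantum state against its product marginal, and the paper then applies \cref{prop:HTD-inward} to that single product pair. Second, in the converse you only treat input sequences whose empirical type nearly saturates capacity; the paper handles the complementary ``low'' sequences by mixing the divergence centre $\sigma^*$ with a finite $\delta$-net $\mathcal{C}^{\gamma/4}$ of $\cS$ (\cref{lem:low}), which is precisely the device needed to make your ``continuity estimate'' quantitative. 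Without it the supremum over $x^n$ is not controlled.
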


\begin{remark}
	Recall that our definition of c-q channels does not put any restriction on the input set. In particular, this set may be comprised of quantum states itself such that the c-q channel is just a representation of a quantum channel. Hence, as pointed out in Ref.~\cite{tomamicheltan14}, our results immediately also apply to classical communication over general image-additive channels~\cite{wolf14} as well as classical communication over quantum channels with encoders restricted to prepare separable states. We refer the reader to Corollaries 6 and 7 of Ref.~\cite{tomamicheltan14} for details. 
\end{remark}

We will split the proof of \cref{thm:moddevcode} in two, in Sect.~\ref{subsec:coding1} we will prove a lower bound on the maximum rate (`achievability'), followed in Sect.~\ref{subsec:coding2} by a corresponding the upper bound (`optimality'). For the rest of this section, we will fix the channel $\cW$, and omit any dependencies on $\cW$ from here on for notational convenience.

\subsection{Achievability}
\label{subsec:coding1}

For achievability, we will use a lower bound on the $\epsilon$-one-shot rate that is essentially due to Hayashi and Nagaoka~\cite{hayashi03} who analysed the coding problem using the information spectrum method.

\begin{lem}[Theorem 1 of Ref.~\cite{wang10}]
	\label{lem:wangrenner}
	If we have a c-q channel which maps from a finite message space $Y$ as $y\mapsto \rho^{(y)}$ , then the maximum rate with error probability at most $\epsilon$ and $1-\epsilon$, $R^*(\epsilon)$ and $R^*(1-\epsilon)$ respectively, are lower bounded
	\begin{align}
		R^*(\epsilon) &\geq \sup_{P_Y}
		D^{\epsilon/2}_{\mathrm{h}} \left( \pi_{YZ} \middle\| \pi_Y\otimes \pi_Z \right) -\log\frac{8(2-\epsilon)}{\epsilon}\\
		R^*(1-\epsilon) &\geq \sup_{P_{Y}}
		D^{1-2\epsilon}_{\mathrm{h}} \left( \pi_{YZ} \middle\| \pi_Y\otimes\pi_Z \right) -\log\frac{8(1-\epsilon)}{\epsilon}
	\end{align}
	where $\pi_{YZ}$ is the joint state of the input and output, with inputs chosen according to the distribution $P_{Y}$
	\begin{align}
		\pi_{YZ}:=\sum_{y\in Y}P_{Y}(y)\ketbra{y}{y}_{Y}\otimes \rho^{(y)}_{Z}.
	\end{align}
\end{lem}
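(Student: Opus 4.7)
The plan is to prove both bounds via a random-coding argument combined with the Hayashi--Nagaoka~\cite{hayashi03} ``pretty-good'' decoder, following the strategy of Wang--Renner. Fix a distribution $P_Y$ on the input set and let $Q^*$ be an optimal POVM element achieving $\beta_{\epsilon/2}(\pi_{YZ}\|\pi_Y\otimes\pi_Z)$ for the first bound (respectively $\beta_{1-2\epsilon}$ for the second). Since $\pi_{YZ}$ and $\pi_Y\otimes\pi_Z$ are both block-diagonal in the classical register $Y$, one may take $Q^*$ block-diagonal as well and write $Q^* = \sum_y \ketbra{y}{y}_Y \otimes T_y$ for operators $0 \leq T_y \leq I$ on the channel output.

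Next I would draw $M$ codewords $y_1,\dots,y_M$ i.i.d.\ from $P_Y$ and, for the random codebook so obtained, define the decoder
\begin{align}
\Pi_m := \Bigl( \sum_{m'=1}^M T_{y_{m'}} \Bigr)^{-1/2} T_{y_m} \Bigl( \sum_{m'=1}^M T_{y_{m'}} \Bigr)^{-1/2}.
\end{align}
The Hayashi--Nagaoka operator inequality with parameter $c>0$ reads $I - \Pi_m \leq (1+c)(I-T_{y_m}) + (2+c+c^{-1}) \sum_{m' \neq m} T_{y_{m'}}$. Averaging over the random codebook and over the message index turns the two operator terms into (i) $(1+c)$ times the type-I error of $Q^*$ on $\pi_{YZ}$, and (ii) $(2+c+c^{-1})(M-1)$ times the type-II error of $Q^*$ on $\pi_Y\otimes\pi_Z$.

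For the first bound, (i) is bounded by $(1+c)(\epsilon/2)$ and (ii) equals $(2+c+c^{-1})(M-1)(1-\epsilon/2)\,2^{-D_{\mathrm{h}}^{\epsilon/2}}$. A careful choice of $c$ together with the selection $\log M \leq D_{\mathrm{h}}^{\epsilon/2} - \log(8(2-\epsilon)/\epsilon)$ keeps the expected average error over the random codebook below $\epsilon$; existence of a deterministic codebook attaining this then follows from the probabilistic method. The $1-\epsilon$ bound is proved by the same recipe with $Q^*$ chosen at threshold $1-2\epsilon$: now (i) is at most $(1+c)(1-2\epsilon)$ while the type-II error equals $(2\epsilon)\,2^{-D_{\mathrm{h}}^{1-2\epsilon}}$, and balancing $M$ against these two contributions yields the stated rate. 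Note that no expurgation is needed, since the lemma concerns average error probability.

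The principal technical obstacle is the Hayashi--Nagaoka operator inequality itself\,---\,it is what lets one control the non-commutative square-root decoder by the individual operators $T_{y_m}$ and the competing codeword operators\,---\,after which everything is a careful but routine exercise in tracking constants, using the block-diagonal structure of $Q^*$ on the classical input to restrict attention to a single ``slot'' $T_{y_m}$ per message. The precise form of the additive logarithmic overhead in the lemma arises from optimising over the Hayashi--Nagaoka parameter $c$ and absorbing the $(1-\epsilon)$ normalisation factor appearing in the definition of $D_{\mathrm{h}}^\epsilon$.
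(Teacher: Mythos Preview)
The paper does not supply its own proof of this lemma; it is invoked directly as Theorem~1 of Wang--Renner~\cite{wang10} and used as a black box. Your sketch faithfully reproduces the Wang--Renner argument (random i.i.d.\ codebook, block-diagonal optimal test $Q^*=\sum_y\ketbra{y}{y}\otimes T_y$, square-root decoder controlled via the parametrised Hayashi--Nagaoka inequality) and is correct in outline; in particular the choice $c=1/3$ makes the first bound sharp, and the need for the free parameter $c$ rather than the unparametrised inequality is exactly the point you identify.
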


\begin{prop}[Channel coding: Achievability]
	\label{prop:codingachievability}
	For any moderate sequence $\lbrace a_n\rbrace_n$ and error probability $\epsilon_n:=e^{-na_n^2}$, the rate is at least
	\begin{align}
		R^*\left(n,\epsilon_n\right)\geq C-\sqrt{2V_{\min}}a_n+o(a_n).
	\end{align}
	Similarly, at error probability $1-\epsilon_n$, the rate is at least
	\begin{align}
		R^*\left(n,1-\epsilon_n\right)\geq C+\sqrt{2V_{\max}}a_n+o(a_n).
	\end{align}
\end{prop}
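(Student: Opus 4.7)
The plan is to combine the Hayashi--Nagaoka/Wang--Renner one-shot achievability bound (\cref{lem:wangrenner}) with the hypothesis-testing moderate-deviation results of the previous section (\cref{prop:HTD-inward,prop:HTD-outward}), applied to a well-chosen iid input distribution. The key observation is that, for a c-q state $\pi_{YZ}$ and the product of its marginals $\pi_Y\otimes\pi_Z$, the information quantities $D(\pi_{YZ}\|\pi_Y\otimes\pi_Z)$ and $V(\pi_{YZ}\|\pi_Y\otimes\pi_Z)$ reproduce the channel parameters $C$ and $V_{\min}$ (respectively $V_{\max}$) once $P_Y$ is chosen to induce an optimiser on the channel image; this reduces the channel problem directly to the uniform product hypothesis-testing setting already handled in Section~\ref{sec:hypo}.

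First I would reduce to a finite input alphabet: by compactness of $\overline{\im\cW}$ and a Caratheodory-type argument (along the lines of Ref.~\cite{tomamicheltan14}), a distribution $\mathbb{P}\in\Pi(\overline{\im\cW})$ achieving $v_{\min}(\overline{\im\cW})$ (resp.~$v_{\max}$) may be taken to have finite support, and I pick inputs $y_1,\dots,y_k\in\cX$ whose outputs $\cW(y_j)$ approximate this support to induce a finitely supported distribution $P_Y$ on $\cX$. Any residual perturbation in $D$ and $V$ is absorbed into the $\eta a_n$ slack via continuity of these quantities, and by augmenting $P_Y$ using the standing assumption on $\cH$ I may further ensure $\pi_Z:=\sum_y P_Y(y)\cW(y)>0$. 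Applying \cref{lem:wangrenner} to the $n$-fold product channel with input distribution $P_Y^{\otimes n}$ yields
\begin{align*}
nR^*(n,\epsilon_n)&\geq D^{\epsilon_n/2}_{\mathrm{h}}\bigl(\pi_{YZ}^{\otimes n}\,\|\,(\pi_Y\otimes\pi_Z)^{\otimes n}\bigr)-\log\tfrac{8(2-\epsilon_n)}{\epsilon_n},
\end{align*}
and analogously for $R^*(n,1-\epsilon_n)$ in terms of $D^{1-2\epsilon_n}_{\mathrm{h}}$. Since $\log(1/\epsilon_n)=na_n^2$, the correction divided by $n$ is $O(a_n^2)=o(a_n)$ and can be dropped.

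Since $\pi_Y\otimes\pi_Z>0$ by construction, I apply \cref{prop:HTD-inward} (for the $\epsilon_n$ case) and the tightened version of \cref{prop:HTD-outward} (for the $1-\epsilon_n$ case) to the uniform product $(\pi_{YZ},\pi_Y\otimes\pi_Z)^{\otimes n}$, reducing the question to computing $D(\pi_{YZ}\|\pi_Y\otimes\pi_Z)$ and $V(\pi_{YZ}\|\pi_Y\otimes\pi_Z)$. A direct expansion of $\log\pi_{YZ}-\log(\pi_Y\otimes\pi_Z)=\sum_y|y\rangle\langle y|\otimes(\log\cW(y)-\log\pi_Z)$ gives
\begin{align*}
D(\pi_{YZ}\|\pi_Y\otimes\pi_Z)&=\sum_y P_Y(y)\,D(\cW(y)\|\pi_Z),\\
V(\pi_{YZ}\|\pi_Y\otimes\pi_Z)&=\sum_y P_Y(y)\,V(\cW(y)\|\pi_Z)+\mathrm{Var}_{y\sim P_Y}\bigl[D(\cW(y)\|\pi_Z)\bigr].
\end{align*}
When $P_Y$ is chosen so that its induced state distribution lies in $\Pi(\overline{\im\cW})$, saturated KKT forces $D(\cW(y)\|\pi_Z)=C$ throughout $\mathrm{supp}\,P_Y$, the variance term vanishes, and the two quantities evaluate to $C$ and (arbitrarily close to) $V_{\min}$ resp.~$V_{\max}$. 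Substituting into the hypothesis-testing lower bounds and taking $\eta>0$ arbitrarily small gives the claim.

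The main obstacle is the finite-alphabet approximation in the first step, which must simultaneously approximate the optimiser $\mathbb{P}$ on $\overline{\im\cW}$, preserve full support of $\pi_Z$, and keep $\lambda_{\min}(\pi_Y\otimes\pi_Z)$ bounded below uniformly in $n$ so that the hypothesis-testing bounds apply with a valid $\lambda$. This is handled by continuity-based arguments following Ref.~\cite{tomamicheltan14}. The edge cases $V_{\min}=0$ or $V_{\max}=0$ are covered instead by the untightened version of \cref{prop:HTD-outward}, which does not require a uniform lower bound on the variance and still yields the desired asymptotic (with the $\sqrt{2V}\,a_n$ term vanishing).
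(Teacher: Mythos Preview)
Your approach is essentially the same as the paper's: apply \cref{lem:wangrenner} with an iid input distribution chosen so that the resulting c-q state $\pi_{YZ}$ versus $\pi_Y\otimes\pi_Z$ has relative entropy $C$ and variance $V_{\min}$ (resp.\ $V_{\max}$), then invoke \cref{prop:HTD-inward} (resp.\ \cref{prop:HTD-outward}) on the uniform product. The only difference is cosmetic: where you sketch the Carath\'eodory/KKT argument and the law-of-total-variance decomposition of $V(\pi_{YZ}\|\pi_Y\otimes\pi_Z)$ explicitly, the paper simply cites Lemma~3 of Ref.~\cite{tomamicheltan14}, which packages exactly this---it furnishes a finitely supported $Q_Y$ with $D(\rho\|\sigma)=C$ and $V(\rho\|\sigma)=V_{\min}$ (or $V_{\max}$) on the nose, so no approximation slack or augmentation of $P_Y$ is needed. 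Your explicit treatment of the $V_{\min}=0$ / $V_{\max}=0$ edge case via the untightened bound is a nice clarification that the paper leaves implicit.
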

\begin{proof}
	Let $X$ be our, possibly infinite, message space. By Lemma 3 of Ref.~\cite{tomamicheltan14}, there exists a finite subset $Y\subseteq X$, and a distribution $Q_{Y}$ thereon, such that $D(\rho\|\sigma)=C$ and $V(\rho\|\sigma)=V_{\min}$ for states
	\begin{align}
		\rho:=\sum_{y\in Y}Q_{Y}(y)\ketbra{y}{y}\otimes \rho^{(y)} 
		\qquad \text{and} \qquad \sigma:=\sum_{y\in Y}Q_{Y}(y)\ketbra{y}{y}\otimes \sum_{y'\in Y}Q_{Y}(y')\rho^{(y')}.
	\end{align}
	
	Clearly by restricting the message space we can only ever decrease the rate. By applying \cref{lem:wangrenner} to the restriction of the message space to $Y$, we can lower bound the maximum rate of the full code. Applying this reasoning to $n$ memoryless applications of our channel we find
	\begin{align}
		nR^*(n,\epsilon_n) 
		&\geq \sup_{P_{{Y}^n}} D^{\epsilon_n/2}_{\mathrm{h}} \left( \pi_{Y^nZ^n} \middle\| \pi_{Y^n}\otimes \pi_{Z^n} \right) -\log\frac{8(2-\epsilon_n)}{\epsilon_n}.
	\end{align}
	Substituting in both the error probability, which is no larger than $\epsilon_n=e^{-na_n^2}$, and a product distribution $Q_{{Y}^n}(\vec{y}):= \prod_{i=1}^n Q_{Y}(y_i)$ then we get
	\begin{align}
		R^*(n,\epsilon_n) \geq 
		\frac{1}{n}D^{\epsilon_n/2}_{\mathrm{h}} \left( \rho^{\otimes n} \middle\| \sigma^{\otimes n} \right)+\mathcal{O}(a_n^2).\label{ineq:rate}
	\end{align}
	Applying \cref{prop:HTD-inward}, we get an overall bound on the rate of
	\begin{align}
		R^*(n,\epsilon_n) \geq 
		C-\sqrt{2V_{\min}}\,a_n+o(a_n).
	\end{align}
	If instead we were to take a distribution $Q_{Y}$ such that $V(\rho\|\sigma)=V_{\max}$, then the same arguments would allow us to use \cref{prop:HTD-outward} to analogously give
	\begin{align}
		R^*(n,1-\epsilon_n) \geq 
		C+\sqrt{2V_{\max}}\,a_n+o(a_n).
	\end{align}
\end{proof}

\subsection{Optimality}
\label{subsec:coding2}
Similar to the second-order analysis of Ref.~\cite{tomamicheltan14}, we are going to do this by relating the capacity and one-shot maximum rates to geometric quantities known as the divergence radius and divergence centre. 

\begin{defn}[Divergence radius and centre]
	\label{defn:radius}
	For some set of states $\mathcal{S}_0\subseteq\mathcal{S}$, the \emph{divergence radius} $\chi(\mathcal{S}_0)$ and \emph{divergence centre} $\sigma^*(\mathcal{S}_0)$ are defined as 
	\begin{align}
		\chi(\mathcal{S}_0):=\mathop{\mathrm{inf}\vphantom{p}}_{\sigma\in\mathcal{S}} \sup_{\rho\in\mathcal{S}_0}~D(\rho\|\sigma), \qquad\qquad 
		\sigma^*(\mathcal{S}_0):=\mathop\mathrm{arg~min}\limits_{\sigma\in \mathcal{S}}\sup_{\rho\in\mathcal{S}_0}D(\rho\|\sigma).
	\end{align}
	Similarly the
	$\epsilon$-\emph{hypothesis testing divergence radius} $\chi_{\mathrm{h}}^\epsilon(\mathcal{S}_0)$ is defined as 
	\begin{align}
		\chi_{\mathrm{h}}^{\epsilon}(\mathcal{S}_0):=\mathop{\mathrm{inf}\phantom{p}}_{\sigma\in\mathcal{S}} \sup_{\rho\in\mathcal{S}_0}~D_{h}^{\epsilon}(\rho\|\sigma).
	\end{align}
\end{defn}

Whilst we have seen that the divergence radius captures the capacity of a channel, the $\epsilon$-hypothesis testing divergence radius approximates the one-shot capacity.

\begin{lem}[Proposition 5 of \cite{tomamicheltan14}]
	\label{lem:rate}
	For $\cI:=\overline{\im \cW}$, the maximum rate with error probability at most $\epsilon$, $R^*(\epsilon)$, is upper bounded as
	\begin{align}
		R^*(\epsilon)\leq \chi_{\mathrm{h}}^{2\epsilon}(I)+\log\frac{2}{1-2\epsilon}.
	\end{align}
	Similarly for an error probability $1-\epsilon$, the maximum rate is upper bounded as
	\begin{align}
		R^*(1-\epsilon)\leq \chi_{\mathrm{h}}^{1-\epsilon/2}(I)+\log\frac{2(2-\epsilon)}{\epsilon^2}.
	\end{align}
\end{lem}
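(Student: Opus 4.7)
The plan is a meta-converse argument that treats each decoder element as a candidate hypothesis test between the corresponding codeword image and an arbitrary reference state. Fix an $(M,\epsilon)$-code with encoder $E$ and decoder $\{D_m\}_{m=1}^M$, write $\rho_m := \cW(E(m)) \in \im\cW$, and fix an arbitrary state $\sigma$. Two ingredients feed the argument: the code assumption $\tfrac{1}{M}\sum_m \Tr D_m \rho_m \geq 1-\epsilon$, and the normalisation $\sum_m \Tr D_m \sigma \leq 1$ coming from $\sum_m D_m \leq \mathbb{I}$.

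For the $\epsilon$-bound I would apply Markov's inequality to the individual type-I errors $1 - \Tr D_m \rho_m$ to extract a set $S \subseteq \{1,\dots,M\}$ of size $|S| \geq M/2$ on which each individual type-I error is at most $2\epsilon$. For such $m$ the POVM element $D_m$ is a feasible test at level $2\epsilon$, so $\beta_{2\epsilon}(\rho_m\|\sigma) \leq \Tr D_m \sigma$; summing over $S$ and using the normalisation yields some $m^\star \in S$ with $\beta_{2\epsilon}(\rho_{m^\star}\|\sigma) \leq 2/M$. Unfolding the definition $D_{\mathrm{h}}^{2\epsilon} = -\log\bigl(\beta_{2\epsilon}/(1-2\epsilon)\bigr)$ then produces $\log M \leq D_{\mathrm{h}}^{2\epsilon}(\rho_{m^\star}\|\sigma) + \log\frac{2}{1-2\epsilon}$, and since $\rho_{m^\star} \in \cI$ the right-hand side is bounded by $\sup_{\rho\in\cI} D_{\mathrm{h}}^{2\epsilon}(\rho\|\sigma) + \log\frac{2}{1-2\epsilon}$ for every $\sigma$. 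Taking the infimum over $\sigma$ converts the $\sup$ into $\chi_{\mathrm{h}}^{2\epsilon}(\cI)$.

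The $(1-\epsilon)$-bound is structurally identical but with a reversed Markov step: from $\tfrac{1}{M}\sum_m \Tr D_m \rho_m \geq \epsilon$ one extracts the set $S$ of messages whose individual success probability is at least $\epsilon/2$, and a short counting estimate gives $|S| \geq M\epsilon/(2-\epsilon)$, which is the source of the $(2-\epsilon)$ factor in the final constant. Each $D_m$ with $m \in S$ is then a test at level $1-\epsilon/2$, and the same chain of comparisons\,---\,now with normalisation $1-(1-\epsilon/2) = \epsilon/2$ in the denominator of the $D_{\mathrm{h}}$ definition\,---\,yields $\log M \leq D_{\mathrm{h}}^{1-\epsilon/2}(\rho_{m^\star}\|\sigma) + \log\frac{2(2-\epsilon)}{\epsilon^2}$; the $\sup$-then-$\inf$ manoeuvre again produces $\chi_{\mathrm{h}}^{1-\epsilon/2}(\cI)$.

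The one genuinely delicate point is the order of quantifiers, since the message $m^\star$ selected by the averaging step depends on $\sigma$ and so the $\sup_m$ and $\inf_\sigma$ cannot naively be swapped. The resolution is to bound $D_{\mathrm{h}}^\bullet(\rho_{m^\star}\|\sigma)$ by $\sup_{\rho\in\cI} D_{\mathrm{h}}^\bullet(\rho\|\sigma)$ \emph{before} minimising over $\sigma$, which is legitimate because $\rho_{m^\star}\in\im\cW\subseteq\cI$ irrespective of $\sigma$; the passage from $\im\cW$ to its closure $\cI$ is cosmetic, justified by the lower semicontinuity of $\rho \mapsto D_{\mathrm{h}}^\epsilon(\rho\|\sigma)$. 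I do not foresee a substantive technical obstacle here, as this is the standard c-q meta-converse recast in divergence-radius language; the main bookkeeping is tracking the Markov constants carefully enough to match the exact prefactors $\log\frac{2}{1-2\epsilon}$ and $\log\frac{2(2-\epsilon)}{\epsilon^2}$ claimed in the statement.
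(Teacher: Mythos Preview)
The paper does not supply its own proof of this lemma; it is quoted verbatim as Proposition~5 of Ref.~\cite{tomamicheltan14} and used as a black box in the optimality argument. There is therefore nothing in the present paper to compare your attempt against.

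That said, your meta-converse argument is correct and is precisely the approach taken in the cited reference: Markov-select a large subcode with per-message type-I error at most $2\epsilon$ (respectively success probability at least $\epsilon/2$), observe that each decoder element is a feasible hypothesis test, average the resulting type-II errors against the POVM normalisation, and then pass to $\sup_{\rho\in\cI}$ before minimising over $\sigma$ to handle the $\sigma$-dependence of $m^\star$. Your counting estimate $|S|\geq M\epsilon/(2-\epsilon)$ for the second bound is exactly right and produces the claimed constant $\log\frac{2(2-\epsilon)}{\epsilon^2}$. The only superfluous remark is the appeal to lower semicontinuity: since $\rho_{m^\star}\in\im\cW\subseteq\cI$, the inequality $D_{\mathrm{h}}^\bullet(\rho_{m^\star}\|\sigma)\leq\sup_{\rho\in\cI}D_{\mathrm{h}}^\bullet(\rho\|\sigma)$ is immediate and the passage to the closure needs no regularity of $D_{\mathrm{h}}^\bullet$.
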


If we take $\cI_n:=\overline{\im \mathcal{W}^{\otimes n}}$ to be the closure of the image of $n$ uses of this channel, then we can extend this bound on the one-shot rate to the $n$-shot rate as
\begin{align}
	nR^*(n,\epsilon_n)&\leq \chi_{\mathrm{h}}^{2\epsilon_n}(\cI_n)+\log\frac{2}{1-\epsilon_n},\\
	nR^*(n,1-\epsilon_n)&\leq \chi_{\mathrm{h}}^{2\epsilon_n}(\cI_n)+\log\frac{2(2-\epsilon_n)}{\epsilon_n^2}.
\end{align}
As we are considering memoryless c-q channels, $\cI_n$ simply consists of elementwise tensor products of $\cI$
\begin{align}
	\cI_n=\left\lbrace \bigotimes_{i=1}^n\rho_i \,\middle|\, \rho_i\in \cI \right\rbrace.
\end{align}

Once again we are going to take $a_n$ to be an arbitrary moderate sequence, and $\epsilon_n:=e^{-na_n^2}$. Expanding this out, this gives bounds on the rate of
\begin{align}
	R^*(n,\epsilon_n)&\leq \inf_{\sigma^n}\sup_{\lbrace \rho_i\rbrace\subseteq \cI}~\frac{1}{n}D^{2\epsilon_n}_{\mathrm{h}}\left(\bigotimes_{i=1}^n\rho_i \middle\|\sigma^n \right) +\frac{1}{n}\log \frac{2}{1-\epsilon_n},\label{ineq:converse}\\
	R^*(n,1-\epsilon_n)&\leq \inf_{\sigma^n}\sup_{\lbrace \rho_i\rbrace\subseteq \cI}~\frac{1}{n}D^{1-\epsilon_n/2}_{\mathrm{h}}\left(\bigotimes_{i=1}^n\rho_i \middle\|\sigma^n \right) +\frac{1}{n}\log \frac{2(2-\epsilon)}{\epsilon_n^2}.
\end{align}

A standard approach now is to pick a state $\sigma^n$, such that we can bound 
the above quantities for arbitrary sequences $\lbrace \rho_i\rbrace$ using the 
moderate deviation analysis of the hypothesis testing divergence presented in 
Sect.~\ref{sec:hypo}. To do this we need to consider two cases. The \emph{high cases} are those in which the empirical relative entropy corresponding to 
$\lbrace \rho_i\rbrace _{i=1}^n$ is close to capacity, and the \emph{low cases} 
are those in which the empirical relative entropy corresponding to $\lbrace 
\rho_i\rbrace _{i=1}^n$ is far from capacity. Specifically, for some constant 
$\gamma$ that will be chosen later, the $n$ which correspond to high and low 
cases are denoted by $H(\lbrace\rho_i\rbrace,\gamma)$ and 
$L(\lbrace\rho_i\rbrace,\gamma)$, respectively. They are defined as
\begin{align}
	H(\lbrace\rho_i\rbrace,\gamma):=\left\lbrace n~ \middle| 
	\frac{1}{n}\sum_{i=1}^{n}D(\rho_i\|\bar\rho_n)\geq C-\gamma \right\rbrace
	\quad\text{and}\quad
	L(\lbrace\rho_i\rbrace,\gamma):=\left\lbrace n~ \middle| 
	\frac{1}{n}\sum_{i=1}^{n}D(\rho_i\|\bar\rho_n)< C-\gamma \right\rbrace
\end{align}
such that $H(\lbrace\rho_i\rbrace,\gamma)$ and $L(\lbrace\rho_i\rbrace,\gamma)$ 
bipartition $\mathbb{N}$ for all $\gamma$.

Before employing a moderate deviation bound, we are going to construct a separable state $\sigma^n$ that will allow us two different moderate deviation analyses for low and high sequences, such that we can obtain the required bounds in both cases. A convenient choice of $\sigma^n$ would be  $\sigma^n={\bar\rho_n}^{\otimes n}$ where $\bar\rho_n:=\frac{1}{n}\sum_{i=1}^n\rho_i$, but the order of the infimum and supremum require $\sigma^n$ to be chosen to be independent of the sequence $\lbrace \rho_i\rbrace$. Instead we are going to construct $\sigma^n$ from a mixture of states that lie in a covering of $\mathcal{S}$, and the divergence centre $\sigma^*(\cI)$. 

The following lemma is based on a construction in Lemma II.4 of Ref.~\cite{hayden04b}.

\begin{lem}[Lemma 18 of Ref.~\cite{tomamicheltan14}]
	For every $\delta\in(0,1)$, there exists a set $\mathcal{C}^\delta\subset\mathcal{S}$ of size
	\begin{align}
	\abs{\mathcal{C}^\delta}\leq \left(\frac{20(2d+1)}{\delta}\right)^{2d^2}\left(\frac{8d(2d+1)}{\delta}+2\right)^{d-1}\leq\left(\frac{90d}{\delta^2}\right)^{2d^2}
	\end{align}
	such that, for every $\rho\in \mathcal{S}$ there exists a state $\tau\in\mathcal{C}^\delta$ such that
	\begin{align}
	D(\rho\|\tau)\leq \delta\qquad
	\textrm{and}\qquad\lambda_{\mathrm{min}}(\tau)\geq \frac{\delta}{8d(2d+1)+\delta}\geq \frac{\delta}{25d^2}.
	\end{align} 
\end{lem}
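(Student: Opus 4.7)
The plan is to build the covering by separately discretising the eigenbasis and the eigenvalue spectrum of quantum states, then showing that the resulting grid is fine enough in a relative-entropy sense. The structure of the size bound, a product of a $(1/\delta)^{2d^2}$ factor and a $(1/\delta)^{d-1}$ factor, reflects exactly this split: $2d^2$ real parameters are needed for a unitary and $d-1$ for a point in the probability simplex of eigenvalues.

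First I would construct a net $\mathcal{P}^\delta$ on the $(d-1)$-simplex of eigenvalue vectors, shifted so that every point is bounded away from zero. Setting $N := 8d(2d+1)/\delta$, take all vectors of the form $p_i = (k_i + 1)/(N + d)$ with $k_i \in \mathbb{Z}_{\geq 0}$ and $\sum_i k_i = N$. Each such $p$ satisfies $p_i \geq 1/(N+d) \geq \delta/[8d(2d+1)+\delta]$, and a stars-and-bars count bounds the cardinality by $\binom{N+d-1}{d-1} \leq (N+2)^{d-1}$, matching the second factor of the stated bound. This $\mathcal{P}^\delta$ is an $O(1/N)$-cover of the simplex in $\ell_\infty$.

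Second, I would invoke the unitary covering of Lemma~II.4 of Ref.~\cite{hayden04b} to obtain a set $\mathcal{U}^\delta$ of unitaries of cardinality at most $(20(2d+1)/\delta)^{2d^2}$ such that every unitary is approximated in operator norm at resolution $\delta/[20(2d+1)]$. The final cover is then $\mathcal{C}^\delta := \{\, U\,\mathrm{diag}(p)\,U^{\dagger} : U \in \mathcal{U}^\delta,\ p \in \mathcal{P}^\delta \,\}$, whose cardinality is the product bound in the statement, and each element has minimum eigenvalue at least $\delta/[8d(2d+1)+\delta]$ by construction. The coarser upper bound $(90d/\delta^2)^{2d^2}$ then follows from a direct estimate for small $\delta$.

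The main obstacle, and the step I would spend the most care on, is bounding $D(\rho\|\tau) \leq \delta$ for the best choice of $\tau \in \mathcal{C}^\delta$. Given $\rho = V\,\mathrm{diag}(q)\,V^{\dagger}$, pick $U \in \mathcal{U}^\delta$ close to $V$ in operator norm and $p \in \mathcal{P}^\delta$ close to $q$ in $\ell_\infty$, and set $\tau := U\,\mathrm{diag}(p)\,U^{\dagger}$. Writing $D(\rho\|\tau) = -H(\rho) - \mathrm{Tr}(\rho \log\tau)$, I would split the difference from $D(\rho\|\rho)=0$ into two pieces: an \emph{eigenvalue error}, controlled by $\|q-p\|_\infty$ times the scalar Lipschitz constant $1/\min_i p_i$ of $\log$ on $[\min_i p_i,1]$; and an \emph{eigenbasis error} $\mathrm{Tr}(\rho(\log(V\,\mathrm{diag}(p)V^\dagger) - \log\tau))$, controlled by $\|V-U\|$ together with $\|\log\mathrm{diag}(p)\|_\infty = O(\log(d/\delta))$. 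The delicate point is that $\log\tau$ and $\rho$ need not commute when $U \neq V$, so I would handle the second piece using an operator-Lipschitz bound for the matrix logarithm on positive operators with spectrum bounded below by $\min_i p_i$. Balancing the two resolutions $\delta/[20(2d+1)]$ and $\delta/[8d(2d+1)]$ against these Lipschitz constants is exactly what forces both pieces to be simultaneously $O(\delta)$, and is what dictates the specific constants in the statement.
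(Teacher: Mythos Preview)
The paper does not prove this lemma; it is quoted from Lemma~18 of Ref.~\cite{tomamicheltan14}, with only a pointer to the Hayden--Leung--Shor--Winter construction~\cite{hayden04b}. Your overall architecture---a unitary net for the eigenbasis and a shifted grid on the simplex for the spectrum---matches the structure of the cardinality bound and is very plausibly what the original proof does; the simplex count and the $\lambda_{\min}$ bound are fine.

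The gap is in the eigenbasis-error estimate. You propose to control $\Tr\rho\bigl(\log(V\,\mathrm{diag}(p)\,V^{\dagger})-\log\tau\bigr)$ either by $\|V-U\|\cdot\|\log\mathrm{diag}(p)\|_\infty$ or by an operator-Lipschitz bound for $\log$ on $\{A\ge\mu I\}$. Neither yields $O(\delta)$ with the stated resolutions. The first gives
\[
\|V-U\|\cdot\|\log\mathrm{diag}(p)\|_\infty\;\le\;\frac{\delta}{20(2d+1)}\cdot\log\frac{1}{\min_i p_i}\;=\;\Theta\bigl(\delta\log(1/\delta)\bigr),
\]
which is not $O(\delta)$ as $\delta\to 0$. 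The second has Lipschitz constant $1/\mu$ with $\mu=\min_i p_i=\Theta(\delta/d^2)$, and since $\|V\,\mathrm{diag}(p)\,V^\dagger-\tau\|\le 2\|V-U\|$ one obtains
\[
\bigl\|\log(V\,\mathrm{diag}(p)\,V^\dagger)-\log\tau\bigr\|\;\le\;\frac{2\|V-U\|}{\mu}\;\le\;\frac{8d(2d+1)+\delta}{10(2d+1)}\;=\;\Theta(d),
\]
independent of $\delta$. Hence the assertion that ``balancing the two resolutions\ldots forces both pieces to be simultaneously $O(\delta)$'' is unjustified: the small eigenvalues of $\tau$ contribute a factor (either $\log(1/\mu)$ or $1/\mu$) that a unitary resolution proportional to $\delta$ cannot absorb. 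A smaller issue: your Lipschitz argument for the eigenvalue piece also fails as written because $q_i$ may vanish; there one should instead use $D(q\|p)\le\chi^2(q\|p)\le d\,\|q-p\|_\infty^2/\min_i p_i$, which does give $O(\delta)$ with the stated grid.
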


Given this covering upon states, we now want to take $\sigma^n$ to be the separable state given by a mixture over such a covering, and the divergence centre
\begin{align}
	\sigma^n(\gamma):=
	\frac{1}{2}\sigma^*(I)^{\otimes n}+
	\frac{1}{2\abs{\mathcal{C}^{\gamma/4}}}\sum_{\tau\in\mathcal{C}^{\gamma/4}} \tau^{\otimes n}.\label{eqn:sigmadef}
\end{align}
Using the inequality 
\begin{align}
	D_{\mathrm{h}}^\epsilon\bigl(\rho\,\big\|\,\mu\sigma+(1-\mu)\sigma'\bigr)\leq D^\epsilon_{\mathrm{h}}(\rho\|\sigma)-\log \mu
\end{align}
we will be able to bound divergences with respect to $\sigma^n$ by those divergences with respect to either elements of $\mathcal{C}^{\gamma/4}$, or $\sigma^*$.

We will start by considering the low case. We will see that this case only accounts for hypothesis testing relative entropies which are below the capacity by a constant amount.

\begin{lem}[Low case]
	\label{lem:low}
	For any $\gamma>0$, there exists a constant $N(\lbrace a_i\rbrace,\gamma)$ such that
	\begin{align}
		\frac{1}{n}D_{\mathrm{h}}^{\epsilon_n}\left( \bigotimes_{i=1}^n\rho_i \,\middle\|\,\sigma_n(\gamma)\right)&\leq C-\gamma/4,\\
		\frac{1}{n}D_{\mathrm{h}}^{1-\epsilon_n}\left( \bigotimes_{i=1}^n\rho_i \,\middle\|\,\sigma_n(\gamma)\right)&\leq C-\gamma/4,
	\end{align}
	for any $\lbrace\rho_i\rbrace_i\subset I$, $n\in 
	L(\lbrace\rho_i\rbrace,\gamma)$ and $n\geq N$.
\end{lem}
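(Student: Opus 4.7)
The plan is to exploit the mixture structure of $\sigma_n(\gamma)$ to reduce each bound to a hypothesis testing divergence against a single product state $\tau^{\otimes n}$, where $\tau\in\mathcal{C}^{\gamma/4}$ is chosen to approximate the empirical average $\bar\rho_n$, and then invoke the moderate deviation bounds of \cref{prop:HTD-inward,prop:HTD-outward} for non-identical product states.

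First, the covering lemma recalled above provides a state $\tau\in\mathcal{C}^{\gamma/4}$ with $D(\bar\rho_n\|\tau)\leq\gamma/4$ and $\lambda_{\min}(\tau)\geq\gamma/(100\,d^2)=:\lambda$. The mixture inequality noted just before the statement, applied with $\mu=1/(2|\mathcal{C}^{\gamma/4}|)$, gives
\begin{align}
\frac{1}{n}D_{\mathrm{h}}^{\epsilon_n}\!\left(\bigotimes_{i=1}^n\rho_i\,\middle\|\,\sigma_n(\gamma)\right)\leq\frac{1}{n}D_{\mathrm{h}}^{\epsilon_n}\!\left(\bigotimes_{i=1}^n\rho_i\,\middle\|\,\tau^{\otimes n}\right)+\frac{1}{n}\log\bigl(2|\mathcal{C}^{\gamma/4}|\bigr),
\end{align}
and analogously for $D_{\mathrm{h}}^{1-\epsilon_n}$. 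Since $|\mathcal{C}^{\gamma/4}|$ depends on $\gamma$ (and the fixed dimension) but not on $n$, the logarithmic remainder is $O(1/n)$.

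Next, the chain-rule identity $\sum_{i=1}^n D(\rho_i\|\tau)=\sum_{i=1}^n D(\rho_i\|\bar\rho_n)+n\,D(\bar\rho_n\|\tau)$, combined with the low-case hypothesis $\frac{1}{n}\sum_i D(\rho_i\|\bar\rho_n)<C-\gamma$ and the covering bound on $D(\bar\rho_n\|\tau)$, yields
\begin{align}
\frac{1}{n}\sum_{i=1}^n D(\rho_i\|\tau)<(C-\gamma)+\gamma/4=C-\tfrac{3\gamma}{4}.
\end{align}
Fix any $\eta>0$ depending only on $\gamma$. For the $\epsilon_n$ statement, apply the first bound of \cref{prop:HTD-outward} to the non-identical product pair $\{\rho_i\}_i,\{\tau\}_i$, which is legal because $\lambda_{\min}(\tau)\geq\lambda$, to obtain
\begin{align}
\frac{1}{n}D_{\mathrm{h}}^{\epsilon_n}\!\left(\bigotimes_{i=1}^n\rho_i\,\middle\|\,\tau^{\otimes n}\right)\leq C-\tfrac{3\gamma}{4}+\eta\, a_n.
\end{align}
For the $1-\epsilon_n$ case the corresponding half of \cref{prop:HTD-outward} is a lower bound and thus of the wrong sign; however, the second bound of \cref{prop:HTD-inward} provides the required upper bound
\begin{align}
\frac{1}{n}D_{\mathrm{h}}^{1-\epsilon_n}\!\left(\bigotimes_{i=1}^n\rho_i\,\middle\|\,\tau^{\otimes n}\right)\leq C-\tfrac{3\gamma}{4}+\sqrt{2V_n}\,a_n+\eta\, a_n,
\end{align}
where $V_n:=\frac{1}{n}\sum_i V(\rho_i\|\tau)$ is uniformly bounded in terms of $\lambda$ by \cref{lem:boundedcumulants}.

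Since $a_n\to 0$ and $\frac{1}{n}\log(2|\mathcal{C}^{\gamma/4}|)\to 0$, all three error terms $\eta a_n$, $\sqrt{2V_n}\,a_n$, and the mixing remainder are eventually smaller than $\gamma/2$, so the claimed bound $\leq C-\gamma/4$ follows for all $n\geq N(\{a_i\},\gamma)$. The main subtlety, and the only one requiring care, is that $\tau$ depends on $\{\rho_i\}$ through $\bar\rho_n$; this is precisely why $\sigma_n(\gamma)$ is defined in \eqref{eqn:sigmadef} as a mixture over \emph{all} of $\mathcal{C}^{\gamma/4}$, so that the appropriate $\tau$ can be chosen a posteriori at the constant cost $\log(2|\mathcal{C}^{\gamma/4}|)$.
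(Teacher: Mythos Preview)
Your proposal is correct and follows essentially the same route as the paper: split off the appropriate $\tau\in\mathcal{C}^{\gamma/4}$ from the mixture $\sigma_n(\gamma)$ at cost $\log(2|\mathcal{C}^{\gamma/4}|)$, apply the Pythagorean identity to relate $\frac{1}{n}\sum_i D(\rho_i\|\tau)$ to $\frac{1}{n}\sum_i D(\rho_i\|\bar\rho_n)+D(\bar\rho_n\|\tau)$, and then invoke \cref{prop:HTD-outward} (for $\epsilon_n$) and \cref{prop:HTD-inward} (for $1-\epsilon_n$) with the uniform eigenvalue lower bound $\lambda_{\min}(\tau)\geq\gamma/(100d^2)$. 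Your closing remark about why the mixture is needed---so that $\tau$ may be chosen after $\{\rho_i\}$---is exactly the point, and the uniformity of $N$ in \cref{prop:HTD-inward,prop:HTD-outward} over all states satisfying the eigenvalue bound is what makes this work despite $\tau$ depending on $n$.
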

\begin{proof}
	We are going to start by considering the $\epsilon_n$-hypothesis testing divergence. Take $\tau_n$ to be the closest element in $\mathcal{C}^{\gamma/4}$ to $\bar{\rho}_n$, such that $D(\bar{\rho}_n\|\tau_n)\leq \gamma/4$. Splitting out the $\tau_n$ term from $\sigma_n(\gamma)$, we have
	\begin{align}
		D_{\mathrm{h}}^{\epsilon_n}\left( \bigotimes_{i=1}^n\rho_i \,\middle\|\, \sigma_n(\gamma)\right)
		&\leq D_{\mathrm{h}}^{\epsilon_n}\left( \bigotimes_{i=1}^n\rho_i \,\middle\|\, \tau_n^{\otimes n}\right)+\log 2\abs{\mathcal{C}^\gamma}\\
		&\leq D_{\mathrm{h}}^{\epsilon_n}\left( \bigotimes_{i=1}^n\rho_i \,\middle\|\, \tau_n^{\otimes n}\right)+2d^2\log\left(\frac{120d}{\gamma^2}\right).
	\end{align}
	As the final term depending on $\abs{\mathcal{C}^{\gamma/4}}$ is independent of $n$, there must exist a constant $N_1(\gamma)$ such that $2d^2\log(120d/\gamma^2)\leq n\gamma/4$ for any $n\geq N_1$, and thus that
	\begin{align}
		\frac{1}{n}D_{\mathrm{h}}^{\epsilon_n}\left( \bigotimes_{i=1}^n\rho_i \middle\| \sigma_n(\gamma)\right)
		\leq \frac{1}{n}D_{\mathrm{h}}^{\epsilon_n}\left( \bigotimes_{i=1}^n\rho_i \middle\| \tau_n^{\otimes n}\right)+\gamma/4.
	\end{align}
	
	Applying \cref{prop:HTD-outward} to the $\epsilon_n$-hypothesis testing relative entropy with respect to $\tau_n$ we get that there exists an $N_2(\lbrace a_i\rbrace,\gamma)$ such that 
	\begin{align}
		\frac{1}{n}D_{\mathrm{h}}^{\epsilon_n}\left( \bigotimes_{i=1}^n\rho_i \middle\| \tau_n^{\otimes} \right)\leq \frac{1}{n}\sum_{i=1}^nD(\rho_i\|\tau_n)+\gamma/4,
	\end{align}
	for any $n\geq N_2$. As for the divergence terms given with respect to $\tau_n$, we can rearrange them in terms of divergences relative to the sequence mean $\bar\rho_n$ using the information geometric Pythagorean theorem, yielding
	\begin{align}
		\sum_{i=1}^{n}D(\rho_i\|\tau_n)
		&=\sum_{i=1}^{n}\Tr \rho_i(\log\rho_i-\log\bar\rho_n)+\sum_{i=1}^{n}\Tr\rho_i(\log\bar\rho_n-\log \tau_n)\label{eqn:line2}\\
		&=\sum_{i=1}^{n}D(\rho_i\|\bar\rho_n)+nD(\bar\rho_n\|\tau_n)\\
		&\leq\sum_{i=1}^{n}D(\rho_i\|\bar\rho_n)+n\gamma/4.
	\end{align}
	
	If we let $N(\lbrace a_i\rbrace, \gamma):=\max\lbrace N_1,N_2\rbrace$, then pulling the above results together we see that for any $n\geq N$ 
	\begin{align}
		\frac{1}{n}D_{\mathrm{h}}^{\epsilon_n}\left( \bigotimes_{i=1}^n\rho_i \middle\| \sigma_n(\gamma) \right)
		&\leq\frac{1}{n}D_{\mathrm{h}}^{\epsilon_n}\left( \bigotimes_{i=1}^n\rho_i \middle\| \tau_n^{\otimes n}\right)+\gamma/4\\
		&\leq\frac{1}{n}\sum_{i=1}^nD(\rho_i\|\tau_n)+2\gamma/4\\
		&\leq\frac{1}{n}\sum_{i=1}^nD(\rho_i\|\bar{\rho}_n)+3\gamma/4.
	\end{align}
	Finally, since $n\in L(\lbrace\rho_i\rbrace,\gamma)$ the average relative 
	entropy is bounded away from capacity, and we arrive at the bound:
	\begin{align}
		\frac{1}{n}D_{\mathrm{h}}^{\epsilon_n}\left( \bigotimes_{i=1}^n\rho_i \middle\| \sigma_n(\gamma) \right)\leq C-\gamma/4.
	\end{align}
	
	As we only relied on \cref{prop:HTD-outward} to bound the regularised 
	hypothesis testing divergence to within a constant of the average relative 
	entropy, we could perform a similar analysis for the 
	$(1-\epsilon_n)$-hypothesis testing divergence using \cref{prop:HTD-inward} 
	instead, which gives
	\begin{align}
		\frac{1}{n}D_{\mathrm{h}}^{1-\epsilon_n}\left( \bigotimes_{i=1}^n\rho_i \middle\| \sigma_n(\gamma) \right)\leq C-\gamma/4.
	\end{align}
\end{proof}

Now that we have dealt with cases far from capacity, we turn our attention to the high cases.

\begin{lem}[High case]
	\label{lem:high}
	For any $\eta>0$, there exist constants $\Gamma(\eta)$ and $N(\lbrace a_i\rbrace,\eta)$, such that
	\begin{align}
		\frac{1}{n}D_{\mathrm{h}}^{\epsilon_n}\left( \bigotimes_{i=1}^n\rho_i \,\middle\|\,\sigma_n(\Gamma)\right)\leq C-\sqrt{2V_{\min}}\,a_n+\eta a_n
	\end{align}
	for any $\lbrace\rho_i\rbrace_i\subset I$, $n\in 
	H(\lbrace\rho_i\rbrace,\Gamma)$ and $n\geq N$. Similarly, the 
	$(1-\epsilon_n)$-hypothesis testing relative entropy is bounded
	\begin{align}
		\frac{1}{n}D_{\mathrm{h}}^{1-\epsilon_n}\left( \bigotimes_{i=1}^n\rho_i \,\middle\|\,\sigma_n(\Gamma)\right)\leq C+\sqrt{2V_{\max}}a_n+\eta a_n.
	\end{align}
\end{lem}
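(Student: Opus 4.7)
The plan is to reduce the divergence against $\sigma_n(\Gamma)$ to one against $\sigma^*(\cI)^{\otimes n}$ and then invoke the moderate deviation bounds of \cref{subsec:inward,subsec:outward}. Since by construction $\sigma_n(\Gamma)\geq\tfrac{1}{2}\sigma^*(\cI)^{\otimes n}$, the mixture inequality stated just before \cref{lem:low} gives
\begin{align*}
\tfrac{1}{n}D_{\mathrm{h}}^{\epsilon_n}\Big(\bigotimes_i\rho_i\,\Big\|\,\sigma_n(\Gamma)\Big) \leq \tfrac{1}{n}D_{\mathrm{h}}^{\epsilon_n}\Big(\bigotimes_i\rho_i\,\Big\|\,\sigma^*(\cI)^{\otimes n}\Big)+\tfrac{\log 2}{n},
\end{align*}
and analogously for $D_{\mathrm{h}}^{1-\epsilon_n}$, with the $\tfrac{\log 2}{n}$ correction being $o(a_n)$. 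For the $\epsilon_n$ case I apply the refined part of \cref{prop:HTD-outward} with all $\sigma_i=\sigma^*(\cI)$ to obtain $\tfrac{1}{n}D_{\mathrm{h}}^{\epsilon_n}\leq D_n-\sqrt{2V_n}\,a_n+\eta'a_n$; for the $(1-\epsilon_n)$ case I use \cref{prop:HTD-inward} (which carries no pointwise variance lower bound hypothesis) to obtain $\tfrac{1}{n}D_{\mathrm{h}}^{1-\epsilon_n}\leq D_n+\sqrt{2V_n}\,a_n+\eta'a_n$, where $D_n:=\tfrac1n\sum_i D(\rho_i\|\sigma^*(\cI))$ and $V_n:=\tfrac1n\sum_i V(\rho_i\|\sigma^*(\cI))$.

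The main step is then to translate the high-case hypothesis into the bounds $D_n\leq C$ and $V_n\geq V_{\min}-o_\Gamma(1)$ (respectively $V_n\leq V_{\max}+o_\Gamma(1)$), which combined with the above yields the claim once $\Gamma$ is taken small enough in terms of $\eta$. The bound $D_n\leq C$ is immediate from the defining property $D(\rho\|\sigma^*(\cI))\leq C$ for every $\rho\in\cI$. For the variance, the strategy is a continuity-compactness argument: the empirical distribution $\bar{\mathbb{P}}_n$ on $\{\rho_i\}$ is a $\Gamma$-near maximiser of $\mathbb{P}\mapsto\int D(\rho\|\bar\rho_{\mathbb{P}})\,\mathrm{d}\mathbb{P}(\rho)$, and by compactness of $\cP(\cI)$ together with the weak-convergence continuity of this functional, the distance from $\bar{\mathbb{P}}_n$ to $\Pi(\cI)$ tends uniformly to zero as $\Gamma\to 0$. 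Since the divergence centre is the common barycentre of every $\mathbb{P}\in\Pi(\cI)$, the peripheral variances $v_{\min},v_{\max}$ coincide with $\inf/\sup_{\mathbb{P}\in\Pi(\cI)}\int V(\rho\|\sigma^*(\cI))\,\mathrm{d}\mathbb{P}(\rho)$, so the continuity of $\mathbb{P}\mapsto\int V(\rho\|\sigma^*(\cI))\,\mathrm{d}\mathbb{P}(\rho)$ squeezes $V_n$ between these extremes up to a vanishing error $\xi(\Gamma)$.

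The principal obstacle I foresee is verifying the hypothesis $V(\rho_i\|\sigma^*(\cI))\geq \nu$ demanded by the refined part of \cref{prop:HTD-outward}, which is a pointwise condition on the $\rho_i$ not guaranteed by $\rho_i\in\cI$ alone. I would circumvent this by perturbing $\sigma^*(\cI)$ inside $\sigma_n(\Gamma)$ by a small convex combination with the maximally mixed state, which uniformly lifts all $V(\rho_i\|\cdot)$ above a strictly positive constant depending only on the mixing weight, at the cost of an additive $O(1)$ slack in the hypothesis testing divergence (absorbable into $o(a_n)$). The accompanying eigenvalue condition $\lambda_{\min}(\sigma^*(\cI))\geq \lambda$ required by both propositions then follows from the paper's spanning assumption on $\cH$ together with finite-dimensionality, which force $\sigma^*(\cI)>0$.
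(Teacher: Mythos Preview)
Your overall reduction is exactly what the paper does: peel off the $\tfrac12\sigma^*(\cI)^{\otimes n}$ component of $\sigma_n(\Gamma)$, then invoke \cref{prop:HTD-outward} (for $\epsilon_n$) and \cref{prop:HTD-inward} (for $1-\epsilon_n$), using $D_n\le C$ and a near-capacity continuity argument for $V_n$. Your compactness argument for the variance is precisely the content of the external result the paper invokes (Lemma~22 of Ref.~\cite{tomamicheltan14}), repackaged via the functions $V_{\min}(\gamma)$ and $V_{\max}(\gamma)$.

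Where your proposal goes wrong is the workaround for the pointwise hypothesis $V(\rho_i\|\sigma^*)\ge\nu$ in the refined part of \cref{prop:HTD-outward}. Mixing $\sigma^*$ with the maximally mixed state does \emph{not} uniformly lift $V(\rho\|\cdot)$ away from zero: $V(\rho\|\sigma')=0$ whenever $\log\rho-\log\sigma'$ is a scalar on $\mathrm{supp}(\rho)$, and this can still occur for particular $\rho\in\cI$ after any such perturbation (e.g.\ if some $\rho_i$ happens to equal $\sigma'$, or more generally is proportional to $\sigma'$ on its support). So the proposed fix fails.

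The paper sidesteps the issue differently. It splits into two cases depending on the size of $V_{\min}$. If $V_{\min}\le\eta^2/18$ then $\sqrt{2V_{\min}}\,a_n\le(\eta/3)a_n$, so the \emph{unrefined} part of \cref{prop:HTD-outward} (which needs no variance hypothesis at all) already gives the desired bound. Only when $V_{\min}>\eta^2/18$ is the refined part invoked, and there the near-capacity continuity gives the \emph{average} lower bound $\tfrac1n\sum_iV(\rho_i\|\sigma^*)\ge V_{\min}(\Gamma)>0$. Strictly speaking \cref{prop:HTD-outward} is stated with a pointwise hypothesis $V_i\ge\nu$, but if you trace its proof you see that the only place this enters is through \cref{lem:moddev lower}, which requires merely $\nu\le V_n$; so the average bound suffices. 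The case split is what you are missing, and it is the correct remedy for the obstacle you identified.
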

\begin{proof}
	Splitting out the $\sigma^*$ factor within $\sigma_n(\gamma)$ gives
	\begin{align}
		D_{\mathrm{h}}^{\epsilon_n}\left( \bigotimes_{i=1}^n\rho_i \,\middle\|\,\sigma_n(\gamma)\right)&\leq D_{\mathrm{h}}^{\epsilon_n}\left( \bigotimes_{i=1}^n\rho_i \,\middle\|\,{\sigma^*}^{\otimes n}\right)+\log 2,\\
		D_{\mathrm{h}}^{1-\epsilon_n}\left( \bigotimes_{i=1}^n\rho_i \,\middle\|\,\sigma_n(\gamma)\right)&\leq D_{\mathrm{h}}^{1-\epsilon_n}\left( \bigotimes_{i=1}^n\rho_i \,\middle\|\,{\sigma^*}^{\otimes n}\right)+\log 2.
	\end{align}
	As $\frac{1}{n}\log 2=o(a_n)$, there exists an $N_1(\lbrace a_i\rbrace)$ such that $n\geq N_1$ implies 
	\begin{align}
		D_{\mathrm{h}}^{\epsilon_n}\left( \bigotimes_{i=1}^n\rho_i \,\middle\|\,\sigma_n(\gamma)\right)&\leq D_{\mathrm{h}}^{\epsilon_n}\left( \bigotimes_{i=1}^n\rho_i \,\middle\|\,{\sigma^*}^{\otimes n}\right)+\eta a_n/3,\\
		D_{\mathrm{h}}^{1-\epsilon_n}\left( \bigotimes_{i=1}^n\rho_i \,\middle\|\,\sigma_n(\gamma)\right)&\leq D_{\mathrm{h}}^{1-\epsilon_n}\left( \bigotimes_{i=1}^n\rho_i \,\middle\|\,{\sigma^*}^{\otimes n}\right)+\eta a_n/3.
	\end{align}
	We now wish to employ a moderate deviation result. We will start by addressing the $\epsilon_n$-hypothesis testing divergence. For the weaker bound of \cref{prop:HTD-outward} we will have no required bounds on $\frac{1}{n}\sum_{i=1}^{n}V(\rho_i\|\sigma^*)$, but for the stronger bound we will need a uniform lower bound.
	
	If $V_{\min}\leq \eta^2/18$, then the weakened bound of \cref{prop:HTD-outward} is sufficient, giving an $N_2(\lbrace a_n\rbrace,\eta)$ such that $n\geq N_2$ implies
	\begin{align}
		D_{\mathrm{h}}^{\epsilon_n}\left( \bigotimes_{i=1}^n\rho_i \,\middle\|\,\sigma_n(\gamma)\right)&\leq\frac{1}{n}D_{\mathrm{h}}^{\epsilon_n}\left( \bigotimes_{i=1}^n\rho_i \,\middle\|\,{\sigma^*}^{\otimes n}\right)+\eta a_n/3\\
		&\leq \frac{1}{n}\sum_{i=1}^{n}D(\rho_i\|\sigma^*)+2\eta a_n/3\\
		&\leq \frac{1}{n}\sum_{i=1}^{n}D(\rho_i\|\sigma^*)-\sqrt{2V_{\min}}\,a_n+\eta a_n\\
		&\leq C-\sqrt{2V_{\min}}\,a_n+\eta a_n.
	\end{align}
	
	Next we need to consider the case where $V_{\min}>\eta^2/18$. To do this, we will need to establish a lower bound on $\frac{1}{n}\sum_{i=1}^{n}V(\rho_i\|\sigma^*)$, which places it near $V_{\min}$. The min-dispersion is defined for distributions which exactly achieve capacity; we will now consider an analogous quantity for distributions which are \emph{near} capacity. Specifically
	\begin{align}
		V_{\min}(\gamma):=\inf_{P\in\cP(\cI)}\left\lbrace 
		\int\mathrm{d}P(\rho)~V\left(\rho\middle\|\sigma^*\right) 
		~\middle|~
		\int\mathrm{d}P(\rho)~D\left(\rho~\middle\|\int\mathrm{d}P(\rho')~\rho'\right)\geq C-\gamma
		\right\rbrace.
	\end{align}
	By definition of the channel dispersion we have that $V_{\min}(0)=V_{\min}$. By Lemma 22 of Ref.~\cite{tomamicheltan14} we can strengthen this to $\lim_{\gamma\to 0^+}V_{\min}(\gamma)=V_{\min}$, and so for any $\eta>0$ there must exist a constant $\Gamma(\eta)$ such that 
	\begin{align}
		\sqrt{2V_{\min}(\Gamma)}\geq \sqrt{2V_{\min}}-\eta/3.\label{eqn:V}
	\end{align} 
	As $V_{\min}\geq \eta^2/18$, this implies that $V_{\min}(\Gamma)>0$.
	
	Next, let $P_n$ be the empirical distribution corresponding to the set 
	$\lbrace\rho_i\rbrace_{i=1}^n$, i.e.\ 
	$P_n(\rho):=\frac{1}{n}\sum_{i=1}^{n}\delta(\rho-\rho_i)$. For all $n\in 
	H(\lbrace\rho_i\rbrace,\Gamma)$, these distributions are near capacity
	\begin{align}
		\int\mathrm{d}P_n(\rho)~ D\left(\rho~\middle\|\int\mathrm{d}P_n(\rho')~\rho'\right)=\frac{1}{n}\sum_{i=1}^{n}D(\rho_i\|\bar{\rho}_n)\geq C-\Gamma,
	\end{align}
	and so we can lower bound the average variance with respect to the divergence centre
	\begin{align}
		\frac{1}{n}\sum_{i=1}^{n}V(\rho_i\|\sigma^*)=\int\mathrm{d}P(\rho)~ V(\rho\|\sigma^*)\geq V_{\min}(\Gamma)>0.
	\end{align}
	Using this lower bound, we can apply the stronger bound from 
	\cref{prop:HTD-outward} to give a constant $N_3(\lbrace a_i\rbrace,\eta)$, 
	such that, for every $n\in H(\lbrace\rho_i\rbrace,\Gamma)$ and $n\geq N_3$, 
	the hypothesis testing divergence is upper bounded
	\begin{align}
		D_{\mathrm{h}}^{\epsilon_n}\left( \bigotimes_{i=1}^n\rho_i \,\middle\|\,\sigma_n(\gamma)\right)&\leq\frac{1}{n}D_{\mathrm{h}}^{\epsilon_n}\left( \bigotimes_{i=1}^n\rho_i \,\middle\|\,{\sigma^*}^{\otimes n}\right)+\eta a_n/3\\
		&\leq \frac{1}{n}\sum_{i=1}^{n}D(\rho_i\|\sigma^*)-\sqrt{\frac{2}{n}\sum_{i=1}^{n}V(\rho_i\|\sigma^*)}a_n+2\eta a_n/3\\
		&\leq \frac{1}{n}\sum_{i=1}^{n}D(\rho_i\|\sigma^*)-\sqrt{2V_{\min}(\Gamma)}\,a_n+2\eta a_n/3\\
		&\leq C-\sqrt{2V_{\min}}\,a_n+\eta a_n.
	\end{align}
	
	Performing a similar argument for $V_{\max}$, we construct a function
	\begin{align}
		V_{\max}(\gamma):=\sup_{P\in\cP(\cI)}\left\lbrace 
		\int\mathrm{d}P(\rho)~V\left(\rho\middle\|\sigma^*\right) 
		~\middle|~
		\int\mathrm{d}P(\rho)~D\left(\rho~\middle\|\int\mathrm{d}P(\rho')~\rho'\right)\geq C-\gamma
		\right\rbrace,
	\end{align}
	and define a $\Gamma$ such that
	\begin{align}
	\sqrt{2V_{\max}(\Gamma)}\leq \sqrt{2V_{\max}}+\eta/3.
	\end{align} 
	Following through the rest of the argument, and employing \cref{prop:HTD-inward}, we also get a bound on the $(1-\epsilon_n)$-hypothesis testing divergence
	\begin{align}
		\frac{1}{n}D_{\mathrm{h}}^{1-\epsilon_n}\left( \bigotimes_{i=1}^n\rho_i \,\middle\|\,\sigma_n(\gamma)\right)
		&\leq C+\sqrt{2V_{\max}}a_n+\eta a_n.
	\end{align} 
\end{proof}

\begin{prop}[Channel coding: Optimality]
	\label{prop:codingconverse}
	For any moderate sequence $\lbrace a_n\rbrace_n$ and error probability $\epsilon_n:=e^{-na_n^2}$, the rate is upper bounded as
	\begin{align}
		R^*\left(n,\epsilon_n\right)\leq C-\sqrt{2V_{\min}}\,a_n+o(a_n).
	\end{align}
	For error probability $(1-\epsilon_n)$ the rate is similarly upper bound as
	\begin{align}
		R^*\left(n,1-\epsilon_n\right)\leq C+\sqrt{2V_{\max}}\,a_n+o(a_n).
	\end{align}
\end{prop}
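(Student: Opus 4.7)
The plan is to apply the converse bound~\eqref{ineq:converse} with the specific separable reference state $\sigma_n(\Gamma)$ from~\eqref{eqn:sigmadef}, whose crucial property is independence of the input sequence $\{\rho_i\}$. Fix $\eta>0$ arbitrarily and take $\Gamma=\Gamma(\eta)$ from \cref{lem:high}. Since $\sigma_n(\Gamma)$ is a single admissible state, it can be used to upper bound the infimum in~\eqref{ineq:converse}, reducing the problem to uniformly bounding $\frac{1}{n}D_{\mathrm{h}}^{2\epsilon_n}\bigl(\bigotimes_i\rho_i\,\big\|\,\sigma_n(\Gamma)\bigr)$ over all sequences $\{\rho_i\}\subset\cI$.

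For each $n$ and each sequence, one of the two dichotomous cases $n\in H(\{\rho_i\},\Gamma)$ or $n\in L(\{\rho_i\},\Gamma)$ holds. In the high case, \cref{lem:high} gives directly
\begin{align}
\frac{1}{n}D_{\mathrm{h}}^{2\epsilon_n}\left(\bigotimes_{i=1}^n\rho_i\,\middle\|\,\sigma_n(\Gamma)\right)\leq C-\sqrt{2V_{\min}}\,a_n+\eta a_n
\end{align}
for $n$ larger than some $N_{\mathrm{H}}(\{a_i\},\eta)$ depending only on $\{a_i\}$ and $\eta$. In the low case, \cref{lem:low} gives the sharper $n$-independent bound $C-\Gamma/4$; since $a_n\searrow 0$, there exists $N_{\mathrm{L}}(\{a_i\},\eta)$ such that $C-\Gamma/4\leq C-\sqrt{2V_{\min}}\,a_n+\eta a_n$ for all $n\geq N_{\mathrm{L}}$ as well. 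Thus for $n\geq\max\{N_{\mathrm{H}},N_{\mathrm{L}}\}$ the same bound holds uniformly in $\{\rho_i\}$.

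Taking the supremum over $\{\rho_i\}$, substituting into~\eqref{ineq:converse}, and noting that $\frac{1}{n}\log\frac{2}{1-\epsilon_n}=o(a_n)$ since $\epsilon_n\to 0$, we obtain $R^*(n,\epsilon_n)\leq C-\sqrt{2V_{\min}}\,a_n+\eta a_n+o(a_n)$. As $\eta>0$ was arbitrary, the first claim follows. The $(1-\epsilon_n)$ statement is completely analogous: apply the second halves of \cref{lem:low,lem:high} with $V_{\max}$ in place of $V_{\min}$, and use the additive $\frac{1}{n}\log\frac{2(2-\epsilon_n)}{\epsilon_n^2}=\mathcal{O}(a_n^2)=o(a_n)$ error term.

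The main conceptual obstacle is that the order of quantifiers in the divergence radius requires the reference state to be independent of $\{\rho_i\}$, whereas the natural choice $\bar\rho_n^{\otimes n}$ is not. The construction of $\sigma_n(\Gamma)$ as a uniform mixture over the finite covering $\mathcal{C}^{\Gamma/4}$ together with the divergence centre $\sigma^*$ resolves this: in the low case one pays only an $O(\log|\mathcal{C}^{\Gamma/4}|)=O(1)$ additive penalty (negligible after dividing by $n$) to replace $\sigma_n(\Gamma)$ by the nearest covering state $\tau_n^{\otimes n}$ and then invoke Pythagoras, while in the high case one pays only $\log 2$ to replace $\sigma_n(\Gamma)$ by ${\sigma^*}^{\otimes n}$ and invoke the moderate deviation bound of \cref{prop:HTD-outward}/\cref{prop:HTD-inward}. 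All that remains is to verify that the thresholds produced by \cref{lem:low,lem:high} are independent of the sequence, which is built into their statements.
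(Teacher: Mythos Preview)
Your proposal is correct and follows essentially the same route as the paper's own proof: fix $\eta>0$, choose $\Gamma=\Gamma(\eta)$ from \cref{lem:high}, apply \cref{lem:low,lem:high} on the respective halves of the $H/L$ bipartition, absorb the low-case constant gap $\Gamma/4$ into the high-case bound using $a_n\searrow 0$, substitute into~\eqref{ineq:converse}, and let $\eta\searrow 0$. The only cosmetic difference is that the paper isolates an explicit $N_2$ for the comparison $\Gamma/4\geq\sqrt{2V_{\min}}\,a_n$ and an $N_3$ for the additive $\frac{1}{n}\log\frac{2}{1-\epsilon_n}\leq\eta a_n$, whereas you subsume these into ``$o(a_n)$''; your closing paragraph on why $\sigma_n(\Gamma)$ is needed is extra exposition rather than a departure in argument.
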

\begin{proof}
	Applying Lemmas \ref{lem:low} and \ref{lem:high}, we get that there exist constants $\Gamma(\eta)$ and $N_1(\lbrace a_i\rbrace,\eta)$ such that
	\begin{align}
		\frac{1}{n}D_{\mathrm{h}}^{\epsilon_n}\left( \bigotimes_{i=1}^n\rho_i \,\middle\|\, \sigma_n(\Gamma)\right)\leq 
		\begin{dcases}
			C-\Gamma/4 & n\in L(\lbrace\rho_i\rbrace,\Gamma)\\
			C-\sqrt{2V_{\min}}\,a_n+\eta a_n & n\in H(\lbrace\rho_i\rbrace,\Gamma)
		\end{dcases}
	\end{align}
	for any $n\geq N_1$. As $\Gamma$ is a constant, there must exist some $N_2(\lbrace a_i\rbrace,\eta)$ such that $\Gamma/4\geq \sqrt{2V_{\min}}a_n$. As such, for any $n\geq \max\lbrace N_1,N_2\rbrace$, high or low, we have 
	\begin{align}
		\frac{1}{n}D_{\mathrm{h}}^{\epsilon_n}\left( \bigotimes_{i=1}^n\rho_i \,\middle\|\, \sigma_n(\Gamma)\right)\leq C-\sqrt{2V_{\min}}\,a_n+\eta a_n.
	\end{align}
	Pulling this bound back to Eq.~\ref{ineq:converse}, we have
	\begin{align}
		R^*(n,\epsilon_n)
		&\leq \sup_{\lbrace \rho_i\rbrace\subseteq I}\frac{1}{n}D^{2\epsilon_n}_{\mathrm{h}}\left(\bigotimes_{i=1}^n\rho_i 
		\middle\|\sigma^n(\Gamma) \right) +\frac{1}{n}\log 
		\frac{2}{1-\epsilon_n}\\
		&\leq C-\sqrt{2V_{\min}}a_n+\eta a_n+\frac{1}{n}\log \frac{2}{1-\epsilon_n}.
	\end{align}
	Finally, noting that $1/n=o(a_n)$, there must exist a constant $N_3 (\lbrace a_i\rbrace,\eta)$ such that $n\geq N_3$ implies
	\begin{align}
	\frac{1}{n}\log\frac{2}{1-\epsilon_n}\leq \eta a_n.
	\end{align}
	We can therefore conclude that, for $n\geq \max\lbrace N_1,N_2,N_3\rbrace $, we get the overall upper bound
	\begin{align}
	R^*(n,\epsilon_n)\leq C-\sqrt{2V_{\min}}\,a_n+2\eta a_n.
	\end{align}
	As this is true for arbitrary $\eta>0$, we can take $\eta \searrow0$ and conclude 
	\begin{align}
	R^*(n,\epsilon_n)\leq C-\sqrt{2V_{\min}}a_n+o(a_n)
	\end{align}
	as required. A similar analysis for the $(1-\epsilon_n)$-error regime shows
	\begin{align}
	R^*\left(n,1-\epsilon_n\right)\leq C+\sqrt{2V_{\max}}a_n+o(a_n).
	\end{align}
\end{proof}

%% conclusion.tex %%%%%%%%%%%%%%%%%%%%%%%%%%%%%%%%%%%%%%%%%%%%%%%%%%%%%%%%%%%%%%%%%%%%%%%%%%%%%%%%%%

\section{Conclusion}

The main result of this paper is to give a second order approximation of the non-asymptotic fundamental limit for classical information transmission over a quantum channel in the moderate deviations regime, as in Eqs.~\ref{eq:moderatesecondorder1} and \ref{eq:moderatesecondorder2}:
\begin{align}
	\frac{1}{n} \log M^*(\cW;n, \eps_n) &= C(\cW) - \sqrt{2 V_{\min}(\cW)}\, x_n + o(x_n) \,,\\
	\frac{1}{n} \log M^*(\cW;n,1- \eps_n) &= C(\cW) + \sqrt{2 V_{\max}(\cW)}\, x_n + o(x_n) \,.
\end{align}
Along the lines of third and fourth order approximations for classical channel coding in the fixed error regime (see, e.g., Refs.~\cite{polyanskiythesis10,tomamicheltan12,moulin12}), a natural question to ask is whether we can expand this further and resolve the term $o(x_n)$. A preliminary investigation suggests the conjecture that $o(x_n) = O(x_n^2) + O(\log n)$ and that at least some of the implicit constants can be determined precisely. We leave this for future work.

Due to the central importance of binary asymmetric quantum hypothesis testing 
we expect our techniques to have applications also to other quantum channel 
coding tasks. In particular, source coding~\cite{datta15,leditzky16}, 
entanglement-assisted classical coding~\cite{datta14} as well as 
quantum~\cite{tomamichel16} and private coding~\cite{wilde16} over quantum 
channels have recently been analysed in the small deviations regime by relating 
the problem to quantum hypothesis testing. An extension of these results to 
moderate deviations using our techniques thus appears feasible.

\paragraph*{Acknowledgements.} MT is funded by an Australian Research Council Discovery Early Career Researcher Award (DECRA) fellowship (Grant Nos. CE110001013, DE160100821). Both MT and CTC acknowledge support from the ARC Centre of Excellence for Engineered Quantum Systems (EQuS). We also thank Hao-Chung Cheng and Min-Hsiu Hsieh for useful discussions and insightful comments.

%% moddev.bbl %%%%%%%%%%%%%%%%%%%%%%%%%%%%%%%%%%%%%%%%%%%%%%%%%%%%%%%%%%%%%%%%%%%%%%%%%%%%%%%%%%%%%%

%\bibliographystyle{ChubbStyle}
%\bibliography{library}

%% tailbounds.tex %%%%%%%%%%%%%%%%%%%%%%%%%%%%%%%%%%%%%%%%%%%%%%%%%%%%%%%%%%%%%%%%%%%%%%%%%%%%%%%%%%

\appendix

\section{Moderate deviation tail bounds}
\label{app:tailbounds}

\subsection{Lower bound}

Here we apply the lower bound of Ref.~\cite{Rozovsky2002}, which gives a Berry--Esseen-type inequality with multiplicative error.

\begin{lem}[Theorem B2, Ref.~\cite{Rozovsky2002}]
	\label{lem:Rozovsky}
	There exists universal constants $\kappa_1,\kappa_2$ such that, for independent zero-mean variables $X_1,\dots,X_n$ with
	\begin{align}
		V_n:=\frac{1}{n}\sum_{i=1}^n\Var\left[X_i\right]\qquad \text{and}\qquad T_n:=\frac{1}{n}\sum_{i=1}^n\mathbb{E}\left[\left|X_i\right|^3\right],
	\end{align}
	and a $t_n$ bounded
	\begin{align}
		\sqrt{\frac{V_n}{n}}\leq t_n\leq \frac{V_n^2}{T_n},
	\end{align}
	the probability that the average variable $\frac{1}{n}\sum_{i=1}^nX_i$ deviates above the mean by $t_n$ is lower bounded
	\begin{align}
		\ln\Pr\left[\frac{1}{n}\sum_{i=1}^nX_i\geq t_n\right]\geq \ln\Phi\left(-\sqrt{\frac{nt_n^2}{V_n}}\right)-\frac{\kappa_1T_nnt_n^3}{V_n^3}+\ln\left(1-\frac{\kappa_2T_nt_n}{V_n^2}\right).
	\end{align}
\end{lem}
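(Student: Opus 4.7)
The statement is a refined Berry--Esseen-type lower bound, and the natural route is the classical \emph{exponential tilting} (a.k.a.\ change of measure or Cram\'er transform) technique, combined with a Berry--Esseen bound on the tilted sum. My plan is as follows.

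\textbf{Step 1: Exponential tilt.} Let $S_n = \sum_{i=1}^n X_i$ and, for a parameter $\lambda \geq 0$ to be chosen, define the tilted laws $\mathrm{d}\tilde P_i = e^{\lambda X_i}/M_i(\lambda)\, \mathrm{d}P_i$, where $M_i(\lambda) := \mathbb{E}[e^{\lambda X_i}]$. Let $\Lambda_n(\lambda) := \sum_i \ln M_i(\lambda)$ be the cumulant generating function of $S_n$. Under the product tilt, $S_n$ has mean $\Lambda_n'(\lambda)$ and variance $\Lambda_n''(\lambda)$; choose $\lambda=\lambda_n$ so that $\Lambda_n'(\lambda_n) = n t_n$, i.e.\ the target deviation becomes the new mean. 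Since $t_n \ll 1$ in the moderate regime, one has $\lambda_n = t_n/V_n + O(T_n t_n^2/V_n^3)$ by the implicit function theorem applied to the Taylor expansion $\Lambda_n'(\lambda)/n = \lambda V_n + O(\lambda^2 T_n)$.

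\textbf{Step 2: Rewrite the probability.} Standard manipulation gives
\begin{align}
\Pr\!\left[S_n \geq n t_n\right] = e^{\Lambda_n(\lambda_n) - \lambda_n n t_n}\, \tilde{\mathbb{E}}\!\left[e^{-\lambda_n(S_n - n t_n)} \mathbf{1}\{S_n \geq n t_n\}\right].
\end{align}
A third-order Taylor expansion of $\Lambda_n$ around $0$, controlled by $T_n$, yields $\Lambda_n(\lambda_n) - \lambda_n n t_n = -\tfrac{n t_n^2}{2V_n} + O(T_n n t_n^3/V_n^3)$, which contributes the $\ln \Phi(-\sqrt{n t_n^2/V_n})$ leading term (via the Mills ratio asymptotic $\ln\Phi(-x) = -x^2/2 - \ln(x\sqrt{2\pi}) + o(1)$) together with the $\kappa_1 T_n n t_n^3 / V_n^3$ correction.

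\textbf{Step 3: Tilted Berry--Esseen.} The remaining factor $\tilde{\mathbb{E}}[e^{-\lambda_n(S_n - n t_n)} \mathbf{1}\{S_n \geq n t_n\}]$ must be shown to be at least of order $1/\sqrt{n t_n^2/V_n}$, matching the prefactor in $\Phi(-\sqrt{n t_n^2/V_n})$ and producing the $\ln(1 - \kappa_2 T_n t_n/V_n^2)$ correction. This is done by splitting the event $S_n - n t_n \in [0, c\sqrt{n \tilde V_n}]$ for a small constant $c$, bounding $e^{-\lambda_n(S_n - n t_n)}$ below on this set, and applying the Berry--Esseen theorem under $\tilde P$ to lower-bound its probability by $\Phi(c) - \Phi(0) - O(\tilde T_n/\sqrt{n \tilde V_n^3})$. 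Here the tilted moments $\tilde V_n, \tilde T_n$ are close to $V_n, T_n$ because $\lambda_n$ is small, and the Berry--Esseen discrepancy is absorbed into the $\kappa_2 T_n t_n/V_n^2$ factor.

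\textbf{Main obstacle.} The delicate part is tracking constants through the tilt so that the error in the Berry--Esseen step appears \emph{multiplicatively} in $\Phi(-\sqrt{nt_n^2/V_n})$, rather than additively (an additive Berry--Esseen error would swamp the Gaussian tail in the moderate regime). This requires writing the lower bound as a product of (i) the exponentially small factor coming from the change of measure and (ii) a polynomial factor from a local/interval probability, and controlling the ratios $\tilde V_n/V_n$, $\tilde T_n/T_n$ uniformly under the constraint $t_n \leq V_n^2/T_n$. Once this is in place, gathering the three error terms gives exactly the inequality stated in \cref{lem:Rozovsky}, and I would simply cite Rozovsky's paper for the precise determination of the universal constants $\kappa_1, \kappa_2$.
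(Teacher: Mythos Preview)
The paper does not prove \cref{lem:Rozovsky} at all: it is quoted verbatim as Theorem~B2 of Rozovsky~\cite{Rozovsky2002} and used as a black box in the proof of \cref{lem:moddev lower}. There is therefore no ``paper's own proof'' to compare against.

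That said, your sketch is the standard route to results of this type (exponential tilting to recenter at the target, followed by a Berry--Esseen bound on the tilted sum so that the Gaussian approximation error enters multiplicatively rather than additively), and it is essentially the method Rozovsky uses. Your identification of the main obstacle---ensuring the Berry--Esseen discrepancy appears as a multiplicative correction to $\Phi(-\sqrt{nt_n^2/V_n})$ rather than an additive one---is exactly right, and your handling of it via an interval event under the tilt is the correct idea. If you wanted to turn this into a self-contained proof you would need to be more careful in Step~1 about the existence and size of $\lambda_n$ (the tilt parameter) under only a third-moment assumption, since the individual $M_i(\lambda)$ need not be finite for all $\lambda$; Rozovsky handles this by truncation. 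But for the purposes of this paper, simply citing \cite{Rozovsky2002} is what the authors do, and your final sentence already acknowledges that.
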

Given this Lemma, we can now prove the desired lower bound on the moderately deviating tail.

\begin{proof}[Proof of \cref{lem:moddev lower}]
	First we note that the bound on the average third absolute moment also imposes a bound on the average variance
	\begin{align}
		V_n
		&=\frac{1}{n}\sum_{i=1}^{n}\mathbb{E}\left[X^2\right]\\
		&<\frac{1}{n}\sum_{i=1}^{n} \mathbb{E}\left[\abs X^3+1\right]\\
		&=T_n+1\\
		&\leq \tau+1.
	\end{align}
	As $\lbrace t_i\rbrace_i$ is moderate, and the moments are bounded $\nu\leq V_n\leq \tau+1$ and $T_n\leq \tau$, there must exist an $N_1(\lbrace t_i\rbrace,\nu,\tau)$ such that
	\begin{align}
		\sqrt{\frac{\tau+1}{n}}\leq t_n\leq \frac{\nu^2}{\tau}
	\end{align}
	for $n\geq N_1$. Applying \cref{lem:Rozovsky}, we have that for $n\geq N_1$
	\begin{align}
		\ln \Pr\left(\frac{1}{n}\sum_{i=1}^{n}X_{i,n}\geq t\right)
		&\geq
		\ln\Phi\left(-\sqrt{\frac{nt_n^2}{V_n}}\right)-\frac{\kappa_1\tau }{\nu^3}nt_n^3+\ln\left(1-\frac{\kappa_2\tau}{\nu^{3/2}}t_n\right).
	\end{align}
	As $nt_n^2\to\infty$ and $V_n\leq \tau+1$, there must exist a constant $N_2(\lbrace t_i\rbrace, \tau)$ such that $n\geq N_2$ implies $nt_n^2/V_n\geq 1$. Using the standard bound $\ln\Phi(-x)\geq -x^2/2-\ln\sqrt{8\pi}x$ for $x\geq 1$, we find
	\begin{align}
		\ln \Pr\left(\frac{1}{n}\sum_{i=1}^{n}X_{i,n}\geq t\right)
		&\geq-\frac{nt_n^2}{2V_n}-\ln\sqrt{8\pi\frac{nt_n^2}{V_n}}
		-\frac{\kappa_1\tau }{\nu^3}nt_n^3+\ln\left(1-\frac{\kappa_2\tau}{\nu^{3/2}}t_n\right).
	\end{align}
	As $t_n$ is moderate, we have that the first term $-nt_n^2/2V_n$ dominates as $n\to\infty$ in the above. As such, for any $\eta>0$, there must exist an $N_3(\lbrace t_n\rbrace, \nu,\tau,\eta)$ such that, for all $n\geq N_3$, the other terms are smaller than this dominant term by a multiplicative factor of $\eta>0$, such that
	\begin{align}
		\ln\sqrt{8\pi\frac{nt_n^2}{V_n}}+\frac{\kappa_1\tau}{V_n^{3}}nt_n^3-\ln\left(1-\frac{\kappa_2\tau}{V_n^{3/2}}t_n\right)\leq \eta\frac{nt_n^2}{2V_n}.
	\end{align}
	We conclude that for $n\geq N(\lbrace t_i\rbrace,\nu,\tau,\eta):=\max\lbrace N_1,N_2,N_3\rbrace$
	\begin{align}
		\ln\Pr\left[\frac{1}{n}\sum_{i=1}^{n}X_{i,n}\geq t_n\right]\geq -(1+\eta)\frac{nt_n^2}{2V_n}.
	\end{align}
\end{proof}

\subsection{Upper bound}

For the upper bound we are going to use a proof technique similar to that used to prove Cram\'er's and Gartner-Ellis theorems in the large deviation regime (see, e.g., Ref.~\cite{dembo98}), and for Lemma 4 of Ref.~\cite{lee16} in the iid moderate deviation regime. However, our approach differs from that in Ref.~\cite{lee16} because we do not want to assume that the average variance, $V_n$, is bounded away from zero.

\begin{proof}[Proof of \cref{lem:moddev upper}]
	Let $h_n(s)$ be the average cumulant generating function $h_n(s)=\frac{1}{n}\sum_{i=1}^n\ln \mathbb{E}\left[e^{sX_{i,n}}\right]$, such that
	\begin{align}
		h_n(0)=0,\qquad\qquad
		h'_n(0)=\frac{1}{n}\sum_{i=1}^n\mathbb{E}\left[X_{i,n}\right]=0,\qquad\qquad
		h''_n(0)=\frac{1}{n}\sum_{i=1}^n\Var\left[X_{i,n}\right]=V_n.
	\end{align}
	For our tail bound we are going to employ a Chernoff bound. Specifically for any $\alpha>0$, an application of the Markov inequality gives
	\begin{align}
		\Pr\left[\frac{1}{n}\sum_{i=1}^nX_{i,n}\geq t_n\right]
		=\Pr\left[e^{\alpha t_n\sum_{i=1}^nX_{i,n}}\geq e^{\alpha nt_n^2}\right]
		\leq\frac{\mathbb{E}\left[e^{\alpha t_n\sum_{i=1}^nX_{i,n}}\right]}{e^{\alpha nt_n^2}}.		
	\end{align}
	Using the independence of $\lbrace X_i\rbrace$, the above bound can be expressed in terms of the average cumulant generating function as
	\begin{align}
		\ln\Pr\left[\frac{1}{n}\sum_{i=1}^nX_{i,n}\geq t_n\right]\leq -n\bigl(\alpha t_n^2-h_n(\alpha t_n)\bigr).
	\end{align} 
	
	In general our choice of $\alpha$ will depend on $n$. If we assume for the moment that $\alpha$ is bounded then, as $t_n\to 0$, there exists a constant $N_1(\lbrace t_i\rbrace,\alpha)$ such that $n\geq N_1$ implies $\alpha t_n\leq 1/2$. Applying Taylor's theorem for such $n$, specifically a second-order expansion with the error in Lagrange form, gives that there exists an $s\in[0,\alpha t_n]\subseteq[0,1/2]$ such that
	\begin{align}
		h_n(s)&=h_n(0)+\alpha t_nh'_n(0)+\alpha^2t_n^2h''_n(0)/2+\alpha^3t_n^3h'''_n(s)/6\\
		&\leq \alpha^2t_n^2V_n/2+\alpha^3t_n^3\gamma/6.
	\end{align}
	
	Plugging this Taylor expansion in to our Chernoff bound above gives
	\begin{align}
		\frac{1}{nt_n^2}\ln\Pr\left[\frac{1}{n}\sum_{i=1}^nX_{i,n}\geq t_n\right]\leq \left(\alpha^2V_n/2-\alpha\right)+\alpha^2\gamma t_n/6.\label{eqn:chernoff}
	\end{align}
	We now need to choose our value of $\alpha$. An obvious choice would be $\alpha=1/V_n$, which gives the tightest possible asymptotic bound. As we have not imposed a lower bound on $V_n$, this value is not necessarily bounded, and therefore could render the previous Taylor expansion invalid. Instead we will slightly modify this choice such that the Taylor expansion is still valid, whilst only changing the final bound by the introducing of an $\eta$. Specifically we will take
	\begin{align}
		\alpha^{-1}:=\sqrt{V_n+\eta/4}\left(\sqrt{V_n+\eta/4}+\sqrt{\eta/4}\right).
	\end{align}
	As required, this choice of $\alpha$ is bounded independent $V_n$ as $\alpha \leq 2/\eta$, meaning that the previous Taylor expansion was indeed valid, and that $N_1=N_1(\lbrace t_i\rbrace,\eta)$. Plugging this choice in to Eq.~\	ref{eqn:chernoff} gives
	\begin{align}
		\ln\Pr\left[\frac{1}{n}\sum_{i=1}^nX_{i,n}\geq t_n\right]
		\leq -\frac{nt_n^2}{2V_n+\eta/2}+\frac{2\gamma}{3\eta^2}nt_n^3.
	\end{align}
	
	Similar to \cref{lem:moddev lower}, the bound on the third derivative of cumulant function bounds the variances as $\Var[X_i]\leq \gamma+1$. Given this, there must exist a constant $N_2(\lbrace t_i\rbrace,\gamma,\eta)$ such that $n\geq N_2$ implies
	\begin{align}
		-\frac{1}{2V_n+\eta/2}+\frac{2\gamma}{3\eta^3}t_n\leq -\frac{1}{2V_n+\eta}.
	\end{align}
	We conclude therefore that for any $n\geq N(\lbrace t_i\rbrace,\gamma,\eta):=\max\lbrace N_1,N_2\rbrace$ we have the desired tail bound
	\begin{align}
		\ln\Pr\left[\frac{1}{n}\sum_{i=1}^{n}X_i\leq t_n\right]\leq -\frac{nt_n^2}{2V_n+\eta}.
	\end{align}
\end{proof}

\subsection{Dimensionless bound}

The non-dimensional bound follows as a corollary of the two previous bound, where we explicitly use the possible dependence of our random variables $X_{i,n}$ on $n$.

\begin{proof}[Proof of \cref{cor:moddev nondim}]
	Starting with random variables $\lbrace X_{i,n}\rbrace _{i\leq n}$, define rescaled variables as $\tilde{X}_{i,n}:=X_{i,n}/\sqrt{V_n}$ for all $i\leq n$. This scaling has the property that it normalises the average variance
	\begin{align}
		\tilde{V}_n:=\frac{1}{n}\sum_{i=1}^{n}\Var[\tilde{X}_{i,n}]=1.
	\end{align}
	As well as this, we can see the dimensionless assumption on $X_{i,n}$
	\begin{align}
		\frac{1}{nV_n^{3/2}}\sum_{i=1}^{n}\sup_{s\in[0,1/2]}\abs{\frac{\mathrm{d}^3}{\mathrm{d}s^3}\ln\mathbb{E}\left[e^{sX_{i,n}}\right]}\leq \gamma,
	\end{align}
	is equivalent to the bound on $\tilde{X}_{i,n}$ 
	\begin{align}
		\frac{1}{n}\sum_{i=1}^{n}\sup_{s\in[0,1/2]}\abs{\frac{\mathrm{d}^3}{\mathrm{d}s^3}\ln\mathbb{E}\left[e^{s\tilde X_{i,n}}\right]}\leq \gamma.
	\end{align}
	Noticing that 
	\begin{align}
		\Pr\left[\frac{1}{n}\sum_{i=1}^{n} X_{i, n}\leq t_n\sqrt{V_n}\right]
		=\Pr\left[\frac{1}{n}\sum_{i=1}^{n} X_{i, n}/\sqrt{V_n}\leq t_n\right]
		=\Pr\left[\frac{1}{n}\sum_{i=1}^{n}\tilde X_{i, n}\leq t_n\right],
	\end{align}
	we can simply apply the existing tail bounds of Lemmas~\ref{lem:moddev lower} and \ref{lem:moddev upper} to $\tilde{X}_{i,n}$, giving that, for any $\eta>0$, there must exist a constant $N(\lbrace t_i\rbrace,\gamma,\eta)$ such that
	\begin{align}
		-(1+\eta)\frac{nt_n^2}{2}\leq\ln\Pr\left[\frac{1}{n}\sum_{i=1}^{n} X_{i, n}\leq t_n\sqrt{V_n}\right]\leq -(1-\eta)\frac{nt_n^2}{2}.
	\end{align}
\end{proof}

%% reverse.tex %%%%%%%%%%%%%%%%%%%%%%%%%%%%%%%%%%%%%%%%%%%%%%%%%%%%%%%%%%%%%%%%%%%%%%%%%%%%%%%%%%%%%

\section[Proof of Reversing Lemma]{Proof of \cref{lem:reverse}}
\label{app:reverse}

\begin{lem}
	\label{lem:prereverse}
	Let $A$ and $B$ be two real functions both defined on the same domain
	, with
	\begin{align}
		\hat{A}(b):=\inf_{t} \left\lbrace A(t) \middle| B(t)\leq b\right\rbrace
		\qquad\text{and}\qquad
		\hat{B}(b):=\inf_{t} \left\lbrace B(t) \middle| A(t)\leq a \right\rbrace,
	\end{align}
	then for any $a\geq \inf_{t}A(t)$ and $\delta>0$
	\begin{align}
		\hat{A}\bigl(\hat{B}(a)+\delta\bigr)&\leq a\\
		\hat{A}\bigl(\hat{B}(a)-\delta\bigr)&\geq a.
	\end{align}
\end{lem}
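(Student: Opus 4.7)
The plan is a short and essentially symmetric pair of arguments directly from the definitions of the two infima. I would handle the upper inequality $\hat{A}(\hat{B}(a)+\delta)\leq a$ by exhibiting a near-optimizer of $\hat{B}(a)$ that is automatically feasible for the infimum defining $\hat{A}(\hat{B}(a)+\delta)$, and the lower inequality $\hat{A}(\hat{B}(a)-\delta)\geq a$ by a short contradiction argument.

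For the first bound, I would begin by observing that the hypothesis $a\geq \inf_t A(t)$ ensures that the feasibility set $\{t : A(t)\leq a\}$ over which $\hat{B}(a)$ is taken is nonempty, so $\hat{B}(a)$ is a well-defined infimum (and the claim is vacuous if it is $-\infty$). The definition of infimum then yields, for any fixed $\varepsilon \in (0,\delta)$, a point $t^\varepsilon$ satisfying $A(t^\varepsilon)\leq a$ and $B(t^\varepsilon)\leq \hat{B}(a)+\varepsilon < \hat{B}(a)+\delta$. Such a $t^\varepsilon$ is feasible in the optimization defining $\hat{A}(\hat{B}(a)+\delta)$, so $\hat{A}(\hat{B}(a)+\delta)\leq A(t^\varepsilon)\leq a$, which is exactly the first inequality.

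For the second bound I would argue by contradiction. Suppose $\hat{A}(\hat{B}(a)-\delta) < a$ and set $\eta := a - \hat{A}(\hat{B}(a)-\delta) > 0$. Appealing again to the definition of infimum yields a point $t^\star$ with $B(t^\star)\leq \hat{B}(a)-\delta$ and $A(t^\star)\leq \hat{A}(\hat{B}(a)-\delta) + \eta/2 = a - \eta/2 < a$. In particular $A(t^\star)\leq a$, so $t^\star$ lies in the feasibility set of $\hat{B}(a)$, and therefore $\hat{B}(a)\leq B(t^\star) \leq \hat{B}(a)-\delta$, contradicting $\delta>0$.

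There is no real obstacle here; the argument is just bookkeeping about infima. The only care needed is to choose the approximation tolerances strictly inside the available slack ($\varepsilon < \delta$ in the first step, $\eta/2 < \eta$ in the second) so that the near-optimizers produced by the definition of infimum actually land inside the relevant feasibility set. The assumption $a\geq \inf_t A(t)$ is used solely to rule out the degenerate case where $\hat{B}(a)$ has an empty constraint set; the dual assumption $\inf_t B(t)\leq 0$ appearing in Lemma~\ref{lem:reverse} is not needed for this lemma but will be invoked when the Reversing Lemma itself applies this result.
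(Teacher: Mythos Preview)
Your proposal is correct and follows essentially the same approach as the paper's proof: exhibit a near-optimizer of $\hat{B}(a)$ to get the upper bound, and run a contradiction via a near-optimizer of $\hat{A}(\hat{B}(a)-\delta)$ for the lower bound. Your version is slightly more careful in handling the tolerances (using $\varepsilon<\delta$ and $\eta/2$ explicitly) and in noting why the feasibility set is nonempty, but the structure is identical.
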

\begin{proof}
	By the definition of the infimum in $\hat{B}(\cdot)$, there must exist a $t^\star$ such that $A(t^\star)\leq a$ and $B(t^\star)\leq \hat{B}(a)+\delta$. Hence we can upper bound
	\begin{align}
		\hat{A}\bigl(\hat{B}(a)+\delta\bigr)=\inf_{s}\left\lbrace A(s) \middle| B(s)\leq \hat{B}(a)+\delta \right\rbrace\leq A(t^\star)\leq a.
	\end{align}
	
	Next, suppose $\hat{A}\bigl(\hat{B}(a)-\delta\bigr)\leq a-\epsilon$ for some $\epsilon>0$. By definition of the infimum in $\hat{A}(\cdot)$, there must therefore exist an $s^\star$ such that $B(s^\star)\leq \hat B(a)-\delta$ and $A(s^\star)\leq a$. This in turn allows us to upper bound
	\begin{align}
		\hat{B}(a)=\inf_{t}\left\lbrace B(t) \middle| A(t)\leq a \right\rbrace\leq B(s^\star).
	\end{align}
	We can therefore conclude that $\hat B(x)\leq \hat B(a)-\delta$, proving $\hat{A}\bigl(\hat{B}(a)-\delta\bigr)> a-\epsilon$ by contradiction. As this is true for arbitrarily small $\epsilon$, we therefore conclude
	\begin{align}
		\hat{A}\bigl(\hat{B}(x)-\delta\bigr)\geq a.
	\end{align}
\end{proof}

\begin{proof}[Proof of \cref{lem:reverse}]
	By swapping both $A_n$ and $B_n$ we can see that the forward and backwards directions of this proof are equivalent, as such we will only consider the forward direction. First, we assume that
	\begin{align}
		\lim\limits_{n\to\infty}\frac{\hat{B}_n(a_n)}{a_n}=1, \quad\forall a_n\text{ moderate}.
	\end{align}
	Next we split the proof of the limit into upper bounding the limit superior, and lower bounding the limit inferior.
	
	Take any moderate sequence $b_n$, and let $a_n:=b_n/2$ and $b_n':=\hat{B_n}(a_n)+b_n/n$. By \cref{lem:prereverse} we have that $\hat{A}_n(b_n') = \hat{A}_n\bigl(\hat{B}_n(a_n)+b_n/n\bigr) \leq a_n$. By assumption we then have that
	\begin{align}
		\lim_{n\to\infty}\frac{b_n'}{b_n}=\lim_{n\to\infty}\frac{\hat{B}_n(a_n)+2a_n/n}{2a_n}=\lim_{n\to\infty}\frac{\hat{B}_n(a_n)}{2a_n}=\frac{1}{2}.
	\end{align}
	As a result we have, for sufficiently large $n$, that $b_n'\leq b_n$. Using this we can bound the limit superior:
	\begin{align}
		\limsup_{n\to\infty}\frac{\hat{A}_n(b_n)}{b_n}
		&\leq \lim_{n\to\infty}\frac{a_n}{b_n'}\\
		&= \lim_{n\to\infty}\frac{a_n}{\hat{B}_n(a_n)}=1. 
	\end{align}
	
	Moreover, if we take $a_n:=2b_n$ and $b_n':=\hat{B}_n(a_n)-b_n/n$ then, by an analogous argument, \cref{lem:prereverse} gives us $\hat{A}_n(b_n')\geq a_n$, and the assumption gives us $b_n'\geq b_n$ for sufficiently large $n$. As such we can also bound the limit inferior 
	\begin{align}
		\liminf_{n\to\infty}\frac{\hat{A}_n(b_n)}{b_n}\geq \lim_{n\to\infty}\frac{a_n}{b_n'}= \lim_{n\to\infty}\frac{a_n}{\hat{B}_n(a_n)}=1. 
	\end{align}
\end{proof}

%% cumulant.tex %%%%%%%%%%%%%%%%%%%%%%%%%%%%%%%%%%%%%%%%%%%%%%%%%%%%%%%%%%%%%%%%%%%%%%%%%%%%%%%%%%%%

\section[Proof of Bounded Cumulants]{Proof of \cref{lem:boundedcumulants}}
\label{app:cumulant}

\begin{proof}
	Consider the moment generating function $m(t):=\mathbb{E}\left[e^{tZ}\right]$, such that $h(t)=\ln m(t)$. Similar to the relationship between cumulants and central moments, the derivatives of $h(t)$ can be expressed in terms of derivatives of $m(t)$:
	\begin{align}
		h&=\ln m\\
		h'&=\frac{m'}{m}\\
		h''&=\frac{m''m-m'm'}{m^2}\\
		h'''&=\frac{m'''mm-3m''m'm+2m'm'm'}{m^3}\\
		&~~\vdots
	\end{align}
	As such, if we bound $m(t)$ away from zero, proving that the derivatives of $m(t)$ are uniformly bounded would imply the same about $h(t)$. Noticing that $\sum_{a}r_a=1$ implies $\sum_a r_a^2\geq 1/d$ and $\lambda \leq1/d$, we can see that $m$ is bounded away from zero for any $t\leq 1$:
	\begin{align}
		m(t)
		&=\sum_{a,b}r_a\abs{\braket{\phi_a}{\psi_b}}^2(r_a/s_b)^t\\
		&\geq \frac{1}{\lambda^t}\sum_{a,b}\abs{\braket{\phi_a}{\psi_b}}^2r_a^{t+1}\\
		&\geq \frac{1}{\lambda^t}\sum_{a}r_a^{1+t}\\
		&\geq \frac{1}{\lambda^t}\sum_{a}r_a^2\\
		&\geq \frac{1}{d\lambda}\geq 1
	\end{align}
	
	Next we can use the bound $\sup_{x\in[0,1]}\sqrt{x}\ln^k(1/x)=(2k/e)^k$, to bound the derivatives of the moment generating function for $\abs{t}\leq 1/2$
	\begin{align}
		\abs{m^{(k)}(t)}
		&=\abs{\sum_{a,b}  r_a\abs{\braket{\phi_a}{\psi_b}}^2\left(r_a/s_b\right)^t\ln^k\left(r_a/s_b\right)}\\
		&\leq\sum_{a,b}  r_a\abs{\braket{\phi_a}{\psi_b}}^2\left(r_a/s_b\right)^t\abs{\ln^k\left(r_a/s_b\right)}\\
		&\leq\sum_{a,b}  r_a\abs{\braket{\phi_a}{\psi_b}}^2\left(r_a/s_b\right)^t\left[\ln^k(1/s_b)+\ln^k(1/r_a)\right]\\
		&\leq\sum_{a,b}  \abs{\braket{\phi_a}{\psi_b}}^2\sqrt{r_a/s_b}\left[\ln^k(1/s_b)+\ln^k(1/r_a)\right]\\
		&\leq
		\max_{\substack{\lambda\leq s\leq1\\0\leq r\leq 1}}
		\sqrt{r/s}\left[\ln^k(1/s)+\ln^k(1/r)\right]\\
		&\leq\frac{1}{\sqrt{\lambda}}\max_{\substack{\lambda\leq s\leq1\\0\leq r\leq 1}}\left[\ln^k(1/s)+\sqrt{r}\ln^k(1/r)\right]\\
		&\leq\frac{1}{\sqrt{\lambda}}\left[\ln^k\left(\frac{1}{\lambda}\right)+\left(2k/e\right)^k\right]=:C_k.
	\end{align}
\end{proof}

\end{document}